\documentclass[11pt]{article}

\usepackage[utf8]{inputenc}
\usepackage[margin=2.5cm]{geometry}
\usepackage[T1]{fontenc}
\usepackage[english]{babel}
\usepackage{amsmath,amssymb,amsthm}
\usepackage[dvipsnames,table]{xcolor}
\usepackage[bibstyle=alphabetic, sorting=nyt, citestyle=alphabetic, giveninits, maxbibnames=10, url=false, isbn=false]{biblatex} 
\usepackage{mathtools, ifthen, xparse}
\usepackage[shortlabels]{enumitem}
    \setlength{\marginparwidth}{2.1cm}
\usepackage{apptools}
    \AtAppendix{\counterwithin{thm}{section}}
\usepackage{hang}
\usepackage{tikz,tikz-cd}
\usepackage{csquotes}\MakeOuterQuote"
\usepackage{hyperref}
\usepackage[capitalize]{cleveref}
    \hypersetup{pdftitle={The Schmidt rank for the commuting operator framework},pdfauthor={van Luijk, Schwonnek, Stottmeister, Werner}, verbose=false, colorlinks, citecolor=MidnightBlue, linkcolor=MidnightBlue,urlcolor=OliveGreen}



\newtheorem{thm}{Theorem}
\crefname{thm}{Theorem}{Theorems}
\newtheorem*{thm*}{Theorem}
\newtheorem{lem}[thm]{Lemma}
\newtheorem{cor}[thm]{Corollary}
\newtheorem{prop}[thm]{Proposition}

\newtheorem*{prob*}{Problem}

\newtheorem{defin}[thm]{Definition}

\crefname{defin}{Definition}{Definitions}
\theoremstyle{definition}
\newtheorem{exa}[thm]{Example}
\newtheorem{rem}[thm]{Remark}
\theoremstyle{remark}
\newtheorem*{ex*}{Example}
\newtheorem*{rem*}{Remark}

\theoremstyle{remark}

\theoremstyle{plain}
\newtheorem*{introthm}{Theorem} 
\newtheorem*{introdef}{Definition}

\newcommand*{\1}{\text{\usefont{U}{bbold}{m}{n}1}}

\newcommand\CC{\mathbb C}

\newcommand\eps\varepsilon

\renewcommand\H{\mathcal H}
\newcommand\K{\mathcal K}
\newcommand\V{\mathcal V}
\newcommand\NN{\mathbb N}
\def\placeholder{\,\cdot\,} 

\newcommand\RR{\mathbb R}
\renewcommand{\S}{{\mathcal S}}
\newcommand\restrictedto\upharpoonright
\newcommand{\states}{{\mathfrak S}}

\DeclareMathOperator{\rank}{rank}
\DeclareMathOperator\tr{Tr}

\newcommand\oo\infty
\newcommand\ox\otimes
\newcommand\ZZ{\mathbb Z}
\newcommand\QQ{\mathbb Q}

\NewDocumentCommand\TC{o}{{\IfNoValueTF{#1}{\mathcal{T}(\mathcal{H})}{\mathcal{T}(#1)}}}
\newcommand\boundedoperatorsymbol{{\mathcal B}}
\NewDocumentCommand\BO{mo}{ \IfNoValueTF{#2}{\boundedoperatorsymbol(#1)}{\boundedoperatorsymbol(#1,#2)}}

\def\up#1{^{(#1)}}

\let\mc\mathcal
\let\mf\mathfrak


\DeclareMathOperator\id{id}
\DeclareMathOperator\lin{span}

\let\O\relax


\let\lim\relax
\NewDocumentCommand\lim{o}{\IfNoValueTF{#1}{\mathop{\textup {lim}}}{\mathop{\textup{{$ #1 $}-lim}}}}

\DeclareFontFamily{U}{matha}{\hyphenchar\font45}
\DeclareFontShape{U}{matha}{m}{n}{ <-6> matha5 <6-7> matha6 <7-8> matha7 <8-9> matha8 <9-10> matha9 <10-12> matha10 <12-> matha12 }{}
\DeclareSymbolFont{matha}{U}{matha}{m}{n}
\DeclareFontFamily{U}{mathx}{\hyphenchar\font45}
\DeclareFontShape{U}{mathx}{m}{n}{ <-6> mathx5 <6-7> mathx6 <7-8> mathx7 <8-9> mathx8 <9-10> mathx9 <10-12> mathx10 <12-> mathx12 }{}
\DeclareSymbolFont{mathx}{U}{mathx}{m}{n}
\DeclareMathDelimiter{\vvvert} {0}{matha}{"7E}{mathx}{"17}%
\DeclarePairedDelimiterX{\normiii}[1]{\vvvert}{\vvvert} {\ifblank{#1}{\:\cdot\:}{#1}}


\DeclarePairedDelimiterX\norm[1]\lVert\rVert{\ifblank{#1}{\placeholder}{#1}}
\DeclarePairedDelimiter\abs\lvert\rvert

\DeclarePairedDelimiterX\ip[2]{\langle}{\rangle}{#1 , #2}
\DeclarePairedDelimiterX\dual[2]{\langle}{\rangle}{#1 , #2}


\DeclarePairedDelimiter\ket\vert\rangle
\DeclarePairedDelimiter\bra\langle\vert
\DeclarePairedDelimiterX\braket[2]\langle\rangle{ #1 \delimsize\vert #2}
\DeclarePairedDelimiterX\braAket[3]\langle\rangle{ #1 \delimsize\vert #2 \delimsize\vert #3 }
\DeclarePairedDelimiterX\ketbra[2]\vert\vert{ #1 \delimsize\rangle\delimsize\langle #2 }
\DeclarePairedDelimiterX\kettbra[1]\vert\vert{ #1 \delimsize\rangle\delimsize\langle #1 }

\providecommand\given{}  
\newcommand{\SetSymbol}[1][]{ \nonscript\ #1\vert \allowbreak \nonscript\ \mathopen{} } 
\DeclarePairedDelimiterX{\set}[1]\{\}{ \renewcommand\given{\SetSymbol[\delimsize]} #1 }


\newcommand\hide[1]{}

\addbibresource{refs.bib}

\def\A{{\mathcal A}} 
\def\B{{\mathcal B}}

\def\EE{{\mathbb E}}
\def\M{{\mathcal M}}
\def\MM{{\mathbb M}}
\def\N{{\mathcal N}}
\def\R{{\mathcal R}}

\def\B{{\mathcal B}}
\def\C{{\mathcal C}}
\def\K{{\mathcal K}}

\def\O{\mathcal O}

\def\V{\mathcal{V}}
\def\W{{\mathcal W}}
\newcommand\SR{\mathrm{SR}}
\def\Bar#1{\overline{#1}}
\def\op{{\oplus}}

\DeclareMathOperator{\ntr}{tr}

\title{The Schmidt rank for the commuting operator framework}
\author{Lauritz van Luijk, Ren\'e Schwonnek, \\ Alexander Stottmeister, and Reinhard F.\ Werner} 
\date{\normalsize Institut f\"ur Theoretische Physik, Leibniz Universi\"at Hannover, \\ Appelstraße 2, 30167 Hannover, Germany\\[2ex]\today}

\begin{document}

\maketitle

\begin{abstract}\noindent
In quantum information theory, the Schmidt rank is a fundamental measure for the entanglement dimension of a pure bipartite state.
Its natural definition uses the Schmidt decomposition of vectors on bipartite Hilbert spaces, which does not exist (or at least is not canonically given) if the observable algebras of the local systems are allowed to be general C*-algebras.
In this work, we generalize the Schmidt rank to the commuting operator framework where the joint system is not necessarily described by the minimal tensor product but by a general bipartite algebra.
We give algebraic and operational definitions for the Schmidt rank and show their equivalence.
We analyze bipartite states and compute the Schmidt rank in several examples:
The vacuum in quantum field theory, Araki-Woods-Powers states, as well as ground states and translation invariant states on spin chains which are viewed as bipartite systems for the left and right half chains.
We conclude with a list of open problems for the commuting operator framework.
\end{abstract}

\tableofcontents

\section{Introduction}\label{sec:intro}

Specifying the dimension of a concrete quantum system at hand is a task that arguably does not have a unique answer. There is an inherent ambiguity in deciding which possible degrees of freedom have to be modeled as quantum, which can be left out, and which can be regarded as part of an unspecified environment.
In quantum information theory, when formulated on Hilbert spaces, the Schmidt rank is usually used to describe an effective local dimension of a pure bipartite state. It connects to regarding entanglement as the relevant quantity and can intuitively be understood as the minimal number of local (quantum) degrees of freedom needed for implementing a desired quantum state. 

Entanglement, one of the most characteristic phenomena of quantum physics, can moreover be found in all kinds of quantum systems, including those finite-dimensional Hilbert spaces can not describe.  Here the mathematical construction used in the usual definition of the Schmidt rank does, however, no longer apply. Nevertheless, specifying an effective entanglement dimension is still a well-motivated and vital question. 
In the present work, we will explore paths for extending the concept of a Schmidt rank accordingly. Instead of Hilbert spaces, we will consider the more general algebraic formulation of quantum mechanics and introduce several, as we will show, equivalent algebraic and operational definitions of the Schmidt rank for this setting.  

As a start, let us recall the mathematical construction on which the usual definition of a Schmidt rank is based: 
Let $\mathcal{H}_A$ and $\mathcal{H}_B$ be separable Hilbert spaces and let $\ket{\Omega}\in\mathcal{H}_A\otimes \mathcal{H}_B$ be a vector in their Hilbert space tensor product. A basic result, attributed to Erhard Schmidt \cite{schmidt}, states that there exist subspaces $\mathcal{K}_A\subseteq \mathcal{H}_A$ and $\mathcal{K}_B\subseteq \mathcal{H}_B$ spanned by orthonormal bases $\{\ket{\Phi^A_i}\}$ and $\{\ket{\Phi^B_i}\}$, such that 
$\ket\Psi$ can be decomposed  as
\begin{align}\label{eq:usualschmidt}
\ket\Psi =\sum_{i=1}^k \lambda_i \ket{\Phi^A_i}\otimes\ket{\Phi^B_i}
\end{align}
with (up to reordering) unique coefficients $\lambda_i>0$.
\cref{eq:usualschmidt} is called Schmidt decomposition, the coefficients $\{\lambda_i\}$ are called Schmidt spectrum, and the number $k=\dim\mathcal{K}_A=\dim\mathcal{K}_B$ is called Schmidt rank. 

In the usual Hilbert space formulation of quantum mechanics, a pure quantum state of a composite system, consisting of two parties $A$ and $B$, is modeled by a unit vector $\ket\Psi$ as above.
Here the Schmidt decomposition is commonly interpreted as a reduction of $\ket\Psi$ to a  composition of physically relevant local subsystems, modeled by the subspace $\mathcal{K}_A\otimes \mathcal K_B$. 
If $k=1$, those subspaces are one dimensional, and $\ket\Psi$ can be written as a product of individual states on $A$ and $B$. Such a state shows no correlations and is called separable. 
In contrast, if $k>1$, the state $\Psi$ will be entangled. Accordingly, the number $k$ is also referred to as the entanglement dimension.   

For quantum technology, states with high Schmidt rank are regarded as a resource, and it is a technical goal to build devices that realize a $k$ as high as possible \cite{marcus,aubrun2022monogamy,ecker-entanglement-distribution,
zhu-high-dimensional-photonic-entanglement,skrzypczyk2015loss,marciniak2015unbounded,qu2022retrieving,miklin2022exponentially}. 
The strength of a quantum advantage can often be connected to scaling with $k$, for example, within the capacity of quantum communication channels, for quantum metrology, or quantum computing. Moreover, states with large $k$ allow for improving noise robustness of protocols and experiments, which is one of the central bottlenecks for any near-term quantum technology. 
Accordingly, there also has been much effort in developing techniques for certifying the Schmidt rank of a quantum system when only partial data is available. Along this research line, so-called Schmidt rank-$k$ steering recently became a focus \cite{designolle-highdim-steering, designolle2022robust, qu2022retrieving, sekatski2023unlimited}. Here, the task of certifying the presence of a Schmidt rank $k$ state is considered in situations where only the Hilbert space of one party, say $A$, is specified, whereas the system of the other party $B$ is left uncharacterized. 
In such a situation, the common method for modeling quantum systems without a specified Hilbert space is to take the more general perspective of algebraic quantum mechanics employing a description in terms of operator algebras \cite{Florian, landsman, redei2007quantum}. 
The original motivation for this work came from the primordial task of giving a mathematically consistent definition of the Schmidt rank in this situation. 
Our analysis goes, however, far beyond this example.

In general, there are many situations in which it is natural to model quantum systems by specifying their observable algebras $\A$. 
Typical examples range from device-independent cryptography\cite{primaatmaja2023security,tan2021computing,zhang2022device}, classical-quantum hybrid systems \cite{lars}, and statistical mechanics \cite{bratteli2} to non-local games \cite{lupini2020perfect} and quantum field theory \cite{haag1996lqp}. 
This algebraic formulation also offers a unified description of classical and quantum systems (as well as the hybrids mentioned earlier).
In this framework, states are modeled by normalized positive linear functionals $\omega:\A\to\CC$ on the observable algebra. 
Furthermore, the joint system of two separated parties $A$ and $B$ will be most generally modeled by some algebra $\mathcal{A}$ that contains two commuting subalgebras $\A_A$ and $\A_B$. 
In this setting, there is, at least a priori, no obvious extension of a Schmidt decomposition in terms of tensor products of Hilbert space vectors as in \eqref{eq:usualschmidt} and, hence, no direct generalization of the Schmidt rank. 
Here, underlying concepts have to be clarified, and definitions have to be made.  

Hurdles along this path stem from the fact that the r.h.s.\ of \eqref{eq:usualschmidt} does not necessarily have counterparts in an algebraic setting.  
At first glance, the algebraic and the Hilbert space view on pure states may seem to be interchangeable: Operators on a Hilbert space can be well regarded from a purely algebraic perspective, and abstract states can be conversely represented on concrete Hilbert space via the GNS construction. 
A clear distinction arises, however, when inspecting the two different definitions of bipartite systems.
Here the algebraic definition of bipartite systems via commuting operators is strictly more general than the Hilbert space definition via tensor products of local spaces. 
A prominent consequence of this culminated in what became known as Tsirelson's problem \cite{tsirelsons_prob, Junge2011} asking for observable differences between the two definitions on the level of linear correlations, which was answered affirmatively recently \cite{Ji2020}. 
In our case, a further hint to the challenges arising when generalizing the Schmidt rank can be seen when representing a pure state $\omega$ as a vector in a Hilbert space $\mathcal{H}_\omega$ via the GNS construction.
On $\mathcal{H}_\omega$, the observables of $A$ and $B$ act as commuting operators, but $\mathcal{H}_\omega$ will generally not admit a factorization into a tensor product of Hilbert spaces separating the respective subsystem.
In such a case, the concept of local spaces $\mathcal{K}_A$, $\mathcal{K}_B$ that form the overall system via a tensor product, and by this also a Schmidt decomposition as in \eqref{eq:usualschmidt}, does not exist. 
Coming from the other side also sets challenges because fixing local systems $\mathcal{A}_A$ and $\mathcal{A}_B$ does not lead to a unique algebra $\mathcal{A}$ for the joint system.
In contrast to the finite-dimensional case, there are several inequivalent ways of combining two given subsystems into a larger composite system. 
This choice is not only purely mathematical but rather an essential part of the physical model under consideration. The algebra $\mathcal{A}$ determines in which way two systems couple and by this also which physical interactions will be possible or not.    
As a consequence, an extension of the Schmidt rank demands a more refined view, and the path to take may strongly depend on the state under consideration.

In this work, we investigate three, as we will show, equivalent approaches for extending the Schmidt rank.  
Our first approach builds on algebraic properties of the GNS representation. The other approaches are operational in the sense that they extend the role of the Schmidt rank as the dimension of a minimal effective Hilbert space. This can be done in an entanglement-based picture and in a prepare-and-measure picture. The intuitive description of these approaches is placed in \cref{sec:summary} while the mathematical details are worked out in \cref{sec:Schmidt}.
Our definitions apply in a setting in which we only assume that two local algebras $\mathcal{A}_A$ and $\mathcal{A}_B$ are embedded in a larger algebra $\mathcal{A}$ as commuting subalgebras.
By this approach, inequivalent ways of coupling two systems are taken into account. More details on this and the connections to Tsirrelson's problem are discussed in \cref{sec:cof}. 

In \cref{sec:examples}, we discuss the following examples and applications: Gapped ground states and finitely correlated states on spin chains in the thermodynamic limit, Araki-Woods-Powers states, and bipartite states arising in quantum field theories from causally separated regions.
In particular, we show that the Schmidt rank of the ground state of the Heisenberg anti-ferromagnet is infinite, while the ground state of the AKLT model has a Schmidt rank of two, as one would expect from analyzing these models on finite-length chains.
In the last section \cref{sec:outlook}, we present some open problems.

\paragraph{Notation and conventions.}
Inner products are linear in the second entry.
All $C^*$-algebras are assumed to be unital, and the unit is denoted $1$, with an appropriate subscript for emphasis if necessary.
The GNS representation of a state $\omega$ on a $C^*$-algebra $\A$ is denoted $(\pi_\omega,\H_\omega,\Omega_\omega)$.
The unit interval of a $C^*$-algebra $\A$, i.e.\ the set of operators $x\in\A$ with $0\le x\le 1$, is denoted $[0,1]_\A$.
The algebra of bounded operators on a Hilbert space $\H$ is denoted $\B(\H)$, and the identity operator is $\1$.
If $\M\subset\B(\H)$ and $\V \subset\H$ then we denote by $[\M\V]$ the closed linear hull of the vectors $x\Psi$, $x\in\M$, $\Psi\in\V$, an we write $[\M\Psi]$ for $[\M\{\Psi\}]$.
If $X$ is a compact Hausdorff space, we denote the $C^*$-algebra of bounded continuous functions on $X$ by $C(X)$.
The standard basis of $\CC^n$ is denoted $\ket1,\ldots,\ket n$ and the algebra of $n\times n$ matrices is denoted $\MM_n$.

\section{Definition of the Schmidt rank and summary of main results}\label{sec:summary}

In this section, we explain our approach to the commuting operator framework, the definition of the Schmidt rank, and summarize our results.
In total, we find six equivalent definitions of the Schmidt rank (see \cref{thm:schmidt_rank}) from which we, at this point, only discuss the three central ones. 
We focus on conceptual ideas and give physical intuitions wherever we can.
A self-contained mathematical treatment is given in the subsequent sections.

\subsection{Bipartite algebras, bipartite states, and the commuting operator framework}\label{sec:summary_bip_alg}

In a correlation experiment, we have two physical systems with experimenters, conventionally named Alice and Bob, performing local measurements on a shared state.
Both parties are free to choose their measurements from their respective observable algebras $\A_A$ and $\A_B$.
According to the probabilistic principles of quantum theory \cite{Holevo2001}, the measurement statistics can be described by a bilinear functional $\omega_0:\A_A\times\A_B\to\CC$ such that $\omega(1_A,1_B)=1$ and $\omega_0(a,b)\ge0$ if $a\ge0$ and $b\ge0$.
The correlations can only be explained by quantum theory if there is a state on a larger system that contains Alice's and Bob's systems as subsystems and induces these correlations.
The larger system's observable algebra $\A$ contains $\A_A$ and $\A_B$ as commuting subalgebras.
The commutativity expresses that Alice and Bob's systems are \emph{kinematically independent} \cite{summers1990independence}.
Furthermore, there is no measurement apparatus accessible to both Alice and Bob, so their observable algebras have trivial intersection $\A_A\cap\A_B=\CC1$ in $\A$.%
\footnote{If one allows for an overlap of the observable algebras, it follows from commutativity that the intersection $\A_A\cap\A_B$ is a commutative subalgebra and, hence, corresponds to shared randomness accessible to both Alice and Bob. 
For two classical systems, this would even allow for $\A_A=\A_B=\A$.
However, it makes little sense to speak about correlation experiments if Alice and Bob have access to the same observable.}
That we may restrict $\A$ to the subalgebra generated by $\A_A$ and $\A_B$ leads us to: 

\begin{introdef}
    Let $\A_A$ and $\A_B$ be $C^*$-algebras. 
    A {\bf bipartite algebra} (for $\A_A$ and $\A_B$) is a $C^*$-algebra $\A$ with embeddings $\A_A\hookrightarrow\A$ and $\A_B\hookrightarrow\A$ such that $\A_A$ and $\A_B$ commute in $\A$, $\A_A\cap\A_B=\CC1$ and such that $\A$ is generated by $\A_A\cup\A_B$ as a $C^*$-algebra.
    A {\bf bipartite state} for $\A_A$ and $\A_B$ is a state $\omega$ on a bipartite algebra $\A$.
\end{introdef}

Therefore, the correlations $\omega_0$ can be explained by quantum theory if there is a bipartite state $\omega$ such that $\omega_0(a,b)=\omega(ab)$.
It turns out that a necessary and sufficient condition for this is that $\omega_0$ satisfies the following stronger version of positivity
\begin{equation}\label{eq:q_constraint}
    \sum_{ij}\omega_0(a_i^*a_j,b_i^*b_j)\ge 0\qquad\forall a_1,\ldots,a_n\in\A_A,\, b_1,\ldots,b_n\in\A_B.
\end{equation}
This property is called the quantum constraint.
Details can be found in \cref{sec:bip_alg}.

The bipartite algebra describes the joint system and should be regarded as a part of the mathematical model as it determines how the local systems are coupled to each other, e.g.\ by determining which global interactions are possible.
For some bipartite algebras, Alice and Bob's systems are not \emph{statistically independent} in the sense that Alice and Bob are constrained in their ability to perform local operations such as state preparations \cite{summers1990independence}.
Statistical independence is formally defined in \cref{sec:bip_alg}, where it is shown that statistical independence holds if and only if the bipartite algebra is a $C^*$-tensor product (see \cref{thm:independence}).

In general, a representation $\pi$ of a bipartite algebra $\A$ on a Hilbert space $\H$ will represent Alice and Bob's observables only as commuting operators $\pi(a)\pi(b)=\pi(b)\pi(a)$ with no guarantee for a tensor splitting $\pi(ab)=\pi_A(a)\ox\pi_B(b)$.
Bipartite states induce, in general, proper commuting operator framework correlations, i.e.\ correlations that cannot even be approximated with a tensor splitting of Hilbert spaces.

Even though Alice and Bob have full control of their local algebras, there is no method for detecting the concrete form of the bipartite algebra itself through a correlation experiment in which all measurements are performed on a fixed bipartite state $\omega$. 
Therefore, we call bipartite states $\omega_1$ and $\omega_2$ \emph{correlation-equivalent} if they induce the same correlations $\omega_1(ab)=\omega_2(ab)$. We call a property $X$ a \emph{correlation invariant} if it is invariant under local unitaries and assigns the same value to correlation-equivalent states.
Details on this equivalence relation can be found in \cref{sec:bip_alg}, where we show that many important properties such as purity, Haag-duality, or the Schmidt rank are indeed correlation invariants.
Another interesting correlation invariant is a certain subfactor inclusion induced by the GNS representation of a pure bipartite.

\subsection{An algebraic definition}\label{sec:alg}

We are now able to follow different paths for extending the concept of a Schmidt rank. 
We will refer to our first approach as the \textit{algebraic definition}.
Recall that a major hurdle for a naive extension of \cref{eq:usualschmidt} is that there are states for which there is no tensor factorization of the GNS Hilbert space that separates Alice and Bob's observables.
A straightforward idea for circumventing this issue is to take the ad hoc ansatz of initially only considering states for which a factorization exists. On those, a Schmidt rank definition similar to the one derived from \eqref{eq:usualschmidt} can be applied. As we will see, it makes sense to elevate this idea to a proper definition by conventionally assigning an infinite Schmidt rank to all other states.

Via the following theorem, the slightly makeshift factorization property of a GNS space can be connected to the concrete categorization of von Neumann algebras into different types. We say that a state $\varphi$ is of type I, II, or III if the von Neumann algebra generated in its GNS representation is of type I, II, or III. 
We will denote the marginals of a bipartite state $\omega$, i.e.\ the restrictions to the algebras $\A_A$ and $\A_B$, by $\omega_A$ and $\omega_B$, respectively.
For pure bipartite states, i.e.\ pure states on some bipartite algebra, there is an intimate connection between the von Neumann type of the marginals and the factorization of the GNS space:

\begin{introthm}\label{thm:pure_tame_states_summary}
    Let $\omega$ be a pure bipartite state. The following are equivalent
    \begin{enumerate}[(1)]
   		\item there are irreducible representations $\pi_j:\A_j\to\B(\H_j)$, $j=A,B$, and a vector $\Omega\in\H_A\ox\H_B$ so that $\omega(ab)=\ip\Omega{\pi_A(a)\ox\pi_B(b)\Omega}$ for all $(a,b)\in\A_A\times\A_B$,
        \item either $\omega_A$ or $\omega_B$ is a type I state,
        \item both $\omega_A$ and $\omega_B$ are type I states,
    \end{enumerate} 
\end{introthm}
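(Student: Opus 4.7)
The plan is to show $(3)\Rightarrow(2)$ trivially, prove the nontrivial direction $(2)\Rightarrow(3)$ by a von Neumann algebra analysis in the GNS representation, and finally obtain $(3)\Leftrightarrow(1)$ by constructing an explicit tensor factorization of $\H_\omega$.

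First I would pass to the GNS representation $(\pi_\omega,\H_\omega,\Omega_\omega)$ and set $\M_A:=\pi_\omega(\A_A)''$ and $\M_B:=\pi_\omega(\A_B)''$. Purity makes $\pi_\omega$ irreducible, and since $\A$ is generated by $\A_A\cup\A_B$ this forces $\M_A\vee\M_B=\B(\H_\omega)$, while kinematic independence yields $\M_A\subseteq\M_B'$. A short commutant computation, $\M_A\cap\M_A'\subseteq\M_A'\cap\M_B'=(\M_A\vee\M_B)'=\CC1$, shows that $\M_A$ and $\M_B$ are factors. Moreover, $\pi_{\omega_A}(\A_A)''$ is naturally isomorphic to the reduction $\M_A|_{p_A}$, where $p_A\in\M_A'$ is the projection onto $[\M_A\Omega_\omega]$; since reducing a factor by a nonzero projection in its commutant preserves its type, $\omega_A$ is of type I iff $\M_A$ is a type I factor, and analogously for $B$. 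The theorem thereby reduces to equivalent statements about $\M_A$ and $\M_B$.

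The heart of the proof is $(2)\Rightarrow(3)$: assume without loss of generality that $\M_A$ is a type I factor. By the spatial structure theorem for type I factors, there is a unitary identification $\H_\omega\simeq\K_A\ox\K_B$ under which $\M_A=\B(\K_A)\ox\CC1$ and $\M_A'=\CC1\ox\B(\K_B)$. Since $\M_B\subseteq\M_A'$, we can write $\M_B=\CC1\ox\N$ for some von Neumann subalgebra $\N\subseteq\B(\K_B)$; then $\M_B'=\B(\K_A)\ox\N'$ and hence $\M_A'\cap\M_B'=\CC1\ox\N'$. But $(\M_A\vee\M_B)'=\CC1$, so $\N'=\CC1$, forcing $\N=\B(\K_B)$ and $\M_B=\M_A'$. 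In particular, Haag duality is automatic, and $\M_B$ is a type I factor.

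Finally, the same decomposition immediately yields $(3)\Rightarrow(1)$: the inclusion $\pi_\omega(\A_A)\subseteq\B(\K_A)\ox1$ defines $\pi_A:\A_A\to\B(\K_A)$ by $\pi_\omega(a)=\pi_A(a)\ox1$, and similarly $\pi_B$ is defined by $\pi_\omega(b)=1\ox\pi_B(b)$; irreducibility of $\pi_A$ follows from $\pi_A(\A_A)''=\B(\K_A)$, and taking $\Omega:=\Omega_\omega$ gives condition (1). The converse $(1)\Rightarrow(3)$ is routine: by irreducibility of $\pi_A$, the marginal $\omega_A$ extends to a normal state on $\B(\H_A)=\pi_A(\A_A)''$, and normal states on type I factors generate type I von Neumann algebras in their GNS representations. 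The main obstacle is the automatic Haag duality step in $(2)\Rightarrow(3)$; once that is secured, everything else is structural bookkeeping.
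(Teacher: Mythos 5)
Your proposal is correct and follows essentially the same route as the paper: it passes to the GNS representation, shows $\M_A,\M_B$ are factors, identifies the type of each marginal with the type of $\M_j$ via the reduction by the projection onto $[\M_j\Omega_\omega]$, and uses the spatial structure of type~I factors to obtain $\M_B=\M_A'$ and the tensor splitting (this is precisely the content of \cref{thm:pure_tame_states_lemma} and the corresponding implications in the proof of \cref{thm:pure_tame_states}). The only step stated loosely is $(1)\Rightarrow(3)$, where one should note that $\omega_A$ is the pullback along $\pi_A$ of a normal state on $\B(\H_A)$ and invoke normality (weak density of $\pi_A(\A_A)$ in $\B(\H_A)$) to conclude that $\pi_{\omega_A}(\A_A)''$ is type~I, but this is routine and matches the paper's use of the lemma.
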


We call a pure bipartite state {\bf tame} if it satisfies these equivalent properties and {\bf wild} otherwise. 
Tame bipartite pure states are precisely those whose correlations can be reproduced with tensor products of Hilbert spaces and vector states (we extend this notion to mixed states in \cref{sec:bip_alg}).
For tame bipartite pure states, the representations $\pi_j$ and the vector $\Omega$ are unique up to unitary equivalence and constitute the GNS representation.
An essential consequence of the theorem is that tameness can be decided from the knowledge of a single marginal.
Equipped with this, we can state the algebraic definition of the Schmidt rank:

\begin{introdef}[Algebraic definition]\label{def:alg}
    If $\omega$ is tame, the Schmidt rank of $\omega$ is the Schmidt rank of the GNS vector $\Omega_\omega$ with respect to the tensor splitting $\H_\omega=\H_A\ox\H_B$ of the GNS space.
    If $\omega$ is wild, the Schmidt rank is infinite.
\end{introdef}

We briefly comment on the structural properties of pure bipartite states before we discuss the Schmidt rank further.
For bipartite states $\omega$ one gets commuting von Neumann algebras $\M_A = \pi_\omega(\A_A)''$ and $\M_B=\pi_\omega(\A_B)''$ acting on the GNS space $\H_\omega$. 
The objects $(\M_A,\M_B,\H_\omega,\Omega_\omega)$ can be shown to be a correlation invariant. 
For a pure state, $\M_A$ and $\M_B$ are even factors, i.e.\ von Neumann algebras with trivial center, and jointly generate $\B(\H_\omega)$.
These factors are tame (resp.\ wild) if and only if $\omega$ is a tame (resp.\ wild) pure bipartite state.
For pure bipartite states, $\M_A\subset \M_B'$ is an {\it irreducible subfactor inclusion}, i.e.\ an inclusion of factors with trivial relative commutant $\M_A'\cap\M_B'=\CC\1$, and conversely, every irreducible subfactor inclusion arises in this way.
It follows from this that purity, Haag-duality, and the Jones index $[\M_A:\M_B']$ are correlation invariants for pure states.

Introducing the class of tame states and an according concept of Schmidt rank, as above, gives a nice ordering within  the set of pure bipartite states. 
It can be summarized as follows: 
\[
    \Big\{\!\!\begin{array}{c}\text{\small product}\\[-0.1cm] \text{\small states}\end{array}\!\!\Big\} =\SR_{1} \subset \SR_{\le2}\subset\ldots\subset \states_{tame} \subset \states_{min}\subset \states_{max}.
\] 
All product states and all states with a (due to this definition) finite Schmidt rank are tame.  As in  the Hilbert space case, we can sort states into sets of Schmidt rank not bigger than $k$. This gives a chain interpolating between product states and states with infinite entanglement. 
The set $ \states_{tame}$ of tame states itself  is included in the set $\states_{min}$ of states on the minimal $C^*$-tensor product, which itself is a subset of the set  $\states_{max}$ of states corresponding to the maximal tensor product. Remarkably, all those inclusions are strict in general. With an appropriate extension of tameness to non-pure bipartite states, we will show that tame states are a dense convex subset of $\states_{min}$.

When coming from a Hilbert space-centered perspective, one should note that tame bipartite states are precisely those that can be represented as normal states on a tensor product of Hilbert spaces. Here the distinction to arbitrary states on a minimal tensor product lies in the fact that the latter ones could also be singular, i.e.\ proper elements of $\mathcal{B}\left(\mathcal{H}_A\otimes\mathcal{H}_B\right)^*$ which can not be identified with elements of $\mathcal{T}\hspace{-0.1cm}\left(\mathcal{H}_A\otimes\mathcal{H}_B\right)$. 
From a hands-on perspective, demanding a state to be tame can be very practical: even if we are confronted with the most bizarre algebras $\mathcal{A}_A$ and $\mathcal{A}_B$,  the broad range of methods and calculations developed in the Hilbert space formalism can still be applied without taking too much special care. However, a striking operational justification for this definition is, at this point, not apparent. This lack is fixed by the two other definitions that we will consider.

\subsection{An operational definition via minimal compressions}
In our second approach, we define the Schmidt rank in terms of local compressions (see Fig.~\ref{fig:compression}).
In a correlation experiment, all accessible information on a quantum state is captured by considering its behavior on all possible pairs of local measurements.
For Alice and Bob at working with a bipartite state $\omega$ on some big bipartite algebra, one can ask whether it is possible to emulate/reproduce the behavior of $\omega$ on local measurements with some clever protocol that only requires a bipartite quantum state on a small Hilbert space as a resource.  
We will refer to such a protocol as compression. The minimal dimension into which Alice and Bob can compress their observables without losing any information will give us a definition for the Schmidt rank.  

At this point, we will omit any consideration of classical communication and model compression protocols by a pair of local quantum channels.
To implement an emulation of $\omega$, we will grant Alice and Bob joint access to a quantum system $\mathcal{B}(\mathcal{K})$ corresponding to some Hilbert space $\mathcal{K}$. Their actions during such a protocol are described by local quantum channels. Since we  do not want to assume a tensor product structure at this stage, the locality of Alice's and Bob's operations is modeled by demanding that the actions of these channels commute. 
It is convenient to formulate the following definition in the Heisenberg picture.

\begin{figure}[htpb!]
\centering
\includegraphics[width=0.65\textwidth]{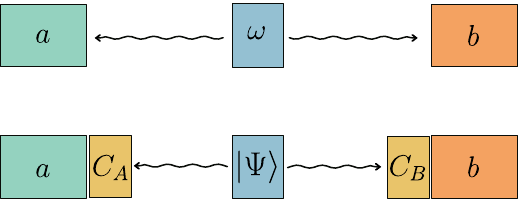}
\caption{Operational interpretation of the Schmidt rank through source emulation. A state $\omega$ has Schmidt rank $k$ if it can be prepared by preparing a vector state $\ket\Psi$ with Schmidt rank $k$ and performing local operations.}
\label{fig:compression}
\end{figure}

A {\it compression} with respect to a bipartite state $\omega$ is a collection  consisting of a pair of unital completely positive maps $C_j:\A_j\to \B(\K)$, $j=A,B$, with commuting ranges and a unit vector $\Psi\in\K$ such that 
\begin{equation}
    \omega(ab) = \ip\Psi{C_A(a)C_B(b)\Psi}\qquad\forall (a,b)\in\A_A\times\A_B.
\end{equation}
This concept of compressions is investigated in detail in \cref{sec:Schmidt}.
Since we are working in the Heisenberg picture, these compressions are compressions of measurements.
On the level of states, the channels $C_A$ and $C_B$ map the bipartite state $\psi=\ip\Psi{(\placeholder)\Psi}$ to $\omega$ by only applying local operations.

\begin{introdef}[Compression definition]
    The Schmidt rank of a pure bipartite state $\omega$ is 
    \begin{equation}
        \SR(\omega) := \sqrt{\min_{(C_A,C_B,\K,\Psi)} \dim(\K)\,}
    \end{equation}
    where the minimum is over all compressions with respect to $\omega$.
\end{introdef}

We will prove that the minimum dimension is indeed a square number so that the Schmidt rank is guaranteed to be an integer (unless it's infinite).
In quantum communication tasks, this definition is relevant for the class of entanglement-based protocols. It is well-known that each such protocol has a counterpart in what is called a `prepare and measure' scenario, which will be discussed subsequently. 
The equivalence between the scenarios is commonly established using the Schmidt decomposition. 
However, in the absence of this tool, it is reasonable to consider the next definition.

\subsection{An operational definition for the prepare and measure scenario}

As a third approach, we consider an encoding-decoding scenario. In a prepare and measure protocol, one party, say, Alice, prepares different quantum states $\{\omega^a\}$ and sends them to Bob, who then applies a measurement of his choice. 
This class of protocols is connected to bipartite states by the source-replacement scheme, an essential mathematical tool with wide use in quantum Cryptanalysis \cite{jie, pirandola}. 
In a virtual protocol, Alice's preparation of states $\{\omega^a\}$ is replaced by granting her access to a bipartite state $\omega$. By measuring an observable $a$ on one subsystem, say $\mathcal{A}_A$ she will create a conditional state 
 \begin{align}
 \omega^a=\omega(a (\placeholder))
 \end{align}
on the other subsystem $\mathcal{A}_B$. This state is then transferred to Bob. Here we can again ask for clever protocols that reduce the communication effort from Alice to Bob.
Such a protocol (see \cref{fig:factor}) will consist of an encoding that maps $\omega^a$ to a state on a small Hilbert space followed by a decoding that maps this state back to a state on $\mathcal{A}_B$. We can then take the smallest Hilbert space dimension for which such a protocol exists as a base for defining a Schmidt rank. 

\begin{figure}[htp]
\centering
\includegraphics[width=0.45\textwidth]{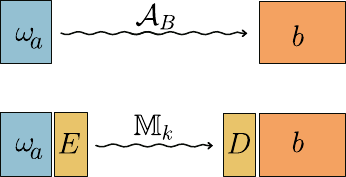}
\caption{Operational interpretation through encoding and decoding. Alice prepares a state $\omega^a$ 
by measuring an observable $a$ on one side of $\omega$. This state is then sent to Bob.
The smallest Hilbert space dimension $k$ into which this transmission can be encoded and decoded characterizes an effective dimension. We take this as a definition for the Schmidt rank.}
\label{fig:factor}
\end{figure}

It is equivalent to also think of such an encoding-decoding protocol as a map that takes Alice's observable $a$, used for state preparation, to a state on Bob's system, by passing it through an operator on the small Hilbert space.   
From this perspective, the existence of faithful encoders and decoders can conveniently be expressed in terms of factorization of the completely positive map $\Gamma_\omega: \A_A\ni a\mapsto \omega^a\in\A_B^*$.

\begin{introdef}[Factorization definition]\label{def:factor}
    Let $\omega$ be a pure state on a bipartite algebra.
    The Schmidt rank of $\omega$ is the smallest number $k\in\NN$ so that $\Gamma_\omega$ factorizes through $\MM_k$, i.e.\ so that there are completely positive maps $\alpha:\A_A\to\MM_k$ and $\beta:\MM_k\to\A_B^*$ so that the following diagram commutes:
    \begin{equation}
        \begin{tikzcd}
            \A_A \arrow{rr}{\Gamma_\omega} \arrow{dr}{\alpha} &&\A_B^*\\
            & \MM_k\arrow{ur}{\beta} &
        \end{tikzcd}
    \end{equation}
    If no such number exists, the Schmidt rank is defined to be infinite.
\end{introdef}

Without loss of generality, one may require the completely positive maps to satisfy (a) $\alpha$ maps $[0,1]_{\A_A}$ to $\{\rho\ge0:\tr\rho\le1\}$ and $\alpha(1)$ is a density operator and (b) $\beta$ is a quantum channel, i.e.\ takes density operators to states on $\A_B$.
This can then be interpreted as an encoding-decoding protocol: $\alpha$ realizes encodes $\omega_a$ into the (subnormalized) state $\alpha(a)$ on $\CC^k$ which is decoded by the quantum channel $\beta$ returning $\beta(\alpha(a)) = \omega^a$.

\subsection{Properties}

In the following, we list properties and results connected to the Schmidt rank in the commuting operator framework.
\begin{labeledlist}{l}
\item[\it Equivalence of definitions.]
    The three definitions of the Schmidt rank presented above are equivalent. The equivalence is proven throughout \cref{sec:proof}. 
    Indeed we could have extended this list further to, in total, six equivalent definitions of the Schmidt rank (see \cref{thm:schmidt_rank}). At this point, we will, however, list some of those other equivalent definitions among this list of properties. 

\item[\it Consistency with existing definitions.]
    Most importantly, the definitions of the Schmidt rank reduce to the normal definition in the case of finite-dimensional quantum systems.
    I.e.\ if $\A_A=\MM_{d_A}$ and $\A_B=\MM_{d_B}$, there is a unique bipartite algebra given by $\MM_{d_A}\otimes\MM_{d_B}=\MM_{d_A\cdot d_B}$;
    every pure state $\psi$ on this bipartite algebra is implemented by a vector $\Psi\in\CC^{d_A}\ox\CC^{d_B}$ and our definition assigns to $\psi$ the vector Schmidt rank of $\Psi$.
    This can be best seen  from the perspective offered by the algebraic formulation. 
    The GNS representation of $\psi$ is just the standard representation of $\MM_{d_A}\ox\MM_{d_B}$  on $\CC^{d_A}\ox\CC^{d_B}$ and the GNS vector is the vector $\Psi\in\CC^{d_A}\ox\CC^{d_B}$ implementing $\psi$.

\item[\it Tensor product splitting of the minimal compression.] 
    The definition via  compressions asks for compressibility into an arbitrary target Hilbert space $\mathcal{K}$ with the constraint that Alice and Bob's compressed observables commute. As it turns out, we have that the Hilbert space corresponding to the minimal compression will always admit a tensor product  splitting between Alice's and Bob's observables if the Schmidt rank is finite. In other words, we equivalently could also have asked for compressions as a pair of unital completely positive maps $C_j:\A_j\to\B(\K_j)$ and a unit vector $\Psi\in\K_A\ox\K_B$ such that $\omega(ab)=\ip\Psi{C_A(a)\ox C_B(b)\Psi}$.
    Details are explained in \cref{sec:Schmidt}.

\item[\it One system finite dimensional.]
    In the case that one system, say Alice's, is finite-dimensional, we can map the system of Bob to an effective system with the same dimension. 
    In detail, we have that if system $A$ is an $n$ dimensional quantum system, i.e.\ $\A_A = \MM_n$, then the Schmidt rank of a bipartite pure state $\omega$ is the rank of the density operator $\rho$ defined by $\tr[\rho a] = \omega(a\ox1_{B})$.
    In particular, the Schmidt rank of all bipartite pure states is bounded by $n$, regardless of the algebra $\A_B$.
    We have the following canonical form:
    Let $\Psi\in\CC^n\ox\CC^r$ be the canonical purification of $\rho$ and let $k=\mathrm{rank}(\rho)$.
    There is a unique surjective unital completely positive map $T_B : \A_B \to \MM_k$ so that $\omega(ab) = \ip\Psi{a\otimes T_B(b)\Psi}$.
    This is proved in \cref{sec:properties}.

\item[\it Monotonicity under local operations.]
    Since entanglement can only decrease under local operations, any sensible measure of entanglement must decrease under local operations.
    A local operation $T$ is a pair of unital completely positive maps $T_A:\A_A\to\B_A$ and $T_B:\A_B\to\B_B$ between the local algebras of two bipartite systems.
    From the definition in terms of minimal compressions, it is immediate that the Schmidt rank is indeed monotonically decreasing, i.e.\ if $\omega$ is mapped to $\varphi$ by a local operation, then $\SR(\omega)\ge\SR(\varphi)$.
    As a consequence, we obtain that all pure bipartite states with infinite distillable entanglement have infinite Schmidt rank. Proofs of these statements can be found in \cref{sec:properties}.

\item[\it Schmidt rank via the rank of the marginals.] 
    We generalize the concept of the rank of a density operator to general states on $C^*$-algebras.
    The rank of a state is the minimum number of pure states required to write it as a convex combination.
    With this generalization, we prove that the Schmidt rank is equal to the rank of the marginal states of a bipartite pure state $\omega$, i.e.\ $\SR(\omega)=\rank(\omega_A)=\rank(\omega_B)$. For details see \cref{sec:Schmidt}.

\item[\it Tame bipartite states.]
    We introduce the notion of tame bipartite states. These are states that can be obtained through shared randomness from states that are emulable with density operators and a tensor product splitting between Alice and Bob's observables.
    All separable states are tame, and all tame states can be represented on the minimal $C^*$-tensor product. In fact, they form a $w^*$-dense convex subset.
    This is examined in detail in \cref{sec:tame_wild}.

\item[\it Connection to Tsirelson's problem.]
    The hierarchy of correlation bodies studied in the context of Tsirelson's problem is a special case of the classes of bipartite pure states given by the states with bounded Schmidt rank, tame states, and the state spaces of the minimal and maximal $C^*$-tensor product. This is investigated in detail in \cref{sec:Tsirelsons_prob}.

\item[\it Systems admitting only tame bipartite states.]
    Consider a fixed system $A$ with observable algebra $\A_A$.
    Irrespective of $\A_B$, all bipartite pure states are tame if and only if $\A_A$ is a type I $C^*$-algebra.
    This is a well-studied class of $C^*$-algebras which are particularly well-behaved. This is proved in \cref{sec:bip_alg}.
    
\item[\it Computability.]
    The algebraic definition of the Schmidt rank allows for an explicit computation.
    For example, we obtain explicit values for the Schmidt rank of the ground state of the Heisenberg antiferromagnet model and the generalized AKLT model, where states on infinite spin chains are viewed as bipartite states for the left and right sides.
\end{labeledlist}

\section{The commuting operator framework}\label{sec:cof}

\subsection{Bipartite algebras}\label{sec:bip_alg}

Throughout, $A$ and $B$ are physical systems described by observable algebras $\A_A$ and $\A_B$.
If our systems are, however, described by more general $C^*$-algebras, there is no unique description of the joint system.
We define:

\begin{defin}\label{def:bipartite_alg}
    Let $C^*$-algebras $\A_A$, $\A_B$ describing systems $A$ and $B$ be given.
    A {\bf bipartite algebra} is a $C^*$-algebra $\A$ together with $^*$-embeddings $\A_A\hookrightarrow \A$ and $\A_B\hookrightarrow\A$ such that $\A_A$ and $\A_B$ commute in $\A$, have trivial intersection $\A_A\cap\A_B=\CC$ and generate  $\A$ as a $C^*$-algebra.
    A {\bf bipartite state} for $\A_A$ and $\A_B$ is a state $\omega$ on a bipartite algebra $\A$.
    If we want to emphasize the bipartite algebra we denote the bipartite state by $(\omega,\A)$.
\end{defin}

The most important class of bipartite algebras are the $C^*$-tensor products.
A $C^*$-tensor products are constructed by completing the algebraic tensor product $\A_A\odot\A_B$ with respect to a $C^*$-norm%
\footnote{The algebraic tensor product $\A_A\odot\A_B$ has the structure of a $^*$-algebra. A $C^*$-norm on a $^*$-algebra is a norm $\norm{}_\beta$ on which is $^*$-invariant, sub-multiplicative and satisfies the $C^*$-property $\norm{xx^*}_\beta = \norm{x}_\beta^2$.} $\norm{}_\beta$ and is denoted $\A_A\ox_\beta\A_B$ \cite[Sec.~IV.4]{takesaki1}.
By identifying $\A_A$ with $\A_A\ox1_B$ and $\A_B$ with $1_A\ox\A_B$, a $C^*$-tensor product indeed is a bipartite algebra.
There are two canonical $C^*$-norms, called the minimal and the maximal $C^*$-norm, because all other $C^*$-norms are bounded from below by the minimal and from above by the maximal norm.
The minimal norm $\norm{}_{min}$ is defined by taking arbitrary faithful representations $\A_j\subset\B(\H_j)$, $j=A,B$, and using the operator norm for the induced representation of $\A_A\odot\A_B$ on $\H_A\ox\H_B$. The resulting norm is independent of the chosen representations.
The maximal norm $\norm{}_{max}$ is defined by maximizing the operator norm over all commuting operator representations:
\[
    \norm[\big]{\sum_i a_i\ox b_i}_{max} = \sup_{\pi_A,\pi_B} \,\norm[\big]{\sum_i \pi_A(a_i)\pi_B(b_i)}
\]
where the supremum is over all pairs of representations $\pi_j$ on the same Hilbert space $\H_{AB}$ such that $\pi_A(\A_A)$ and $\pi_B(\A_B)$ commute.
The resulting algebra is the universal $C^*$-algebra generated by commuting operators $a\in\A_A$ and $b\in\A_B$ (see below).

\hide{The bipartite algebra usually used in algebraic formulations of quantum mechanics is the minimal $C^*$-tensor product $\A_{min}=\A_A\ox_{min}\A_B$. It can be constructed from faithful representations $\A_j\hookrightarrow\B(\H_j)$ as the $C^*$-subalgebra of $\B(\H_A\ox\H_B)$ generated by operators $a\ox b$ with $(a,b)\in\A_A\times\A_B$.In general, a $C^*$-tensor product $\A_\beta=\A_A\ox_\beta\A_B$ is a bipartite algebra that is constructed as the closure of the algebraic tensor product $\A_A\odot\A_B$ with respect to a $C^*$-cross-norm\footnote{A $C^*$-cross norm is a norm $\norm{}_\beta$ on the algebraic tensor product $\A_A\odot\A_B$ which is $^*$-invariant, submultiplicative and satisfies the $C^*$-property $\norm{xx^*}_\beta = \norm{x}_\beta^2$ as well as the cross-norm property $\norm{a\ox b}_\beta =\norm a\norm b$ (these conditions contain some redundancy \cite[Sec.~IV.4]{takesaki1}).} $\norm{}_\beta$. There is also a maximal $C^*$-tensor product $\A_{max}=\A_A\ox_{max}\A_B$ which has the universal property that for each $C^*$-algebra $\B$ and each pair of completely positive maps $\phi_A:\A_j\to\B$ with commuting ranges, there is a map $\phi:\A_{max}\to\B$ so that $\phi(a\ox b)=\phi_A(a)\phi_B(b)$ \cite[Prop.~IV.4.23]{takesaki1}.For all other $C^*$-tensor products $\A_\beta$, one has $\norm{x}_{min}\le\norm{x}_{\beta}\le\norm{x}_{max}$ for all $x\in\A_A\odot\A_B$.We will see that the $C^*$-tensor products are exactly those bipartite algebras for which the local systems are \emph{statistically independent}.}

\begin{defin}\label{def:local_op}
    Let $\A_A,\A_B$ and $\B_A,\B_B$ be two pairs of $C^*$-algebras and fix bipartite algebras $\A$ and $\B$.
    A {\bf local operation} is a unital completely positive map $T:\A\to\B$ such that $T(\A_j)\subset\B_j$, $j=A,B$.
\end{defin}

For general bipartite algebras $\A$ and $\B$, there is no guarantee that for two unital completely positive maps $T_j:\A_j\to\B_j$, there is a local operation $T:\A\to\B$ with $T|_{\A_j}=T_j$.
The maximal $C^*$-tensor product is the unique bipartite algebra

\begin{lem}[Universal property of $\A_{max}$]\label{thm:univ_prop}
    The maximal $C^*$-tensor is the only bipartite algebra that satisfies the following property: For every pair of unital completely positive maps $T_A:\A_A\to\B_A$ and $T_B:\A_B\to\B_B$ and every bipartite algebra $\B$, there is a local operation $T:\A_{max}\to\B$ so that $T|_{\A_j}=T_j$.
\end{lem}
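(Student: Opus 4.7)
The plan is to handle existence via the universal property of $\A_{max}$ recalled in the preceding paragraphs, and to deduce uniqueness by constructing mutually inverse $*$-homomorphisms between $\A_{max}$ and any rival bipartite algebra, with Choi's multiplicative domain theorem as the key tool. For existence, given unital completely positive maps $T_A\colon \A_A\to\B_A\subset\B$ and $T_B\colon \A_B\to\B_B\subset\B$, I would compose with the inclusions $\B_A,\B_B\hookrightarrow\B$ to obtain unital completely positive maps $\A_A\to\B$ and $\A_B\to\B$ with commuting ranges, since $\B_A$ and $\B_B$ commute in $\B$ by the very definition of a bipartite algebra. The universal property of $\A_{max}$ for such pairs then delivers a unital completely positive map $T\colon \A_{max}\to\B$ satisfying $T(a\ox b)=T_A(a)T_B(b)$; in particular $T|_{\A_j}=T_j$, so $T$ is a local operation.

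For uniqueness, suppose $\A$ is another bipartite algebra with the stated universal property. Applying the property of $\A$ with $\B=\A_{max}$ and $T_j=\id_{\A_j}$ yields a local operation $\alpha\colon \A\to\A_{max}$ with $\alpha|_{\A_j}=\id$. Symmetrically, the already-established existence part applied with $\B=\A$ and $T_j=\id_{\A_j}$ yields a local operation $\beta\colon \A_{max}\to\A$ with $\beta|_{\A_j}=\id$. By Choi's multiplicative domain theorem, the set on which a unital completely positive map saturates the Schwarz inequality is a $C^*$-subalgebra on which the map restricts to a $*$-homomorphism. For $\alpha$, this subalgebra contains both $\A_A$ and $\A_B$ (where $\alpha$ is the identity and therefore trivially multiplicative), hence the $C^*$-algebra they generate---which, by the definition of bipartite algebra, is all of $\A$. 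Thus $\alpha$ is a unital $*$-homomorphism, and the same reasoning shows $\beta$ is one as well.

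Then $\alpha\circ\beta\colon \A_{max}\to\A_{max}$ is a unital $*$-endomorphism that agrees with the identity on the generating set $\A_A\cup\A_B$, so $\alpha\circ\beta=\id_{\A_{max}}$; symmetrically $\beta\circ\alpha=\id_\A$. Hence $\beta$ is a $*$-isomorphism intertwining the canonical embeddings of $\A_A$ and $\A_B$, i.e., an isomorphism of bipartite algebras $\A_{max}\cong\A$. The only non-formal input is the commuting-ranges step that lets us invoke the universal property of $\A_{max}$; once that is in place, the multiplicative domain argument converts agreement on the generators $\A_A\cup\A_B$ into the equality of the full $*$-homomorphisms essentially for free, and the main obstacle I would anticipate is simply keeping the bookkeeping of the several embeddings straight.
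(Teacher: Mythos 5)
Your proof is correct and follows essentially the same route as the paper: existence via the universal property of $\A_{max}$ for unital completely positive maps with commuting ranges (the paper routes through $\B_{max}$ and the canonical surjection onto $\B$, which is the same thing), and uniqueness by taking $T_j=\id_{\A_j}$ to produce maps in both directions. The paper's uniqueness argument is only a one-line hint, and your multiplicative-domain step is exactly the missing detail needed to upgrade agreement on the generators $\A_A\cup\A_B$ to equality of the maps, so nothing further is required.
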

\begin{proof}
    By \cite[Prop.~IV.4.23]{takesaki1}, the property holds if $\B=\B_{max}$. The general result follows by composition with the canonical homomorphism $\phi:\B_{max}\to\B$. To see uniqueness take $\B_j=\A_j$ and $T_j=\id_{\A_j}$.
\end{proof}

Bipartite states $\omega$ such that $\omega(ab)=\omega_A(a)\omega_B(b)$ for all $(a,b)\in\A_A\times\A_B$ are \emph{product states}.
If we regard $\CC$ as a bipartite system, then product states are local operations
The existence of sufficiently many product states is called statistical independence in \cite{summers1990independence}:

\begin{defin}
    Let $\A$ be a fixed bipartite algebra. $\A_A$ and $\A_B$ are {\bf statistically independent} if for every pair of states $\omega_A$ on $\A_A$ and $\omega_B$ on $\A_B$ there exists a product state $\omega$ on $\A$ whose marginals are $\omega_A$ and $\omega_B$.
\end{defin}

If statistical independence does not hold, Alice and Bob are not free to perform state preparations.
As a consequence of a result by Roos \cite{roos1970} (see \cite{summers1990independence,florig1997} for extensions), we get that statistical independence holds if and only if the bipartite algebra is a $C^*$-tensor product:

\begin{thm}\label{thm:independence}
    Let $\A$ be a bipartite algebra for $\A_A$ and $\A_B$. The following are equivalent
    \begin{enumerate}[(a)]
        \item\label{it:strong_statistical_indep} $\A_A$ and $\A_B$ are statistically independent.
        \item\label{it:statistical_indep} For all pairs of states $\omega_A$ and $\omega_B$ on $\A_A$ and $\A_B$ there is a state $\omega$ on $\A$ such that $\omega|_{\A_j}=\omega_j$ and $\omega|_{\A_B}=\omega_B$. 
        \item\label{it:operational_indep} All pairs of unital completely positive maps $T_j:\A_j\to\B_j$ can be combined into a local operation $T:\A\to\B_{min}$ such that $T|_{\A_j}=T_j$, $j=A,B$. Here $\B_{min}=\B_A\ox_{min}\B_B$ is the minimal $C^*$-tensor product.
        \item\label{it:algebraic_indep} For all $0\ne a\in\A_A$ and $0\ne b\in\A_B$ it holds that $ab\ne0$ in $\A$.
        \item\label{it:cstar_TP} $\A$ is a $C^*$-tensor product of $\A_A$ and $\A_B$, i.e.\ $\A\cong\A_A\ox_\beta\A_B$ for a $C^*$-norm $\norm{}_\beta$ on $\A_A\odot\A_B$.
    \end{enumerate} 
\end{thm}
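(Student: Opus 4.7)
The plan is to establish the cycle $(a)\Rightarrow(b)\Rightarrow(d)\Rightarrow(e)\Rightarrow(c)\Rightarrow(a)$. The first implication is immediate: a product state with prescribed marginals is in particular a state with those marginals.

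For $(b)\Rightarrow(d)$ I would argue by contradiction. If $a\in\A_A$ and $b\in\A_B$ are nonzero with $ab=0$, set $x=a^*a$ and $y=b^*b$; both are positive, nonzero, and $xy=(ab)^*(ab)=0$ using the commutativity of $\A_A$ and $\A_B$. Since $x$ and $y$ generate an abelian $C^*$-subalgebra of $\A$, Gelfand duality yields $\norm{x+y}=\max(\norm x,\norm y)$, because two positive continuous functions with product zero have disjoint support. Next I would pick states $\omega_A$ on $\A_A$ and $\omega_B$ on $\A_B$ with $\omega_A(x)=\norm x$ and $\omega_B(y)=\norm y$, obtained by extending, via Hahn-Banach, characters on $C^*(x)$ and $C^*(y)$ that evaluate at the spectral maximum. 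A joint state $\omega$ produced by $(b)$ would then satisfy $\omega(x+y)=\norm x+\norm y>\max(\norm x,\norm y)=\norm{x+y}$, a contradiction.

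The central and hardest step is $(d)\Rightarrow(e)$, for which I would invoke Roos' theorem \cite{roos1970}: algebraic independence of two commuting $C^*$-subalgebras forces the canonical $*$-homomorphism $\A_A\odot\A_B\to\A$, $\sum a_i\otimes b_i\mapsto\sum a_i b_i$, to be injective. Its image is the dense $*$-subalgebra of $\A$ generated by $\A_A\cup\A_B$, so the $C^*$-norm of $\A$ pulls back to a $C^*$-norm on $\A_A\odot\A_B$ whose completion is $\A$, exhibiting $\A$ as a $C^*$-tensor product. This is the main obstacle; the remaining implications only use standard $C^*$-techniques.

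For $(e)\Rightarrow(c)$, writing $\A=\A_A\ox_\beta\A_B$, I would compose the canonical quotient $\A\twoheadrightarrow\A_A\ox_{min}\A_B$ (existing because the minimal norm is the smallest $C^*$-norm on the algebraic tensor product) with the tensor product $T_A\ox_{min}T_B:\A_A\ox_{min}\A_B\to\B_A\ox_{min}\B_B$ of UCP maps, which is well-defined on the minimal tensor product. Finally, for $(c)\Rightarrow(a)$, I would apply the hypothesis with $\B_j=\A_j$ and $T_j=\id_{\A_j}$ to obtain a UCP map $T:\A\to\A_A\ox_{min}\A_B$ restricting to the identity on each local algebra, so that $T(a)=a\ox1$ and $T(b)=1\ox b$. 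The equalities $T(a^*a)=T(a)^*T(a)$ and $T(b^*b)=T(b)^*T(b)$ show that $\A_A$ and $\A_B$ lie in the multiplicative domain of $T$; since this domain is a $C^*$-subalgebra and $\A_A\cup\A_B$ generates $\A$, the map $T$ is a $*$-homomorphism, and the pullback $(\omega_A\ox\omega_B)\circ T$ is a product state on $\A$ with the prescribed marginals, establishing $(a)$.
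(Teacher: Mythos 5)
Your proof is correct, and while it follows the same overall architecture as the paper's (both hinge on Roos's theorem for the step from algebraic independence to the tensor-product structure, and both obtain \ref{it:operational_indep} from \ref{it:cstar_TP} by passing through the minimal tensor product), it differs in two worthwhile places. First, the paper simply cites Roos for the equivalence \ref{it:statistical_indep} $\Leftrightarrow$ \ref{it:algebraic_indep}, whereas you give a self-contained elementary proof of \ref{it:statistical_indep} $\Rightarrow$ \ref{it:algebraic_indep}: the identity $\norm{x+y}=\max(\norm x,\norm y)$ for commuting positive elements with $xy=0$ (via Gelfand duality) combined with norm-attaining states is a clean argument that makes this direction independent of the reference. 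Second, your treatment of \ref{it:operational_indep} $\Rightarrow$ \ref{it:strong_statistical_indep} via the multiplicative domain is actually more careful than the paper's one-line remark that "product states are local operations into $\CC\ox_{min}\CC=\CC$": taken literally, applying \ref{it:operational_indep} with $\B_j=\CC$ only produces a joint state with prescribed marginals (i.e.\ \ref{it:statistical_indep}), not a product state; your choice $\B_j=\A_j$, $T_j=\id$, followed by the observation that $\A_A$ and $\A_B$ lie in the multiplicative domain of the resulting unital completely positive map and generate $\A$, genuinely forces $T$ to be a $^*$-homomorphism onto the minimal tensor product, from which product states are obtained by pullback. Both of your deviations buy extra rigor and self-containedness at the cost of a slightly longer argument; the one genuinely hard input, Roos's injectivity of $\sum a_i\ox b_i\mapsto\sum a_ib_i$, is used in the same way in both proofs.
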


\begin{proof}
    \ref{it:statistical_indep} $\Leftrightarrow$ \ref{it:algebraic_indep} was proved in \cite{roos1970}, \ref{it:cstar_TP} $\Rightarrow$ \ref{it:strong_statistical_indep} $\Rightarrow$ \ref{it:statistical_indep} and is clear.
    \ref{it:cstar_TP} $\Rightarrow$ \ref{it:operational_indep} can be seen by factoring the local operation $T$ through $\A=\A_{min}$ \cite[Prop.~IV.4.23]{takesaki1}.
    \ref{it:operational_indep} $\Rightarrow$ \ref{it:strong_statistical_indep} holds because product states are local operations into $\CC\ox_{min}\CC=\CC$.
    The direction \ref{it:algebraic_indep} $\Rightarrow$ \ref{it:cstar_TP} also follows from
    \cite{roos1970} where it is shown that \ref{it:algebraic_indep} implies that $\varphi:\sum a_i\ox b_i\mapsto \sum a_ib_i$ is an algebraic isomorphism from $\A_A\odot\A_B$ onto $\A_A\cdot\A_B\subset\A$.
    Therefore, $\norm{}_\beta=\norm{\varphi(\placeholder)}_\A$ inherits the properties of a $C^*$-norm from the norm on $\A$.
\end{proof}

\begin{cor}
    If either $\A_A$ or $\A_B$ is a nuclear simple $C^*$-algebra, there is a unique bipartite algebra $\A=\A_A\ox_{min}\A_B$.
    In particular, this holds one of the systems is a finite-dimensional quantum system, i.e.\ has observable algebra $\MM_n$.
\end{cor}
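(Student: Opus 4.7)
The plan is to combine the characterization in \cref{thm:independence} with the defining property of nuclear $C^*$-algebras. Concretely, I would first show that simplicity of one side, say $\A_A$, forces the algebraic independence condition \ref{it:algebraic_indep} to hold in any bipartite algebra $\A$, and then invoke nuclearity of $\A_A$ to identify the resulting $C^*$-tensor product as the minimal one.

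Without loss of generality, assume $\A_A$ is nuclear and simple, and let $\A$ be any bipartite algebra for $\A_A$ and $\A_B$. Fix a nonzero $b\in\A_B$ and consider
\[
    J_b = \{x\in\A_A : xb = 0\text{ in }\A\}.
\]
This is manifestly a norm-closed left ideal of $\A_A$. The crucial observation is that the commutation of $\A_A$ and $\A_B$ inside $\A$ promotes $J_b$ to a two-sided ideal: for $x\in J_b$ and $y\in\A_A$ one computes $(xy)b = x(yb) = x(by) = (xb)y = 0$. Hence $J_b$ is a closed two-sided ideal of the simple algebra $\A_A$, so $J_b\in\{\{0\},\A_A\}$. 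Since $1\in J_b$ would force $b = 1\cdot b = 0$, we must have $J_b=\{0\}$, i.e.\ $ab\ne 0$ whenever $a\in\A_A$ and $b\in\A_B$ are both nonzero. This is precisely condition \ref{it:algebraic_indep} of \cref{thm:independence}, so $\A\cong\A_A\ox_\beta\A_B$ for some $C^*$-norm $\norm{}_\beta$ on $\A_A\odot\A_B$.

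The second step is immediate from nuclearity: by definition, if $\A_A$ is nuclear then there is a unique $C^*$-norm on $\A_A\odot\A_B$ for every $C^*$-algebra $\A_B$, so $\A\cong\A_A\ox_{min}\A_B$. The ``in particular'' claim follows because $\MM_n$ is finite-dimensional, hence nuclear, and is a simple $C^*$-algebra.

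The only conceptual point is verifying that $J_b$ is a right ideal, which is where the commutation of $\A_A$ and $\A_B$ inside $\A$ is essential; everything else is a direct application of \cref{thm:independence} together with the uniqueness of the $C^*$-norm on algebraic tensor products involving a nuclear factor. I do not expect any genuine obstacle.
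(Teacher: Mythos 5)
Your proof is correct, but it takes a genuinely different route from the paper. The paper first uses nuclearity to identify $\A_{min}=\A_{max}$ and then invokes an external structure theorem (cited as Ranjana 2017) stating that, when one factor is nuclear and simple, every closed ideal of $\A_{min}$ is of product form; applying this to the kernel of the canonical surjection $\phi:\A_{max}\to\A$ and using that the embeddings $\A_j\hookrightarrow\A$ are isometric forces that kernel to vanish. You instead verify Roos's algebraic independence condition (item \ref{it:algebraic_indep} of \cref{thm:independence}) directly: the annihilator $J_b=\{x\in\A_A: xb=0\}$ is a closed left ideal, and the commutation of $\A_A$ with $\A_B$ inside $\A$ makes it two-sided, so simplicity plus unitality of the embeddings gives $J_b=\{0\}$; then \cref{thm:independence} yields $\A\cong\A_A\ox_\beta\A_B$ and nuclearity pins $\beta$ down to the (unique) minimal norm. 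Your argument is more elementary and self-contained --- it avoids the ideal-structure citation entirely and cleanly separates the roles of the two hypotheses (simplicity forces the tensor-product structure, nuclearity forces uniqueness of the norm) --- while the paper's version is shorter granted the cited theorem. Both are complete proofs of the corollary, including the $\MM_n$ specialization.
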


\begin{proof}
    By nuclearity we have $\A_{min}=\A_{max}$.
    It is proved in \cite{Ranjana2017} that the assumptions imply that all ideals of $\A_{min}$ are of the form $\mc J_A\ox\mc J_B$ for ideals $\mc J_j\subset\A_j$.
    If $\A$ is a bipartite algebra, then the kernel of the canonical homomorphism $\phi:\A_{max} \to \A$ is an ideal and, hence of product form.
    Since the embeddings $\A_j\hookrightarrow\A$ are isometric, $\mc J_j =\A_j$ follows, and we get $\A=\A_{max}= \A_{min}$.
\end{proof}

This means that we are guaranteed to have statistical independence whenever one of the parties has a finite-dimensional quantum system!

\begin{prop}\label{thm:bipartite_algebras}
    \begin{enumerate}[(1)]
        \item 
            For every, bipartite algebra $\A$ there is a unique surjective $^*$-homo\-morphism $\phi:\A_{max}\to\A$ so that $\phi(a\ox b)=ab$.
        \item\label{it:GNS_pullback}
            Let $\omega$ be a state on a bipartite algebra $\A$.
            Define a state $\omega_{max}=\omega\circ\phi$ on $\A_{max}$, then the GNS representation of $\omega_{max}$ can be constructed via $\H_{\omega_{max}}=\H_{\omega}$, $\pi_{\omega_{max}} = \pi_{\omega}\circ\phi$ and $\Omega_{\omega_{max}}=\Omega_{\omega}$.
        \item
            If $\tilde\A$ is a $C^*$-algebra that contains $\A_A$ and $\A_B$ as commuting subalgebras such that $\A_A\cap\A_B=\CC1$, then the subalgebra $\A\subset\tilde\A$ generated by $\A_A$ and $\A_B$ is a bipartite algebra.
    \end{enumerate} 
\end{prop}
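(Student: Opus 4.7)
For part (1), the plan is to apply the universal property of $\A_{max}$ (which is built into its definition via commuting representations) to a faithful representation of $\A$. Concretely, I would fix a faithful unital $^*$-representation $\pi:\A\hookrightarrow\B(\H)$; the restrictions $\pi_A=\pi|_{\A_A}$ and $\pi_B=\pi|_{\A_B}$ are then commuting $^*$-representations of $\A_A$ and $\A_B$ on the common Hilbert space $\H$. By the universal property of the maximal tensor product stated just above \cref{thm:univ_prop}, there is a $^*$-homomorphism $\tilde\phi:\A_{max}\to\B(\H)$ with $\tilde\phi(a\ox b)=\pi_A(a)\pi_B(b)=\pi(ab)$. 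The image $\tilde\phi(\A_{max})$ is a $C^*$-subalgebra of $\B(\H)$ containing $\pi(\A_A)$ and $\pi(\A_B)$, and because $\A_A\odot\A_B$ is dense in $\A_{max}$ and $\A$ is generated (as a $C^*$-algebra) by $\A_A\cup\A_B$, this image equals $\pi(\A)$. Setting $\phi=\pi^{-1}\circ\tilde\phi$ yields the desired surjective $^*$-homomorphism $\phi:\A_{max}\to\A$ with $\phi(a\ox b)=ab$. Uniqueness follows because two $^*$-homomorphisms that agree on the $^*$-algebra $\A_A\odot\A_B$ (which is dense in $\A_{max}$) must coincide, and their values on $a\ox b$ are already fixed.

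For part (2), the plan is a direct verification of the three defining properties of a GNS triple applied to the candidate $(\pi_\omega\circ\phi,\H_\omega,\Omega_\omega)$. First, $\omega_{max}=\omega\circ\phi$ is a state since $\phi$ is unital and positive. The representation $\pi_{\omega_{max}}:=\pi_\omega\circ\phi$ is a unital $^*$-representation because it is the composition of two such maps. The state identity is immediate:
\begin{equation}
    \ip{\Omega_\omega}{\pi_{\omega_{max}}(x)\Omega_\omega} = \ip{\Omega_\omega}{\pi_\omega(\phi(x))\Omega_\omega} = \omega(\phi(x)) = \omega_{max}(x).
\end{equation}
Cyclicity of $\Omega_\omega$ for $\pi_{\omega_{max}}$ follows from cyclicity for $\pi_\omega$, since surjectivity of $\phi$ (from part (1)) gives
\begin{equation}
    \pi_{\omega_{max}}(\A_{max})\Omega_\omega = \pi_\omega(\phi(\A_{max}))\Omega_\omega = \pi_\omega(\A)\Omega_\omega,
\end{equation}
which is dense in $\H_\omega$. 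Uniqueness of the GNS representation up to unitary equivalence then identifies this data with the GNS representation of $\omega_{max}$.

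Part (3) is essentially a bookkeeping check: every condition in \cref{def:bipartite_alg} is inherited from $\tilde\A$. The inclusions $\A_A,\A_B\subset\A$ are isometric $^*$-embeddings because they factor as $\A_j\subset\A\subset\tilde\A$ and the outer inclusion is an isometric $^*$-embedding. The commutation $ab=ba$ for $a\in\A_A$, $b\in\A_B$ holds in $\tilde\A$ by hypothesis and thus in any subalgebra containing both. The intersection condition $\A_A\cap\A_B=\CC1$ is also inherited from $\tilde\A$, since the intersection computed inside $\A$ is contained in the intersection computed inside $\tilde\A$. Finally, $\A$ is by definition the $C^*$-subalgebra generated by $\A_A\cup\A_B$, so the last condition holds tautologically.

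I do not anticipate any serious obstacle; the only points requiring mild care are (a) checking in part (1) that the image of $\tilde\phi$ is exactly $\pi(\A)$ (which uses both the density of $\A_A\odot\A_B$ in $\A_{max}$ and the generation hypothesis for $\A$), and (b) not confusing $\A_A\cap\A_B$ computed inside $\A$ with the same intersection inside $\tilde\A$ in part (3) — but since $\A\subset\tilde\A$ the two coincide.
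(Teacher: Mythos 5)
Your proposal is correct and follows essentially the same route as the paper, which simply cites the universal property of $\A_{max}$ for part (1) and declares parts (2) and (3) evident/obvious; you have merely spelled out the details (realizing the universal property through a faithful representation of $\A$, verifying the GNS axioms directly, and checking the inherited conditions). No gaps.
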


\begin{proof}
    The first item follows from the universal property of the maximal $C^*$-tensor product \cite[Prop.~IV.4.23]{takesaki1}.
    The second item is evident from the uniqueness of the GNS representation.
    The last item is obvious.
\end{proof}

We consider some examples of bipartite algebras.

\begin{exa}\label{exa:bipartite_algebras}
\begin{enumerate}[(1)]
    \item 
        For our first example look at classical systems.
        Consider commutative $C^*$-algebras $\A_j=C(X_j)$ for compact Hausdorff spaces $X_A$ and $X_B$. 
        Then $\A = C(X)$, $X=X_A\times X_B$ is the unique $C^*$-tensor product where the embeddings $\A_j$ are defined by $f_j(x_A,x_B)=f(x_j)$, $f_j\in\A_j$, $j=A,B$.
    \item 
        In bosonic systems with one-particle space $\mf h$, we may take the observable algebra to be the CCR algebra $\mathrm{CCR}(\mf h)$.
        If $\mathfrak{h}_A$ and $\mathfrak{h}_B$ are one-particle Hilbert spaces, then the observable algebra of the joint system is $\mathrm{CCR}(\mf h_A\oplus\mf h_B)$.
        This natural bipartite algebra is isomorphic with the minimal $C^*$-tensor product of the local observable algebras $\mathrm{CCR}(\mf h_A\oplus\mf h_B)\cong \mathrm{CCR}(\mf h_A)\ox_{min}\mathrm{CCR}(\mf h_B)$.
    \item 
        Fermionic systems with one-particle space $\mathfrak{h}$ are described by the CAR algebra $\mathrm{CAR}(\mathfrak{h})$.
        The CAR algebra carries a natural grading and only elements of even parity make for valid physical observables.
        They form a unital subalgebra which we denote $\mathrm{CAR}_+(\mf h)$.\footnote{The CAR algebra $\mathrm{CAR}(\mf h)$ is the unital $C^*$-algebra generated by operators $a(\Psi)$, $\Psi\in\mf h$ such that $a(\lambda\Psi)=\Bar\lambda a(\Psi)$, $\{a(\Psi), a(\Phi)\}=0$ and $\{a(\Psi), a(\Phi)^*\}=\ip\Psi\Phi1$ for all $\Psi,\Phi\in\mf h$ and $\lambda\in\CC$, where $\{x,y\}=xy+yx$ is the anti-commutator. The elements of even parity are those that can be written as even polynomials in the generators $a(\Psi)$.}
        
        We want to describe correlations between two fermionic systems with one-particle spaces $\mf h_A$ and $\mf h_B$.
        The full CAR algebra of the joint system $\mathrm{CAR}(\mf h_A\oplus\mf h_B)$ is a \emph{graded} tensor product of $\mathrm{CAR}(\mf h_A)$ and $\mathrm{CAR}(\mf h_B)$.
        This means that the embeddings of the two local CAR algebras satisfy a graded version of commutativity in the full CAR algebra (i.e., some items commute and some items anti-commute).
        However, the local observable algebras $\A_j = \mathrm{CAR}_+(\mf h_j)$ of even-parity elements do commute. The natural bipartite algebra is the subalgebra $\A$ generated by $\A_A$ and $\A_B$, and this bipartite algebra is isomorphic with $\A_A\ox_{min}\A_B$.
        This bipartite $\A$ algebra is strictly contained in the observable algebra $\mathrm{CAR}_+(\mf h_A\op\mf h_B)$. 
        For example, $\A$ does not contain elements of the form $a(\psi_A\oplus0)a(0\oplus\psi_B)$ for $\psi_j\in\mf h_j$. Even though such elements have even parity and are seemingly product observables, they make no sense in a correlation experiment as the local operators admit no interpretation as local observables.
    \item\label{it:duck}
        Finally, we give an example where statistical independence does not hold.
        Consider again $\A_j=C(X_j)$ and $\A=C(X)$ but this time pick a proper subset $X\subset X_A\times X_B$ with full projections $\mathrm{pr}_j X = X_j$.
        Again the embedding $\A_j\hookrightarrow\A$ is defined by $f_j(x_A,x_B)=f_j(x_j)$.
        In this case, the systems $A$ and $B$ are not statistically independent since there is no product state whose marginals are $\delta_{x_A}$ and $\delta_{x_B}$ if $(x_A,x_B)\in X_A\times X_B\setminus X$.
        See \cref{fig:duck} for an illustration.
\end{enumerate} 
\end{exa}

\begin{figure}[htp]
    \begin{center}
        \includegraphics[scale=.25]{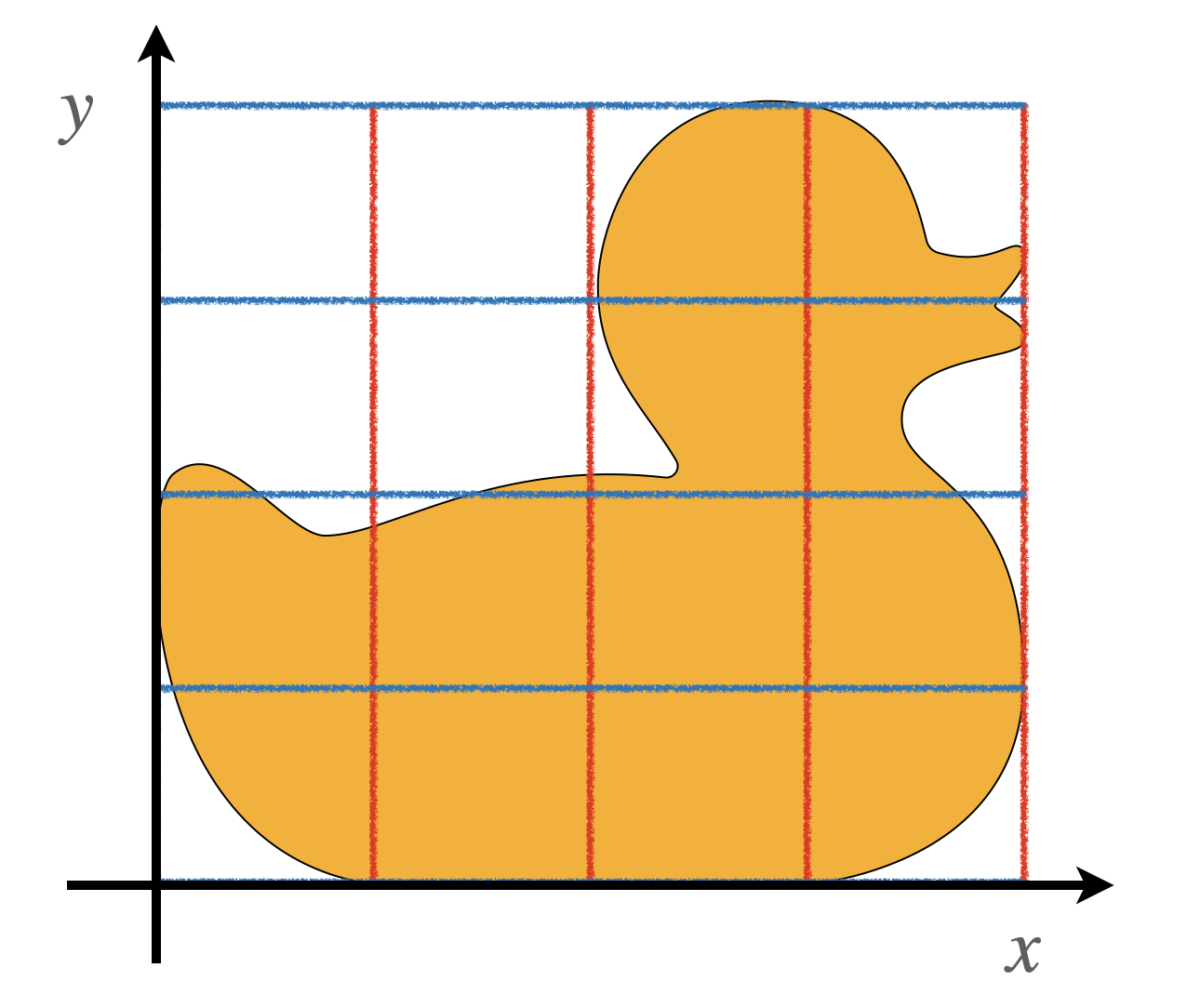}
    \end{center}
    \vspace{-22pt}
    \caption{Visualization of \cref{exa:bipartite_algebras}.\ref{it:duck}: A bipartite algebra for which the systems $A$ and $B$ are not statistically independent. The bipartite algebra is $\A=C(X)$ where $X$ is the yellow-colored region, i.e.\ the duck. $\A_A$ (resp.\ $\A_B$) are the subalgebras of functions that only depend on the $x$-variable (resp.\ $y$-variable).} 
    \label{fig:duck}
\end{figure}

\begin{defin}\label{def:corr_equiv}
    If we want to emphasize the algebra, we will denote a bipartite state by $(\omega,\A)$.
    Two bipartite states $\omega\up1$ and $\omega\up2$ are {\bf correlation-equivalent}, denoted $(\omega\up1,\A\up1)\sim(\omega\up2,\A\up2)$ (or $\omega\up1\sim\omega\up2$), if
    \begin{equation}\label{eq:equiv_rel}
        \omega\up1(ab)=\omega\up2(ab) \qquad\forall (a,b)\in\A_A\times\A_B.
    \end{equation}
\end{defin}

Correlation-equivalence is an equivalence relation. It means that $\omega\up1$ and $\omega\up2$ cannot be distinguished in pure correlation experiments, i.e.\ by local measurements only. 

\begin{prop}\label{thm:bipartite_states}
    \begin{enumerate}[(1)]
        \item\label{it:at_most_one}
            There is at most one correlation-equivalent bipartite state per bipartite algebra, i.e.\ $(\omega\up1,\A)\sim(\omega\up2,\A)$ implies $\omega\up1=\omega\up2$.
        \item\label{it:rep_on_Amax}
            Every equivalence class of bipartite states has a unique representative $\omega_{max}$ whose bipartite algebra is $\A_{max}$. It can be constructed from any representative $\omega$ as $\omega_{max}=\omega\circ\phi$ where $\phi:\A_{max}\to\A$ is the canonical $^*$-homomorphism.
            In particular, $(\omega\up1,\A\up1)\sim(\omega\up2,\A\up2)$ holds if and only if $\omega\up1\circ\phi_1=\omega\up2\circ\phi_2$.
        \item\label{it:GNS_equiv}
            If $(\omega_1,\A\up 1)\sim (\omega\up2,\A\up2)$, then their GNS representations are related by 
            \begin{equation}\label{eq:GNS_equiv}
                \H_{\omega\up1}=\H_{\omega\up2},\quad \pi_{\omega\up1}(ab)=\pi_{\omega\up2}(ab),\quad \Omega_{\omega\up1}=\Omega_{\omega\up2}.
            \end{equation}
    \end{enumerate} 
\end{prop}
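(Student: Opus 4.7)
The plan is to prove the three items in sequence, with each building on the previous and leveraging the already-established \cref{thm:bipartite_algebras}. The only genuine work is in item (1); items (2) and (3) are then essentially bookkeeping.

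For \ref{it:at_most_one}, the key observation is that, by definition of a bipartite algebra, $\A$ is the $C^*$-algebra generated by $\A_A\cup\A_B$. Since $\A_A$ and $\A_B$ commute in $\A$, the linear span of products $\{ab : a\in\A_A,\ b\in\A_B\}$ is a $^*$-subalgebra of $\A$, and its norm-closure is therefore all of $\A$. Two states that agree on a norm-dense subspace coincide by continuity, so $(\omega\up1,\A)\sim(\omega\up2,\A)$ forces $\omega\up1=\omega\up2$.

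For \ref{it:rep_on_Amax}, I would take the canonical surjective $^*$-homomorphism $\phi:\A_{max}\to\A$ supplied by \cref{thm:bipartite_algebras}(1) and set $\omega_{max}:=\omega\circ\phi$. This is a state on $\A_{max}$ satisfying $\omega_{max}(a\ox b)=\omega(\phi(a\ox b))=\omega(ab)$, so $(\omega_{max},\A_{max})\sim(\omega,\A)$; hence every equivalence class has a representative on $\A_{max}$. Uniqueness of that representative is immediate from \ref{it:at_most_one} applied to $\A=\A_{max}$. The ``in particular'' clause is then a tautology: if $\omega\up1\sim\omega\up2$ via a common correlation function, then both $\omega\up1\circ\phi_1$ and $\omega\up2\circ\phi_2$ are representatives on $\A_{max}$ of the same equivalence class, so they coincide by uniqueness; conversely, if $\omega\up1\circ\phi_1=\omega\up2\circ\phi_2$, evaluating both sides on $a\ox b$ yields \eqref{eq:equiv_rel}.

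For \ref{it:GNS_equiv}, I would apply \cref{thm:bipartite_algebras}\ref{it:GNS_pullback} to each $\omega\up i$ separately, obtaining realisations of the GNS data of $\omega\up i_{max}=\omega\up i\circ\phi_i$ as $(\pi_{\omega\up i}\circ\phi_i,\H_{\omega\up i},\Omega_{\omega\up i})$. By \ref{it:rep_on_Amax}, the two states $\omega\up1_{max}$ and $\omega\up2_{max}$ coincide, so by uniqueness of the GNS representation (up to unitary equivalence) we can identify these two realisations, yielding $\H_{\omega\up1}=\H_{\omega\up2}$ and $\Omega_{\omega\up1}=\Omega_{\omega\up2}$. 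Under this identification, $\pi_{\omega\up1}(ab)=\pi_{\omega\up1}(\phi_1(a\ox b))=\pi_{\omega\up2}(\phi_2(a\ox b))=\pi_{\omega\up2}(ab)$, giving the middle equation. The only mildly subtle point — and essentially the ``main obstacle'' such as it is — is to make clear that the equalities in \eqref{eq:GNS_equiv} are to be read modulo the canonical unitary identification of the two GNS spaces of the common $\omega_{max}$, rather than as a literal statement about two independently constructed GNS triples.
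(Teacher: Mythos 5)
Your proposal is correct and follows essentially the same route as the paper: item (1) via density of $\lin(\A_A\cdot\A_B)$ in $\A$, item (2) via the canonical surjection $\phi:\A_{max}\to\A$ together with the uniqueness from (1), and item (3) by pulling back the GNS data along $\phi$ and invoking uniqueness of the GNS representation. Your additional remarks (that the span of products is a $^*$-subalgebra thanks to commutativity, and that the equalities in \eqref{eq:GNS_equiv} are to be read up to the canonical unitary identification) merely make explicit what the paper leaves implicit.
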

\begin{proof}
    \ref{it:at_most_one}: This holds because $\lin(\A_A\cdot\A_B)$ is dense in every bipartite algebra $\A$.

    \ref{it:rep_on_Amax}: Uniqueness of the representative on $\A_{max}$ holds because of the first item, and existence follows from the construction using $\phi$.

    \ref{it:GNS_equiv}: This follows from \cref{thm:bipartite_algebras}.\ref{it:GNS_pullback} because $\omega\up1\circ\phi=\omega\up2\circ\phi$ by \ref{it:rep_on_Amax}.
\end{proof}

As a consequence, we can now show the claim from \cref{sec:summary_bip_alg} that correlations can be explained by quantum theory if and only if they satisfy the quantum constraint \eqref{eq:q_constraint}.

\begin{cor}
    The equation 
    \begin{equation}\label{eq:bipartite_states}
        \omega_0(a,b) = \omega(ab) = \omega_{max}(a\ox b) \qquad\forall(a,b)\in\A_A\times\A_B.
    \end{equation}
    determines a bijection between
    \begin{enumerate}[(i)]
        \item bilinear function $\omega_0:\A_A\times\A_B\to\CC$ with $\omega_0(1_A,1_B)=1$ and $\omega(a,b)\ge0$ if $a\ge0$ and $b\ge0$ that satisfy the quantum constraint \eqref{eq:q_constraint}.
       \item equivalence classes $[(\omega,\A)]$ of bipartite states with respect to correlation-equivalence,
       \item states $\omega_{max}$ on the maximal $C^*$-tensor product $\A_{max}$.
\end{enumerate}
\end{cor}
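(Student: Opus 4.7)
The plan is to show the bijection between (i) and (iii) directly, since the bijection between (ii) and (iii) is already delivered by \cref{thm:bipartite_states}.\ref{it:rep_on_Amax} (which says each equivalence class has a unique representative on $\A_{max}$), and the defining formula \eqref{eq:bipartite_states} is manifestly compatible with these identifications. So the real content is: positivity plus the quantum constraint on a bilinear $\omega_0$ is precisely what is needed to extend $\omega_0$ to a state on $\A_{max}$.

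The direction (iii)$\Rightarrow$(i) is routine: given a state $\omega_{max}$ on $\A_{max}$, set $\omega_0(a,b)=\omega_{max}(a\ox b)$. Bilinearity, normalization, and separate positivity in $(a,b)$ are immediate from the functorial properties of the tensor product, and the quantum constraint follows by applying $\omega_{max}\ge0$ to $z^*z$ for $z=\sum_i a_i\ox b_i\in\A_A\odot\A_B$.

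For the key direction (i)$\Rightarrow$(iii), I would first use bilinearity of $\omega_0$ to define a linear functional $\tilde\omega$ on the algebraic tensor product $\A_A\odot\A_B$ by $\tilde\omega(\sum_i a_i\ox b_i)=\sum_i\omega_0(a_i,b_i)$; the quantum constraint is exactly the statement $\tilde\omega(z^*z)\ge0$ for every $z\in\A_A\odot\A_B$. Perform a GNS-type construction: the sesquilinear form $\langle z,w\rangle:=\tilde\omega(z^*w)$ is positive semi-definite, so quotienting by its kernel and completing yields a Hilbert space $\H$ with cyclic vector $\Omega$ (the class of $1\ox1$). Left multiplication gives $^*$-representations $\pi_A$ of $\A_A$ and $\pi_B$ of $\A_B$ on the dense subspace, and these commute because $(a\ox1)(1\ox b)=(1\ox b)(a\ox1)$ in $\A_A\odot\A_B$.

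The main obstacle, and the one step requiring some care, is boundedness of $\pi_A$ and $\pi_B$ — without which the universal property of $\A_{max}$ cannot be applied. The trick is that for fixed $x\in\A_A\odot\A_B$ the map
\begin{equation}
    f_x:\A_A\to\CC,\qquad f_x(a):=\tilde\omega\bigl(x^*(a\ox1)x\bigr)
\end{equation}
is a \emph{positive linear functional on the $C^*$-algebra $\A_A$}: if $a=c^*c$ then $x^*(a\ox1)x=((c\ox1)x)^*((c\ox1)x)$, so $f_x(a)\ge0$ by the quantum constraint. A positive functional on a unital $C^*$-algebra satisfies $|f_x(a)|\le\|a\|\,f_x(1)$, so substituting $a\rightsquigarrow a^*a$ yields $\tilde\omega(x^*(a^*a\ox1)x)\le\|a\|^2\tilde\omega(x^*x)$, i.e.\ $\|\pi_A(a)\hat x\|^2\le\|a\|^2\|\hat x\|^2$. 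Thus $\pi_A$ extends by continuity to a bounded $^*$-representation of $\A_A$, and symmetrically for $\pi_B$. Now the universal property of $\A_{max}$ (\cref{thm:univ_prop}, or \cite[Prop.~IV.4.23]{takesaki1}) produces a representation $\pi:\A_{max}\to\B(\H)$ with $\pi(a\ox b)=\pi_A(a)\pi_B(b)$, and $\omega_{max}(x):=\langle\Omega,\pi(x)\Omega\rangle$ is a state on $\A_{max}$ satisfying $\omega_{max}(a\ox b)=\omega_0(a,b)$.

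Finally, the two directions are mutually inverse: starting from $\omega_0$, building $\omega_{max}$ as above, and restricting to elementary tensors returns $\omega_0$; starting from $\omega_{max}$, the construction recovers (a representation unitarily equivalent to) the GNS data of $\omega_{max}$, and the uniqueness clause of \cref{thm:bipartite_states}.\ref{it:at_most_one} forces the resulting state on $\A_{max}$ to coincide with the original $\omega_{max}$. Combined with the bijection (ii)$\leftrightarrow$(iii) from \cref{thm:bipartite_states}.\ref{it:rep_on_Amax}, this completes the three-way bijection.
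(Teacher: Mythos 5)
Your proof is correct and structurally matches the paper's: the (ii)$\leftrightarrow$(iii) bijection is delegated to \cref{thm:bipartite_states}.\ref{it:rep_on_Amax}, and the content is the (i)$\leftrightarrow$(iii) correspondence. The only difference is that the paper dispatches that correspondence with a citation to \cite{blackadar2006oa} and \cite{lance1982tensor}, whereas you write out the standard GNS-type argument those references contain — including the key boundedness estimate via the positive functional $f_x$ — which is a valid and complete substitute.
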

\begin{proof}
    The bijection between $\omega_0$ and $\omega_{max}$ is well-known, see for example \cite[Prop.~II.9.3.4]{blackadar2006oa} or \cite{lance1982tensor}. The bijection with $\omega_{max}$ was proved in the previous item.
\end{proof}

For a bipartite state $\omega$, we consider its GNS representation $(\H_\omega,\pi_\omega,\Omega_\omega)$.
We define von Neumann algebras $\M_j = \pi_\omega(\A_j)''\subset \B(\H_\omega)$.
Since $\A_A$ and $\A_B$ commute, so do $\M_A$ and $\M_B$, i.e.\ $\M_A\subset \M_B'$.
The data $(\M_A,\M_B,\H,\Omega)$ describes all correlations as $\omega(ab)=\ip\Omega{\pi(a)\pi(b)\Omega}$ for all $(a,b)\in\A_A\times\A_B$.

\begin{defin}\label{def:haag}
    \begin{enumerate}[(1)]
        \item 
            A pure state $\omega$ on a bipartite algebra $\A$ satisfies {\bf Haag-duality} if $\M_A = \M_B'$.
        \item 
            A state $\omega$ on a bipartite algebra satisfies the {\bf split property}, if there is a type I factor $\N\subset\B(\H)$ such that $\M_A\subset \N \subset \M_B'$ or, equivalently, if $\M_A\vee\M_B \cong \M_A\Bar\ox\M_B$, where "$\Bar\ox$" denotes the von Neumann tensor product \cite[Ch.~IV]{takesaki1}.
    \end{enumerate} 
\end{defin}

The split property implies that $\M_A$ and $\M_B$ can be separated by a tensor product splitting of $\H$ (but not necessarily in a unique way).
Both Haag-duality and the split property play important roles in algebraic quantum field theory \cite{Doplicher1984} and are intimately linked with the existence of (normal) product states \cite{buchholz1974product}.
We observe that being a product state $\omega$ has an important structural consequence for the von Neumann algebra $\pi_{\omega}(\A)''$:

\begin{lem}\label{lem:productstate}
    Let $\omega$ be a product state on a bipartite algebra $\A$. 
    Then its GNS representation can be constructed as $\H_\omega = \H_{\omega_A}\ox\H_{\omega_B}$, $\pi_\omega(ab)=\pi_{\omega_A}(a)\ox\pi_{\omega_B}(b)$ and $\Omega_\omega=\Omega_{\omega_A}\ox\Omega_{\omega_B}$.
    In particular, this implies $\pi_{\omega}(\A)''\cong\pi_{\omega_{A}}(\A_{A})''\!\ \Bar\ox\!\ \pi_{\omega_{B}}(\A_{B})''$.
\end{lem}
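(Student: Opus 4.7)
The plan is to appeal to the uniqueness of the GNS representation: I will construct a candidate cyclic representation of $\A$ on $\H_{\omega_A}\ox\H_{\omega_B}$ whose vector state is $\omega$, and then identify it with $(\pi_\omega,\H_\omega,\Omega_\omega)$ by exhibiting a unitary intertwiner.

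The key computation is the following inner product identity. For any $a_1,a_2\in\A_A$ and $b_1,b_2\in\A_B$, the commutation of $\A_A$ with $\A_B$ inside $\A$ together with the product state assumption gives
\begin{align*}
\ip{\pi_\omega(a_1b_1)\Omega_\omega}{\pi_\omega(a_2b_2)\Omega_\omega}
&=\omega\bigl((a_1^*a_2)(b_1^*b_2)\bigr)
=\omega_A(a_1^*a_2)\,\omega_B(b_1^*b_2)\\
&=\ip{\pi_{\omega_A}(a_1)\Omega_{\omega_A}\ox\pi_{\omega_B}(b_1)\Omega_{\omega_B}}{\pi_{\omega_A}(a_2)\Omega_{\omega_A}\ox\pi_{\omega_B}(b_2)\Omega_{\omega_B}}.
\end{align*}
By \cref{thm:bipartite_states}\ref{it:at_most_one} (or its proof), $\lin(\A_A\cdot\A_B)$ is dense in $\A$, so cyclicity of $\Omega_\omega$ makes the set $\{\pi_\omega(ab)\Omega_\omega\}$ total in $\H_\omega$; on the other side, the algebraic tensors span a dense subspace of $\H_{\omega_A}\ox\H_{\omega_B}$ because the GNS vectors are cyclic for the respective representations. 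The identity therefore shows that the prescription $\pi_\omega(ab)\Omega_\omega\mapsto \pi_{\omega_A}(a)\Omega_{\omega_A}\ox\pi_{\omega_B}(b)\Omega_{\omega_B}$ extends linearly to an isometry with dense image and hence to a unitary $U\colon\H_\omega\to\H_{\omega_A}\ox\H_{\omega_B}$.

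For the intertwining, I compute on the total set: for $a_0\in\A_A$,
\[
U\pi_\omega(a_0)\pi_\omega(ab)\Omega_\omega=U\pi_\omega((a_0a)b)\Omega_\omega=\bigl(\pi_{\omega_A}(a_0)\ox1\bigr)\,U\pi_\omega(ab)\Omega_\omega,
\]
so $U\pi_\omega(a_0)U^*=\pi_{\omega_A}(a_0)\ox1$, and symmetrically $U\pi_\omega(b_0)U^*=1\ox\pi_{\omega_B}(b_0)$ for $b_0\in\A_B$. Combining and taking $a=1_A,b=1_B$ yields both $U\pi_\omega(ab)U^*=\pi_{\omega_A}(a)\ox\pi_{\omega_B}(b)$ and $U\Omega_\omega=\Omega_{\omega_A}\ox\Omega_{\omega_B}$, so by uniqueness of the GNS representation the asserted form follows.

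For the ``in particular'' statement, the image $U\pi_\omega(\A)U^*$ coincides with the $*$-algebra generated by $\pi_{\omega_A}(\A_A)\ox 1$ and $1\ox\pi_{\omega_B}(\A_B)$, so its weak closure is the von Neumann algebra generated by all simple tensors $\pi_{\omega_A}(a)\ox\pi_{\omega_B}(b)$. The commutant of this set coincides with the commutant of $\pi_{\omega_A}(\A_A)''\odot\pi_{\omega_B}(\A_B)''$, hence by the definition of the von Neumann tensor product equals $\pi_{\omega_A}(\A_A)''\,\Bar\ox\,\pi_{\omega_B}(\A_B)''$. There is no substantive obstacle here; the only point requiring care is verifying surjectivity of the isometry $U$, which rests on the two density statements noted above.
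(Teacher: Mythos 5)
Your proof is correct, but it takes a more self-contained route than the paper, which disposes of the lemma in one line by declaring it a special case of \cref{thm:bipartite_states}.\ref{it:GNS_equiv}: a product state on $\A$ is correlation-equivalent to the product state on a $C^*$-tensor product, whose GNS triple is the tensor product of the local GNS triples, and correlation-equivalent states share their reduced GNS data. That argument is shorter but tacitly relies on the standard fact about GNS representations of product states on tensor products — which is essentially what you prove directly. Your version has the merit of working on an arbitrary bipartite algebra from the start and of making explicit the two points the citation glosses over: that the intertwiner $U$ is well defined and surjective (via the inner-product identity on the total sets $\pi_\omega(\A_A\cdot\A_B)\Omega_\omega$ and $\pi_{\omega_A}(\A_A)\Omega_{\omega_A}\odot\pi_{\omega_B}(\A_B)\Omega_{\omega_B}$), and that $ab\mapsto\pi_{\omega_A}(a)\ox\pi_{\omega_B}(b)$ genuinely extends to a representation of $\A$ — you get this for free by conjugating $\pi_\omega$ with $U$ rather than trying to define the map on products directly, which is the right way to handle a general bipartite algebra. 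One small imprecision in the last paragraph: it is the \emph{double} commutant (equivalently the weak closure), not the commutant, that equals $\pi_{\omega_A}(\A_A)''\,\Bar\ox\,\pi_{\omega_B}(\A_B)''$; the sentence as written conflates the two, though the intended argument is clear and correct.
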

\begin{proof}
    This is a special case of \cref{thm:bipartite_states}.\ref{it:GNS_equiv}.
\end{proof}

\begin{defin}\label{def:correlation_inv}
    A property defined for all bipartite correlations will be called a {\bf correlation invariant} if it is constant on equivalence classes with respect to Correlation-equivalence and if it is invariant under local unitaries, i.e.\ assigns the same value to states $\omega$ and $u_Au_B\omega u_B^*u_A^* =\omega(u_Au_B\placeholder u_B^*u_A^*)$ for all pairs of unitaries $u_j\in\A_j$, $j=A,B$.
\end{defin}

\begin{thm}\label{thm:correlation_invariants}
    \begin{enumerate}[(1)]
        \item\label{it:reduced_GNS}
            The reduced GNS representation $(\pi_\omega|_{\A_A}, \pi_\omega|_{\A_B}, \H_\omega, \Omega_\omega)$ is a correlation invariant up to unitary equivalence. 
            In particular, $(\M_A,\M_B,\H_\omega,\Omega_\omega)$ is a correlation invariant.
        \item\label{it:purity}
            Being pure is a correlation invariant.
        \item\label{it:split_prop}
            The split property is a correlation invariant.
        \item\label{it:Haag-duality} 
            Haag-duality is a correlation invariant for pure bipartite states.
    \end{enumerate} 
\end{thm}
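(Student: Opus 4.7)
The plan is to reduce all four items to the GNS-identification of \cref{thm:bipartite_states}.\ref{it:GNS_equiv}, together with a short direct computation for the action of local unitaries. Once \ref{it:reduced_GNS} is established, items \ref{it:purity}--\ref{it:Haag-duality} will follow at once because each of those properties is phrased purely in terms of the pair $(\M_A,\M_B)$ of commuting von Neumann algebras on $\H_\omega$.

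For \ref{it:reduced_GNS} I would treat correlation-equivalence and local unitaries separately. If $(\omega\up1,\A\up1)\sim(\omega\up2,\A\up2)$, then \cref{thm:bipartite_states}.\ref{it:GNS_equiv} furnishes a common Hilbert space realization of the two GNS triples with $\pi_{\omega\up1}(ab)=\pi_{\omega\up2}(ab)$ for all $(a,b)\in\A_A\times\A_B$ and $\Omega_{\omega\up1}=\Omega_{\omega\up2}$. Specializing $b=1_B$ (and $a=1_A$) gives $\pi_{\omega\up1}|_{\A_A}=\pi_{\omega\up2}|_{\A_A}$ and the same on $\A_B$, so the reduced GNS tuples literally coincide. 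For a local unitary $u=u_Au_B$ with $u_j\in\A_j$, set $\omega'=\omega(u\placeholder u^*)$ and verify that $(\pi_\omega,\H_\omega,\pi_\omega(u)^*\Omega_\omega)$ is a GNS triple for $\omega'$: cyclicity of $\pi_\omega(u)^*\Omega_\omega$ follows because $\pi_\omega(u)^*$ is unitary, and a direct calculation $\langle\pi_\omega(u)^*\Omega_\omega,\pi_\omega(x)\pi_\omega(u)^*\Omega_\omega\rangle=\omega(uxu^*)=\omega'(x)$ recovers $\omega'$ as the vector state. By GNS uniqueness this realization is $(\pi_{\omega'},\H_{\omega'},\Omega_{\omega'})$, so the von Neumann algebras generated by $\pi_{\omega'}(\A_A)$ and $\pi_{\omega'}(\A_B)$ are literally $\M_A,\M_B$ and only the cyclic vector has changed. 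Now $V:=\pi_\omega(u_A)\pi_\omega(u_B)\in\M_A\cdot\M_B$ is a unitary with $V\Omega_{\omega'}=\Omega_\omega$, and since its first factor lies in $\M_A$ and its second in $\M_B$ it normalizes each, i.e.\ $V\M_jV^*=\M_j$ for $j=A,B$. This yields the desired unitary equivalence of $(\M_A,\M_B,\H_\omega,\Omega_\omega)$; on the reduced representations themselves, conjugation by $V$ realizes $\pi_\omega|_{\A_j}$ as $\pi_\omega\circ\operatorname{Ad}_{u_j}$, which is the precise meaning of "up to unitary equivalence" at the representation level.

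With \ref{it:reduced_GNS} in hand, \ref{it:purity}--\ref{it:Haag-duality} are immediate: purity of $\omega$ is equivalent to irreducibility of $\pi_\omega$, that is $\M_A'\cap\M_B'=\CC\1$; the split property is the existence of a type I factor interpolating $\M_A\subset\N\subset\M_B'$; and Haag-duality for a pure state reads $\M_A=\M_B'$. Each is a condition on the invariant pair $(\M_A,\M_B)$ on $\H_\omega$ and therefore inherits invariance from \ref{it:reduced_GNS}. The conceptual crux --- and the step I would flag as the main obstacle --- is the correct interpretation of "up to unitary equivalence" for the reduced GNS representation itself under local unitaries: the intertwiner $V$ does not fix $\pi_\omega|_{\A_j}$ but twists it by the inner automorphism $\operatorname{Ad}_{u_j}$, and this twist only vanishes after passing to the pair $(\M_A,\M_B)$. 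That passage is precisely what makes the downstream invariants in \ref{it:purity}--\ref{it:Haag-duality} unambiguously preserved.
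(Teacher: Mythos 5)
Your proposal is correct and follows essentially the same route as the paper: correlation-equivalence is handled via \cref{thm:bipartite_states}.\ref{it:GNS_equiv}, local unitaries by a direct GNS computation (which the paper dismisses as "clear" but you rightly spell out, including that the intertwiner $\pi_\omega(u_A)\pi_\omega(u_B)$ normalizes each $\M_j$), and items \ref{it:purity}--\ref{it:Haag-duality} are reduced to the invariance of $(\M_A,\M_B,\H_\omega,\Omega_\omega)$. Your formulation of irreducibility as $\M_A'\cap\M_B'=\CC\1$ is equivalent to the paper's $\M_A\vee\M_B=\B(\H_\omega)$ by taking commutants, so there is no gap.
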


\begin{proof}
    \ref{it:reduced_GNS}:
    That these representations only change up to unitary equivalence if $\omega$ is replaced by $u_Au_B\omega u_B^*u_A^*$ is clear.
    The independence of the bipartite algebra is a special case of \cref{thm:bipartite_states}.\ref{it:GNS_equiv}.

    \ref{it:purity}:
    The purity of a state $\omega$ on a bipartite algebra $\A$ is equivalent to the irreducibility of the GNS representation $\pi_\omega$.
    The irreducibility holds if and only if $\pi_\omega(\A)'' = \M_A\vee\M_B$ is equal to $\B(\H_\omega)$.
    Therefore, the claim follows from \cref{it:reduced_GNS} because $\M_A,\M_B\subset \B(\H_\omega)$ is a correlation invariant.

    Similarly, \ref{it:split_prop} and \ref{it:Haag-duality} follow from \ref{it:reduced_GNS} because they only depend on the way that $\M_A$ and $\M_B$ act on $\H_\omega$.
\end{proof}

There is an intimate connection between \emph{pure} bipartite states and irreducible subfactor inclusions. 
An inclusion of factors $\R \subset \S$ is called irreducible if the relative commutant is trivial, i.e.\ if $\R'\cap \S=\CC1$.
This is summarized in the following:

\begin{prop}\label{thm:pure_states_and_subfactors}
\begin{enumerate}[(1)]
    \item 
        If $\omega$ is a pure bipartite state, $\M_A$ and $\M_B$ are factors acting on $\H_\omega$ and $\M_A \subset \M_B'$ is an irreducible subfactor inclusion.
        If $\A_A$ and $\A_B$ are separable, $\H_\omega$ is separable.
    \item 
        For every irreducible subfactor inclusion $\R\subset \S$ acting on a (separable) Hilbert space $\K$ there are (separable) $C^*$-algebras $\A_A$ and $\A_B$, and a pure bipartite state $\omega$ so that $\K=\H_\omega$, $\M_A=\R$ and $\M_B=\S'$.
\end{enumerate} 
\end{prop}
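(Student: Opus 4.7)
My plan is to prove the two parts using the correspondence between commuting von Neumann algebras with a common cyclic vector and pure states on bipartite algebras, together with the basic structure of subfactor inclusions.

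For part (1), I would start from the fact (established in the proof of \cref{thm:correlation_invariants}.\ref{it:purity}) that $\omega$ is pure if and only if $\pi_\omega(\A)'' = \B(\H_\omega)$. Since $\A$ is generated as a $C^*$-algebra by $\A_A \cup \A_B$, this is equivalent to $\M_A \vee \M_B = \B(\H_\omega)$. Taking commutants gives
\[
    \M_A' \cap \M_B' = (\M_A \vee \M_B)' = \CC 1,
\]
which is exactly the irreducibility of the inclusion $\M_A \subset \M_B'$. To see that $\M_A$ and $\M_B$ are factors, observe that $\M_A \subset \M_B'$ forces $Z(\M_A) = \M_A \cap \M_A' \subset \M_B' \cap \M_A' = \CC1$, and symmetrically $Z(\M_B) = \CC 1$. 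For separability, the commuting products $\{ab : a\in\A_A,\,b\in\A_B\}$ span a norm-dense subspace of $\A$, so $\A$ is separable and hence so is $\H_\omega$.

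For part (2), the main idea is to realize the GNS representation as the identity representation on $\K$. Given an irreducible subfactor inclusion $\R \subset \S$, the factoriality of $\S$ gives $\R \cap \S' \subset \S \cap \S' = \CC 1$, so trivial intersection of $\R$ and $\S'$ comes for free. Irreducibility of the inclusion reads $\R' \cap \S = \CC 1$, and taking bicommutants yields
\[
    (\R \vee \S')' = \R' \cap \S''= \R' \cap \S = \CC 1,
\]
so $\R \vee \S' = \B(\K)$. Thus every unit vector $\Omega \in \K$ is cyclic for $\R \vee \S'$. Now choose unital $C^*$-subalgebras $\A_A \subset \R$ and $\A_B \subset \S'$ that are $\sigma$-weakly dense; in the separable case these can be taken norm-separable because the strong operator topology on the unit ball of $\B(\K)$ is metrizable, so each von Neumann algebra has a countable strongly dense subset. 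Since $\R$ and $\S'$ commute and have trivial intersection, the same holds for $\A_A$ and $\A_B$; let $\A \subset \B(\K)$ be the $C^*$-algebra they generate. Then $\A'' \supset \A_A'' \vee \A_B'' = \R \vee \S' = \B(\K)$, so $\A$ acts irreducibly on $\K$ with $\Omega$ cyclic.

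Finally, define $\omega(x) = \ip\Omega{x\Omega}$ for $x \in \A$. Its GNS triple is unitarily equivalent to $(\id_\A, \K, \Omega)$, which is irreducible, so $\omega$ is pure and a bipartite state on $\A$. By construction $\M_A = \pi_\omega(\A_A)'' = \A_A'' = \R$ and $\M_B = \A_B'' = \S'$. The main subtlety is producing the dense subalgebras $\A_A, \A_B$ in the separable case, but this is a standard application of the metrizability of the strong operator topology on bounded sets; the potentially awkward requirement $\A_A \cap \A_B = \CC 1$ is handled automatically by the factor property of $\S$.
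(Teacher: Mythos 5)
Your proposal is correct and follows essentially the same route as the paper: part (1) via $\M_A\vee\M_B=\B(\H_\omega)$ forcing trivial centers and trivial relative commutant, and part (2) by picking $\sigma$-weakly dense (separable) $C^*$-subalgebras of $\R$ and $\S'$ and taking any vector state, whose GNS representation is the irreducible identity representation. The extra details you supply (metrizability of the strong topology on bounded sets for separability, and $\R\cap\S'\subset\S\cap\S'=\CC1$ for trivial intersection) are correct fillings of steps the paper leaves implicit.
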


\begin{proof}
    The fact that $\M_A$ and $\M_B$ are factors follows as they jointly generate $\B(\H)$ and commute with each other so that the center of either algebra is contained in the center of $\B(\H)$ which is trivial.
    A similar argument shows $\M_A'\cap \M_B'=\CC\1$. 
    For the second item, pick (separable) $\sigma$-weakly dense $C^*$-subalgebras $\A_A\subset \R$ and $\A_B\subset \S'$ and define $\A\subset\B(\K)$ as the $C^*$-algebra generated by $\A_A$ and $\A_B$.
    Now pick any vector $\Omega\in \K$ and define $\omega$ as the corresponding vector state on $\A$.
    By construction $\A''=\B(\K)$, so that the identity is an irreducible representation of $\A$.
    In particular, $\Omega$ is a cyclic vector so that the GNS representation of $\omega$ is the identity, and $\omega$ is pure since the GNS representation is irreducible.
    By construction we also have $\M_A = \R$ and $\M_B = \S'$.
\end{proof}

As a consequence of \cref{thm:correlation_invariants}, we get that this subfactor inclusion $\M_A\subset\M_B'$ is a correlation invariant for pure bipartite states.
In particular, the Jones index $[\M_A:\M_B']$ of the subfactor inclusion, which has a physical interpretation in quantum field theory, is a correlation invariant (see \cite{kawahigashi} for an introduction to subfactor theory and the Jones index).

\begin{rem}
    The requirement that a bipartite algebra $\A$ is generated by the local algebras in norm makes sense from a $C^*$-algebraic point of view and certainly covers many examples.
    It does, however, not cover natural von Neumann algebraic examples such as $\A_j=\B(\H_j)$ and $\A=\B(\H_A\ox\H_B)$ if  both Hilbert spaces are infinite-dimensional.
    The solution here is obvious; one should take weak topologies into account when dealing with von Neumann algebras. 
    But there is also an alternative solution that bypasses the category of von Neumann algebras by making the allowed bipartite algebras state dependent:
    Define a {\bf bipartite system} to be a $C^*$-algebra $\A$ together with a state $\omega$ and $^*$-embeddings $\A_j\hookrightarrow\A$ with commuting ranges so that $\A_A\cap\A_B=\CC1$ and so that the $C^*$-algebra generated by $\A_A$ and $\A_B$ is dense in the topology generated by the seminorms $x\mapsto \omega(x^*x)$ and $x\mapsto \omega(xx^*)$.
    For such a bipartite system, one still obtains a canonical $^*$-homomorphism $\phi:\A_{max}\to\A$, which is, however, no longer surjective. Instead, its range is dense in the weak topology induced by $\omega$.
    This suffices to prove that the equivalence relation and notion of correlation invariants generalizes to bipartite systems so that \cref{thm:correlation_invariants} remains true.
\end{rem}

\subsection{Tame and wild bipartite states}\label{sec:tame_wild}

The maximal $C^*$-tensor product $\A_{max}=\A_A\ox_{max}\A_B$ is the unique bipartite algebra on which all bipartite states can be represented.
We denote its state space by $\states_{max}$.
The state space $\states(\A)$ of every other bipartite algebra can be regarded as a closed subset of $\states_{max}$ \cite{lance1982tensor}.\footnote{The embedding $\states(\A)\hookrightarrow\states_{max}$ is the dual map of the canonical surjective $^*$-homomorphism $\phi:\A_{max}\to\A$.}
The most important instance of this embedding is that state space $\states_{min}$ of the minimal $C^*$-tensor product is a closed subset of $\states_{max}$.
That a bipartite state is correlation-equivalent to a state in $\states_{min}$ can be regarded as a regularity condition which is, however, not very strict, as the following result shows:

\begin{lem}\label{thm:min_states}
    Let $\omega$ be a bipartite state. The following are equivalent:
    \begin{enumerate}[(a)]
        \item\label{it:factor_min} 
            $\omega$ is correlation-equivalent to a state on the minimal $C^*$-tensor product.
        \item\label{it:singular_state}
            There exist representations $\pi_j:\A_j\to\B(\H_j)$ and a possibly singular state $\eta$ on $\B(\H_A\ox\H_B)$ such that 
            \begin{equation}\label{eq:with_singular_state}
                \omega(ab)=\eta(\pi_A(a)\ox\pi_B(b))\qquad\forall(a,b)\in\A_A\times\A_B.
            \end{equation}
    \end{enumerate} 
\end{lem}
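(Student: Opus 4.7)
The plan is to prove the two implications separately, relying on the universal property of the minimal $C^*$-tensor norm together with a Hahn--Banach extension.

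\smallskip

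\textbf{(a) $\Rightarrow$ (b).} Assume $\omega$ is correlation-equivalent to a state $\omega_{min}$ on $\A_A\ox_{min}\A_B$. Pick faithful representations $\pi_j:\A_j\to\B(\H_j)$, $j=A,B$. By definition of the minimal norm, the map $a\ox b\mapsto \pi_A(a)\ox\pi_B(b)$ extends to an isometric embedding of $C^*$-algebras $\iota:\A_A\ox_{min}\A_B\hookrightarrow\B(\H_A\ox\H_B)$. Hence we may regard $\omega_{min}$ as a state defined on the $C^*$-subalgebra $\iota(\A_A\ox_{min}\A_B)\subset\B(\H_A\ox\H_B)$. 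By the Hahn--Banach theorem (applied to the positive linear extension of states, e.g.\ \cite[Prop.~II.6.2.5]{blackadar2006oa}), we may extend $\omega_{min}\circ\iota^{-1}$ to a state $\eta$ on the whole of $\B(\H_A\ox\H_B)$. By construction,
\begin{equation*}
    \eta(\pi_A(a)\ox\pi_B(b))=\omega_{min}(a\ox b)=\omega(ab)\qquad\forall(a,b)\in\A_A\times\A_B,
\end{equation*}
so \eqref{eq:with_singular_state} holds. Note that $\eta$ need not be normal, hence the qualifier "possibly singular".

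\smallskip

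\textbf{(b) $\Rightarrow$ (a).} Let representations $\pi_j:\A_j\to\B(\H_j)$ and a (possibly singular) state $\eta$ on $\B(\H_A\ox\H_B)$ be given. The key point is that there is a $^*$-homomorphism $\pi_A\ox\pi_B:\A_A\ox_{min}\A_B\to\B(\H_A\ox\H_B)$ sending $a\ox b$ to $\pi_A(a)\ox\pi_B(b)$. Indeed, factoring $\pi_j$ through its image yields faithful representations of the quotients $\A_j/\ker\pi_j$; since the minimal $C^*$-norm is spatial, the induced map on the algebraic tensor product extends continuously to the minimal completion (this is the standard functoriality of $\ox_{min}$, cf.\ \cite[Prop.~IV.4.9]{takesaki1}). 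Composing with $\eta$ yields a state
\begin{equation*}
    \omega_{min}:=\eta\circ(\pi_A\ox\pi_B)
\end{equation*}
on $\A_A\ox_{min}\A_B$ which satisfies $\omega_{min}(a\ox b)=\omega(ab)$ for all $(a,b)\in\A_A\times\A_B$. Thus $\omega$ and $\omega_{min}$ are correlation-equivalent, giving (a).

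\smallskip

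The main (minor) obstacle is the existence of the $^*$-homomorphism $\pi_A\ox\pi_B$ on the \emph{minimal} tensor product when $\pi_A,\pi_B$ are not assumed faithful; this is handled by passing to the quotients $\A_j/\ker\pi_j$ and invoking the spatial character of $\norm{\cdot}_{min}$. Everything else is a routine application of Hahn--Banach extension and the universal properties already discussed in this section.
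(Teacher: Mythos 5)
Your proof is correct and follows essentially the same route as the paper: Hahn--Banach extension of the state from the spatially represented minimal tensor product for (a) $\Rightarrow$ (b), and factorization of the product representation through $\ox_{min}$ for (b) $\Rightarrow$ (a). Your extra care about non-faithful $\pi_j$ in the second direction is a valid elaboration of the functoriality fact the paper simply cites from Takesaki.
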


\begin{proof}
    \ref{it:singular_state} $\Rightarrow$ \ref{it:factor_min} is clear since the product representation factorizes through the minimal $C^*$-tensor product \cite[Prop.~IV.4.23]{takesaki1}.
    \ref{it:factor_min} $\Rightarrow$ \ref{it:singular_state}:
    Let $\A_j\subset\B(\H_j)$, $j=A,B$, be faithful representations.
    Then $C^*(\A_A\ox\1\cup\1\ox\A_B)\subset\B(\H_A\ox\H_B)$ is a representation of the minimal $C^*$-tensor product \cite[Prop.~IV.4.23]{takesaki1}.
    Define $\eta$ to be any Hahn-Banach state extension of the state $\omega$ on the subalgebra $\A_{min}$ to $\B(\H_A\ox\H_B)$.
\end{proof}

We want to define a notion for bipartite states that captures whether the correlations can be simulated using shared randomness and "ordinary quantum mechanics" only.
It is suggestive to just ask for the state $\eta$ in \cref{eq:with_singular_state} to be implemented by a density operator, i.e.\ that we can write $\omega$ as
\begin{equation}\label{eq:ordinaryQM}
    \omega(ab) = \tr[\rho(\pi_A(a)\ox \pi_B(b))]\qquad\forall(a,b)\in\A_A\times\A_B.
\end{equation}
This, however, also excludes certain bipartite states on classical systems.%
\footnote{Take for example $\A_j=C([0,1])$. If the bipartite state $\omega$ is the uniform probability distribution on the diagonal of the unit square $[0,1]^{\times 2}$, then the product of the marginals is the Lebsgue measure. Evidently, the diagonal is a null set, and $\omega$ is not absolutely continuous.}
The reason is that not all shared randomness can be cast into a bipartite Hilbert space setting.
To account for this, we allow for an arbitrary classical system shared by Alice and Bob:

\begin{defin}\label{def:tame}
    A bipartite state $\omega$ is {\bf tame} if there is a probability space $(X,\mu)$, a $w^*$-Borel measurable map $X\ni x\mapsto \omega_x\in\states_{max}$ such that each $\omega_x$ has a representation of the form \eqref{eq:ordinaryQM} for which 
     \begin{equation}\label{eq:mixed_tame}
         \omega(ab) = \int_X \omega_x(ab)\,d\mu(x)\qquad\forall(a,b)\in\A_A\times\A_B.
    \end{equation}
    All other bipartite states will be called {\bf wild}.
\end{defin}

It is now clear that all classical bipartite states are tame: If $\A_j=C(X_j)$, $j=A, B$, and bipartite state $\omega$ corresponds to a probability measure $\mu$ on $X=X_A\times X_B$, then $\omega(f)=\int_X f(x,y)\,d\mu(x,y) =\int_X \delta_x\ox \delta_y(f)\,d\mu(x,y)$.
Furthermore, we have:

\begin{lem}\label{thm:mixed_tame}
    \begin{enumerate}[(1)]
        \item\label{it:mixed_tame_min} Every tame bipartite state is correlation-equivalent to a state on the minimal $C^*$-tensor product.
        \item\label{it:mixed_tame_conv} The set $\states_{tame}$ of tame bipartite states on $\A_{min}$ is a convex dense subset of $\states_{min}$.
    \end{enumerate} 
\end{lem}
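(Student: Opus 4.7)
For part~(1), the plan is to lift the integral representation of $\omega$ from $\states_{max}$ down to $\states_{min}$. Each $\omega_x$ has the form~\eqref{eq:ordinaryQM}, so the product representation $\pi_A^x \otimes \pi_B^x$ factors through the minimal $C^*$-tensor product by~\cite[Prop.~IV.4.23]{takesaki1}, and hence $\omega_x$ lies in the image of the canonical injection $\iota\colon\states_{min}\hookrightarrow\states_{max}$ dual to the surjection $\phi\colon\A_{max}\to\A_{min}$. Let $\tilde\omega_x\in\states_{min}$ be its unique preimage. Measurability of $x\mapsto\tilde\omega_x$ is handled by observing that for any $y\in\A_{min}$ and any lift $z\in\phi^{-1}(y)$ one has $\tilde\omega_x(y)=\omega_x(z)$, which is Borel in $x$ by hypothesis; hence $\tilde\omega(y):=\int_X \tilde\omega_x(y)\,d\mu(x)$ defines a state on $\A_{min}$ that agrees with $\omega$ on pairs $(a,b)\in\A_A\times\A_B$, yielding the desired correlation-equivalence.

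For part~(2), convexity of $\states_{tame}$ is routine: given two tame states with representing data $(X_j,\mu_j,\{\omega_{j,x}\})$ and a weight $t\in[0,1]$, I form the disjoint-union probability space $(X_1\sqcup X_2,\, t\mu_1+(1-t)\mu_2)$ and merge the two measurable families into one, realizing $t\omega_1+(1-t)\omega_2$ as tame.

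The substantive step is $w^*$-density. I fix faithful universal representations $\pi_j^u\colon\A_j\to\B(\H_j^u)$ so that the product representation $\pi^u\colon\A_{min}\to\B(\H^u_A\otimes\H^u_B)$ is faithful by the definition of the minimal norm. Every density operator $\rho$ on $\H^u_A\otimes\H^u_B$ then yields a tame state $\omega_\rho(y):=\tr[\rho\,\pi^u(y)]$ (take $\mu$ a point mass), so the set $\mathcal{T}_0$ of such $\omega_\rho$ is a convex subset of $\states_{tame}$. To see that $\mathcal{T}_0$ is $w^*$-dense in $\states_{min}$, I apply Hahn-Banach: were it not dense, after taking the real part of the separating functional there would exist a self-adjoint $y\in\A_{min}$ and $c\in\RR$ with $\omega_\rho(y)\leq c$ for every density $\rho$, while $\omega_0(y)>c$ for some $\omega_0\in\states_{min}$. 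But $\sup_\rho \omega_\rho(y)=\sup\sigma(\pi^u(y))=\sup\sigma_{\A_{min}}(y)$ (since $\pi^u$ is faithful), and any state on a unital $C^*$-algebra satisfies $\omega_0(y)\leq\sup\sigma(y)$, a contradiction. The density step is the main conceptual obstacle; the rest is measurability bookkeeping and a disjoint-union construction.
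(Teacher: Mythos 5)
Your proof is correct, and it is worth comparing with the paper's, because you do strictly more work in two places. For part~(1) the paper argues more abstractly: states of the form \eqref{eq:ordinaryQM} are correlation-equivalent to states on $\A_{min}$, the embedding $\states_{min}\hookrightarrow\states_{max}$ has $w^*$-closed convex image, and a tame state is a barycenter, hence lies in the $w^*$-closed convex hull of such states and therefore in $\states_{min}$. Your lift-and-integrate construction of $\tilde\omega$ reaches the same conclusion by exhibiting the state on $\A_{min}$ explicitly; the measurability bookkeeping via lifts $z\in\phi^{-1}(y)$ is the price you pay for avoiding the closedness argument, and it is handled correctly (injectivity of $\iota=\phi^*$ follows from surjectivity of $\phi$, and positivity and normalization of $\tilde\omega$ are immediate). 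For part~(2) the convexity argument via the disjoint-union probability space is identical to the paper's. The $w^*$-density, however, is asserted but not actually proved in the paper; your Hahn--Banach separation argument, reducing to $\sup_\rho\tr[\rho\,\pi^u(y)]=\sup\sigma(\pi^u(y))=\sup\sigma_{\A_{min}}(y)\ge\omega_0(y)$ for self-adjoint $y$ (using faithfulness of $\pi^u$ and $y\le\sup\sigma(y)\cdot 1$), is exactly the standard proof that the normal states of a faithful representation are $w^*$-dense in the full state space, and it closes that gap. The only point to state explicitly is that the separating $w^*$-continuous functional is evaluation at some $z\in\A_{min}$ and that one may replace $z$ by $(z+z^*)/2$ because all functionals involved are hermitian; you gesture at this with ``taking the real part'' and it is fine.
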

\begin{proof}
    \ref{it:mixed_tame_min}: States of the form $\tr[\rho(\pi_{A}(a)\ox\pi_{B}(b))]$ are correlation-equivalent to states on the minimal $C^*$-tensor product.
    By definition, tame states are elements of the $w^*$-closed convex hull of such states and, hence, also correlation-equivalent to states on the minimal $C^*$-tensor product.

    \ref{it:mixed_tame_conv}: Let $\omega\up1$ and $\omega\up2$ be tame bipartite states, $0<p<1$, and $\omega=p\omega\up1+(1-p)\omega\up2$.
    Let $(X\up i,\mu\up i)$ and $X\up i\ni x\mapsto\omega_{x}\up i$ be as in \cref{def:tame} with respect to $\omega\up i$.
    Then $X = X\up1 \cup X\up2$ is a probability space with $\mu=p\mu\up1+(1-p)\mu\up2$. 
    If we define a preparation $x\mapsto\omega_x$ piece-wise such that it is equal to  $\omega\up i_x$ on  $X\up i$ then we get $\omega(ab)=\int_X \omega_x\,d\mu(x)$.
\end{proof}

The main theorem of this section is a characterization of the tame property for pure states.
To state it, we recall some notions:
A state $\nu$ on a $C^*$-algebra $\B$ is called a \emph{factor state} if its GNS representation $\pi_\nu :\B \to \B(\H_\nu)$ is such that $\pi_\nu(\B)''$ is a factor.
It is called a \emph{type I (II, III) state} if $\pi_\nu(\B)''$ is a type I (II, III) von Neumann algebra.
The fact that every von Neumann algebra is a direct sum of type I, a type II, and a type III algebra implies that every state $\nu$ has a unique convex decomposition into a type I, a type II, and a type III state.
For factor states, it follows that they are either of type I, II, or III.
For example, every pure state is a type I factor state, and every state on $\MM_n$ is type I.
Two states $\rho$ and $\nu$ are \emph{quasi-equivalent}, if the GNS representations $\pi_\rho$ and $\pi_\nu$ can be intertwined with a normal $^*$-isomorphism $\phi:\pi_\rho(\B)''\to\pi_\sigma(\B)''$ in the sense that $\pi_\sigma=\pi_\rho\circ\phi$.
Quasi-equivalence is a rather loose notion of equivalence in terms of the physical properties of the state, e.g.\ \emph{all} states on $\MM_n$ are quasi-equivalent.
One can also understand quasi-equivalence as unitary equivalence up to multiplicity \cite[Sec.~2.4.4]{bratteli1}.

\begin{thm}\label{thm:pure_tame_states}
    Let $\omega$ be a pure state on a bipartite algebra $\A$. The following are equivalent:
    \begin{enumerate}[(a)]
        \item\label{it:tame}
            $\omega$ is tame.
        \item\label{it:one_typeI}
            Either $\omega_A$ or $\omega_B$ is a type I state.
        \item \label{it:both_typeI}
            Both $\omega_A$ and $\omega_B$ are type I states.
        \item\label{it:normal_form}  
            There are irreducible representations $\pi_j:\A_j\to\B(\H_j)$, $j=A,B$, and a vector $\Omega\in\H_A\ox\H_B$ so that $\ip\Omega{\pi_A(a)\ox\pi_B(b)\Omega}=\omega(ab)$ for all $(a,b)\in\A_A\times\A_B$.
        \item\label{it:quasi_equiv}
            $\omega$ is quasi-equivalent to a product state.
    \end{enumerate}
    If these properties hold, the representations $\pi_j$ and the vector $\Omega$ are unique up to unitary equivalence.
\end{thm}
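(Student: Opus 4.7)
The plan is to establish the equivalences via the chain $(d)\Rightarrow(a)\Rightarrow(b)\Rightarrow(c)\Rightarrow(d)$, together with the separate equivalence $(d)\Leftrightarrow(e)$. The implications $(d)\Rightarrow(a)$ (take a single-point probability space) and $(c)\Rightarrow(b)$ are immediate.

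The central and hardest step is $(a)\Rightarrow(b)$, where I would use purity twice to collapse the tame decomposition down to a single vector state. Given a tame representation $\omega=\int_X\omega_x\,d\mu(x)$, for every Borel $S\subseteq X$ the positive functional $\sigma_S(a):=\int_S\omega_x(a)\,d\mu(x)$ is dominated by $\omega$. Since pure states are characterized by the property that every positive functional they dominate is a scalar multiple of them, this forces $\sigma_S=\mu(S)\omega$ for all Borel $S$. Combined with $w^*$-measurability of $x\mapsto\omega_x$, this yields $\omega_x=\omega$ for $\mu$-a.e.\ $x$ (using a countable dense subset of the dense $^*$-subalgebra $\A_A\cdot\A_B\subset\A$), so $\omega$ itself admits a representation of the form $\omega(ab)=\tr[\rho(\pi_A(a)\ox\pi_B(b))]$. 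Diagonalizing $\rho=\sum_i\lambda_i\kettbra{\psi_i}$ and applying the same collapse once more to the convex decomposition $\omega=\sum_i\lambda_i\omega_i$ reduces $\rho$ to a rank-one projection, yielding $\omega(ab)=\ip{\psi}{(\pi_A(a)\ox\pi_B(b))\psi}$ for a unit vector $\psi\in\H_A\ox\H_B$. The marginal $\omega_A(a)=\ip{\psi}{(\pi_A(a)\ox\1)\psi}$ is then a vector state on the type I algebra $\pi_A(\A_A)''\subset\B(\H_A)$, hence a type I state.

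For $(b)\Rightarrow(c)$, I would invoke the structural theory of commuting type I factors: if $\M_A$ is a type I factor acting on $\H_\omega$, there is a tensor decomposition $\H_\omega=\H_A\ox\H_B$ in which $\M_A=\B(\H_A)\ox\1$, so $\M_B\subset\M_A'=\1\ox\B(\H_B)$; purity ($\M_A\vee\M_B=\B(\H_\omega)$) then forces $\M_B=\1\ox\B(\H_B)$, which is type I. The same splitting immediately gives $(c)\Rightarrow(d)$: under the tensor decomposition $\H_\omega=\H_A\ox\H_B$, $\pi_\omega|_{\A_j}$ factors through the respective tensor leg as an irreducible representation $\pi_j:\A_j\to\B(\H_j)$, and $\Omega=\Omega_\omega$ supplies the normal form. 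The uniqueness of $\pi_A,\pi_B,\Omega$ up to unitary equivalence follows from the corresponding uniqueness of the tensor splitting adapted to two commuting type I factors with trivial relative commutant.

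For $(d)\Leftrightarrow(e)$, the forward implication picks any unit vectors $\xi_j\in\H_j$ and observes that the pure product state $\nu(ab)=\ip{\xi_A}{\pi_A(a)\xi_A}\ip{\xi_B}{\pi_B(b)\xi_B}$ is a vector state on the same irreducible representation $\pi_A\ox\pi_B$ as $\omega$, hence unitarily (and a fortiori quasi-)equivalent. Conversely, if $\omega$ is quasi-equivalent to a product state $\nu$, then $\pi_\nu(\A)''\cong\pi_\omega(\A)''=\B(\H_\omega)$ is a type I factor; by \cref{lem:productstate} this factor decomposes as $\pi_{\nu_A}(\A_A)''\,\Bar\ox\,\pi_{\nu_B}(\A_B)''$, and a von Neumann tensor product is a type I factor only when both tensor factors are type I, which transfers via quasi-equivalence of marginals (using that $\pi_\omega(\A_A)''$ is a normal amplification of $\pi_{\omega_A}(\A_A)''$) to the conclusion that $\omega_A,\omega_B$ are type I. The main obstacle I anticipate is making the collapse step in $(a)\Rightarrow(b)$ rigorous in the non-separable setting; this may require either a reduction to a separable $C^*$-subalgebra on which $\omega$ is already determined, or a careful measurable selection argument to upgrade the $a$-wise almost-everywhere equality to a uniform one.
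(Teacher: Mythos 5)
Your overall strategy mirrors the paper's: use purity twice to collapse the tame decomposition, first to a density-operator representation and then to a vector state, and handle (d)$\Leftrightarrow$(e) via the GNS representation and \cref{lem:productstate}. The collapse steps themselves are sound (the paper phrases the first one via the push-forward measure and a barycenter argument, which also sidesteps the separability worry you raise at the end, since a Radon probability measure on the state space whose barycenter is an extreme point must be the point mass).

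However, there is a genuine gap at the last step of your (a)$\Rightarrow$(b). Having reached $\omega(ab)=\ip{\psi}{(\pi_A(a)\ox\pi_B(b))\psi}$, you conclude that $\omega_A$ is type I because it is ``a vector state on the type I algebra $\pi_A(\A_A)''\subset\B(\H_A)$''. But $\pi_A$ is just whatever representation came out of \cref{def:tame}; there is no reason for $\pi_A(\A_A)''$ to be type I --- a von Neumann subalgebra of $\B(\H_A)$ can be of any type, and a vector state in an arbitrary representation need not be a type I state (the GNS representation of a type III state already shows this). Since the type of $\pi_{\omega_A}(\A_A)''$ equals the type of a central summand of $\pi_A(\A_A)''\ox\1$ --- reduction by the cyclic projection, which lies in the commutant, preserves type --- your argument gives no control on the type of $\omega_A$ without first controlling $\pi_A(\A_A)''$. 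This is exactly where purity must be used a third time: restrict $\pi_j$ to the invariant subspaces generated by the Schmidt vectors of $\psi$, observe that $\psi$ is then cyclic for $\pi_A\ox\pi_B$ so that this restricted representation is the GNS representation of $\omega$, and invoke irreducibility of the GNS representation of a pure state to conclude that $\pi_A\ox\pi_B$, and hence $\pi_A$ and $\pi_B$, are irreducible. Only then is $\pi_A(\A_A)''=\B(\H_A)$ type I --- and at that point you have in fact proved (d) directly, which is how the paper organizes the implication. A smaller point of the same flavour: your (b)$\Rightarrow$(c) and (c)$\Rightarrow$(d) silently identify the type of the marginal $\omega_j$ (defined via $\pi_{\omega_j}(\A_j)''$) with the type of $\M_j=\pi_\omega(\A_j)''$; this is true but needs the same reduction argument, which the paper isolates as \cref{thm:pure_tame_states_lemma}.
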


In view of \cref{def:alg} of the Schmidt rank, we can regard tame bipartite pure states as those states whose Schmidt rank is at most countable infinite. 
Wild pure states, however, do not even allow a tensor product splitting separating Alice's and Bob's observables relative to the bipartite state, which prohibits any form of Schmidt decomposition.
Before we give the proof of this theorem, we look at some of its consequences.

\begin{cor}\label{thm:tame_GNS}
    Let $\omega$ be a tame bipartite pure state.
    \begin{enumerate}[(1)]
        \item\label{it:tame_GNS}
            The GNS space $\H_\omega$ has a unique tensor product decomposition $\H_\omega=\H_A\ox\H_B$ such that $\pi_\omega(\A_A)$ acts trivially on $\H_B$ while $\pi_\omega(\A_B)$ acts trivially on $\H_A$.
            The equation $\pi_\omega(ab)=\pi_A(a)\ox\pi_B(b)$ induces the irreducible representations $\pi_j:\A_j\to\B(\H_j)$, $j=A,B$, from \cref{it:normal_form} of \cref{thm:pure_tame_states}.
            In particular, tame bipartite pure states satisfy Haag-duality.
        \item\label{it:tame_GNS_marginal} 
            The GNS representation of the marginals can be computed from the tensor product splitting of $\pi_\omega$ as follows:
            Let $\Omega_\omega = \sum_{\alpha=1}^k \lambda_\alpha \Phi_\alpha^A\ox\Phi_\alpha^B$, $k\in\NN\cup\set\oo$, be the Schmidt decomposition of the GNS vector.
            Set $P_j=\sum_{\alpha=1}^k\kettbra{\Phi_\alpha^j}$. Then the GNS representation of $\omega_A$ is $\H_{\omega_A}=\H_A\ox P_B\H_B$, $\pi_{\omega_A}=\pi_A\ox P_B$ and $\Omega_{\omega_A}=\Omega_\omega$ and similarly for $\omega_B$.
    \end{enumerate} 
\end{cor}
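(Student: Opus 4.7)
The plan is to transport the factorized realization of Theorem~\ref{thm:pure_tame_states}.\ref{it:normal_form} into the GNS picture of $\omega$ via the uniqueness of the GNS construction, and then to read off the GNS data of the marginals from the Schmidt decomposition of $\Omega_\omega$.

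\emph{Part (1).} I would fix irreducible representations $\pi_j:\A_j\to\B(\H_j)$ and a vector $\Omega\in\H_A\ox\H_B$ with $\omega(ab)=\ip{\Omega}{\pi_A(a)\ox\pi_B(b)\Omega}$, as supplied by Theorem~\ref{thm:pure_tame_states}. The universal property of the maximal $C^*$-tensor product produces a representation $\sigma:\A_{max}\to\B(\H_A\ox\H_B)$ sending $a\ox b$ to $\pi_A(a)\ox\pi_B(b)$, and $\Omega$ implements the pullback state $\omega_{max}=\omega\circ\phi$. Since $\pi_A$ and $\pi_B$ are irreducible, $\sigma(\A_{max})'$ reduces to scalars, so $\sigma(\A_{max})''=\B(\H_A\ox\H_B)$ and $\Omega$ is cyclic for $\sigma(\A_{max})$. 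The uniqueness of the GNS construction, combined with Proposition~\ref{thm:bipartite_algebras}.\ref{it:GNS_pullback}, then yields a unitary $U:\H_\omega\to\H_A\ox\H_B$ with $U\pi_\omega(ab)U^*=\pi_A(a)\ox\pi_B(b)$ and $U\Omega_\omega=\Omega$. Under this identification $\M_A=\B(\H_A)\ox 1$ and $\M_B=1\ox\B(\H_B)$ are each other's commutants, so Haag-duality follows.

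For uniqueness of the tensor splitting, any second decomposition $\H_\omega=\H_A'\ox\H_B'$ with the stated triviality properties forces $\M_A\subseteq\B(\H_A')\ox 1$ and $\M_B\subseteq 1\ox\B(\H_B')$; by purity $\M_A\vee\M_B=\B(\H_\omega)$, so both inclusions are equalities. Since the tensor splitting of a Hilbert space associated to a chosen type~I factor is unique up to canonical unitaries on the two factors, the two decompositions agree; this simultaneously gives the uniqueness up to unitary equivalence of $\pi_A,\pi_B$ asserted at the end of Theorem~\ref{thm:pure_tame_states}.

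\emph{Part (2).} Under the identification from Part~(1), $\Omega_\omega=\sum_\alpha\lambda_\alpha\Phi_\alpha^A\ox\Phi_\alpha^B$ is the Schmidt decomposition. Plainly $\omega_A(a)=\ip{\Omega_\omega}{(\pi_A(a)\ox 1)\Omega_\omega}$ and $\Omega_\omega\in\H_A\ox P_B\H_B$, so it remains to show that the closed cyclic subspace $[\pi_A(\A_A)\ox 1\cdot\Omega_\omega]$ equals $\H_A\ox P_B\H_B$. The inclusion ``$\subseteq$'' is immediate from the Schmidt form. For ``$\supseteq$'' I would invoke Kaplansky density: since $\pi_A(\A_A)''=\B(\H_A)$, for each $\beta$ and each $\Psi\in\H_A$ there is a bounded net $a_\gamma\in\A_A$ with $\pi_A(a_\gamma)\to\ketbra{\Psi}{\Phi_\beta^A}$ strongly, whence $(\pi_A(a_\gamma)\ox 1)\Omega_\omega\to\lambda_\beta\,\Psi\ox\Phi_\beta^B$ in norm. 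Since $\lambda_\beta>0$, every $\Psi\ox\Phi_\beta^B$ lies in the cyclic subspace, and varying $\Psi$ and $\beta$ shows it coincides with $\H_A\ox P_B\H_B$. The uniqueness of the GNS construction then identifies $(\H_A\ox P_B\H_B,\,\pi_A(\placeholder)\ox P_B,\,\Omega_\omega)$ with the GNS data of $\omega_A$; the case of $\omega_B$ is symmetric. The main obstacle is the careful bookkeeping between $\A$ and $\A_{max}$ and between $\H_\omega$ and $\H_A\ox\H_B$; the ingredients used (GNS uniqueness, the universal property of $\A_{max}$, Kaplansky density, and the type~I factor structure on Hilbert spaces) are all standard.
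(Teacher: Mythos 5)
Your proposal is correct and follows essentially the same route as the paper: identify the GNS representation of $\omega$ with $(\H_A\ox\H_B,\pi_A\ox\pi_B,\Omega)$ by passing through $\A_{max}$ and invoking \cref{thm:bipartite_algebras}.\ref{it:GNS_pullback} together with GNS uniqueness, then obtain the marginal GNS data by verifying cyclicity of $\Omega_\omega$ for $\pi_A(\A_A)\ox P_B$ via the Schmidt decomposition and the strong density of $\pi_A(\A_A)$ in $\B(\H_A)$. You are somewhat more explicit than the paper in spelling out the Kaplansky-density step and in supplying the uniqueness argument for the tensor splitting, but the underlying ideas coincide.
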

\begin{proof}
    \ref{it:tame_GNS}:
    The tricky part here is to verify well-definedness (this is trivially true if $\A$ is a $C^*$-tensor product).
    Assume that $\A$ is the maximal tensor product, then $\pi_A\ox\pi_B$ is an irreducible representation that realizes $\omega$ as a vector state (induced by $\Omega$).
    Therefore the GNS representation is $(\H_A\ox\H_B,\pi_A\ox\pi_B,\Omega)$ if the bipartite algebra is $\A_{max}$.
    For a general bipartite algebra, we consider the canonical homomorphism $\phi:\A_{max}\to\A$ (see \cref{thm:bipartite_algebras}) and apply \cref{thm:bipartite_algebras}.\ref{it:GNS_pullback} which shows that $\pi_\omega\circ\phi = \pi_{\omega\circ \phi}=\pi_A\ox\pi_B$.
    This implies that $\pi_\omega(ab) = \pi_\omega(\phi(a\ox b))=\pi_A(a)\ox\pi_B(b)$ holds and, hence, is indeed well-defined as a mapping from $\A\to\B(\H_\omega)$.

    \ref{it:tame_GNS_marginal}: It is clear that $\pi_A(\placeholder)\ox P_B$ is a representation in which $\Omega_\omega$ realizes $\omega_A$ as a vector state.
    We only have to prove cyclicity.
    Note that $(\ketbra{\Phi}{\Phi_\beta^A}\ox P_B)\Omega = \lambda_\beta \Phi \ox \Phi^B_\beta$.
    Since $\pi_A$ is irreducible, we thus have $[\pi_A(\A_A)\ox P_B\Omega_\omega]=[\B(\H_A)\ox P_B \Psi] = \H_A\ox P_B\H_B$.
\end{proof}

We see that a pure bipartite state is tame if and only if it has the split property (this is false for mixed states in general).
For the proof of \cref{thm:pure_tame_states}, we need the following Lemma:

\begin{lem}\label{thm:pure_tame_states_lemma}
    Let $\omega$ be a pure state on a bipartite algebra. Then:
    \begin{enumerate}[(1)]
        \item\label{it:neither_or_both} $\M_A$ is type I if and only if $\M_B$ is type I.
        \item\label{it:type=type} The marginals $\omega_j:\A_j\to\CC$ are factor states.
            The type of the marginal $\omega_j$ is the same as the type of the factor $\M_j$.
    \end{enumerate} 
\end{lem}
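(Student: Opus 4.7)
My plan relies on the structural results for pure bipartite states from \cref{thm:pure_states_and_subfactors}: both $\M_A$ and $\M_B$ are factors on $\H_\omega$, $\M_A\subset\M_B'$, and $\M_A\vee\M_B=\B(\H_\omega)$.

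For part (1), suppose $\M_A$ is a type I factor. By the standard classification of type I factors on a Hilbert space, there is an identification $\H_\omega=\H_1\ox\H_2$ with $\M_A=\B(\H_1)\ox\CC\1$ and consequently $\M_A'=\CC\1\ox\B(\H_2)$. Since $\M_B\subset\M_A'$, it takes the form $\M_B=\CC\1\ox\N$ for some von Neumann subalgebra $\N\subset\B(\H_2)$, so that $\M_A\vee\M_B=\B(\H_1)\,\Bar\ox\,\N$. The equality $\M_A\vee\M_B=\B(\H_\omega)=\B(\H_1)\,\Bar\ox\,\B(\H_2)$ then forces $\N=\B(\H_2)$, so $\M_B=\M_A'$ is type I. The converse follows by symmetry; as a by-product one sees that the tameness condition implies Haag-duality, in line with \cref{thm:tame_GNS}.

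For part (2), the plan is to realize the GNS representation of the marginal $\omega_A$ as a subrepresentation of $\pi_\omega|_{\A_A}$ and then identify its bicommutant with $\M_A$. Let $P_A$ be the projection onto the $\M_A$-invariant subspace $\H_A:=[\M_A\Omega_\omega]\subset\H_\omega$; then $P_A\in\M_A'$ and $\Omega_\omega\in\H_A$. Restricting $\pi_\omega|_{\A_A}$ to $\H_A$ produces a cyclic representation of $\A_A$ that implements $\omega_A$, so by uniqueness of the GNS construction it coincides with $(\pi_{\omega_A},\H_{\omega_A},\Omega_{\omega_A})$. Using that compression by a projection in the commutant commutes with taking the bicommutant, I obtain $\pi_{\omega_A}(\A_A)''=\M_A P_A|_{\H_A}$.

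The main obstacle is the last step: showing that the reduction map $r:\M_A\to\pi_{\omega_A}(\A_A)''$, $M\mapsto MP_A|_{\H_A}$, is a \emph{normal $*$-isomorphism}, not merely a homomorphism. Normality and the $*$-homomorphism property are immediate because $P_A\in\M_A'$, so the question is injectivity. The kernel equals $\M_A(\1-z)$, where $z$ is the central support of $P_A$ in $\M_A'$; by general theory $z$ lies in the center of $\M_A$. Since $\M_A$ is a factor and $P_A\ne0$, we get $z=\1$, so $r$ is injective and hence a normal $*$-isomorphism. This simultaneously yields that $\pi_{\omega_A}(\A_A)''$ is a factor (so $\omega_A$ is a factor state) and that its type coincides with that of $\M_A$. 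The argument for $\omega_B$ is identical, completing the proof.
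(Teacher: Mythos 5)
Your proof is correct and follows essentially the same route as the paper: part (1) is the identical tensor-splitting argument forcing $\N=\B(\H_2)$ by irreducibility, and part (2) likewise realizes $\pi_{\omega_A}$ as the compression of $\pi_\omega|_{\A_A}$ to $[\M_A\Omega_\omega]$ by the projection $P_A\in\M_A'$. The only difference is that where the paper cites Kadison--Ringrose for the fact that $\M_A\mapsto \M_AP_A$ is a type-preserving isomorphism, you supply the central-support argument (central support of $P_A$ in $\M_A'$ lies in the trivial center of $\M_A$, hence equals $\1$) explicitly, which is a fine substitute.
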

\begin{proof}[Proof of the Lemma]
    \ref{it:neither_or_both}:
    Assume $\M_A$ is type I. Then there are Hilbert spaces $\V,\W$ so that $\H=\V\ox\W$ and $\M_A = \B(\V)\ox\1_\W$.
    Thus $\M_B= \1\ox \N$ for some factor $\N\subset\B(\W)$ the irreducibility implies $\N=\B(\W)$ and hence $\M_B=\M_A'$.

    \ref{it:type=type}:
    The embedding $\A_j\to\A$ induces an isometry between the GNS spaces $V_j:\H_{\omega_j}\to\H_\omega$. Consider the projection $P_j = V_jV_j^*\in \M_j'$ which projects onto $[\M_j\Omega]\cong \H_{\omega_j}$. 
    Then $\pi_{\omega_j}(\A_j)''\cong P_j \M_j P_j$ which has the same type as $\M_j$ (this can be seen from combining \cite[Cor.~5.5.7]{kadisonringrose1} and \cite[Ex.~6.9.16]{kadisonringrose2}).
\end{proof}

\begin{proof}[Proof of \cref{thm:pure_tame_states}]
    We denote the bipartite algebra on which $\omega$ acts by $\A$.
    The implication \ref{it:normal_form} $\Rightarrow$ \ref{it:tame} is trivial, \ref{it:normal_form} $\Rightarrow$ \ref{it:both_typeI} follows from \cref{it:type=type} of \cref{thm:pure_tame_states_lemma} and \ref{it:both_typeI} $\Leftrightarrow$ \ref{it:one_typeI} is seen from combining \cref{it:neither_or_both,it:type=type} of \cref{thm:pure_tame_states_lemma}.

    \ref{it:tame} $\Rightarrow$ \ref{it:normal_form}: 
    By \cref{thm:mixed_tame} we may assume $\A=\A_{min}$ without loss of generality.
    Denote by $Q$ the set of bipartite states $\omega$ on $\A_{min}$ which can be written as $\omega(a\ox b)=\tr[\rho(\tilde\pi_A(x)\ox\tilde\pi_B(x))]$ for representations $\tilde\pi_j$ and a density operator $\rho$.
    Let $(X,\mu)$ and $x\mapsto\omega_x$ be as in \cref{def:tame} (note that $\omega_x\in Q$).
    Denote by $\nu$ the push-forward measure of $\mu$ with respect to $x\mapsto \omega_x$. Then $\nu$ is a Radon probability measure on $\states(\A_{min})$ whose barycenter is the state $\omega$.
    Since $\omega$ is pure, we have $\nu=\delta_\omega$. This can only be true if $\omega_x=\omega$ holds $\mu$-almost everywhere.
    The only relevant part for us is that this implies $\omega\in Q$ so that there are representations $\tilde\pi_j:\A_j\to\tilde\H_j$ and a density operator $\rho$ on $\tilde\H_A\ox\tilde\H_B$ such that $\omega=\tr[\rho(\tilde\pi_{A}\ox\tilde\pi_{B}(\placeholder))]$.
    Because of the purity of $\omega$, we may replace $\rho$ by a vector state $\Omega\in\tilde\H_A\ox\tilde\H_B$.
    Let $\Omega=\sum_{\alpha=1}^k \lambda_\alpha \Phi_\alpha^A \ox \Phi_\alpha^B$ be its Schmidt decomposition and consider the invariant subspaces $\H_j = [\pi_j(\A_j)\{\Phi_1^j,\ldots,\Phi_k^j\}]$.
    We denote by $\pi_j$ the restriction of $\tilde\pi_j$ to $\H_j$.
    Then $\Omega$ is a cyclic vector for $\pi_A\ox\pi_B$ which implies that $\pi_A\ox\pi_B$ is the GNS representation of $\omega$.
    Therefore $\pi_A\ox\pi_B$ and, hence, $\pi_A$ and $\pi_B$ are irreducible representations.

    \ref{it:both_typeI} $\Rightarrow$ \ref{it:normal_form}: 
    By \cref{thm:pure_tame_states_lemma} both $\M_A$ and $\M_B$ are type I factors. 
    It follows that $\H = \H_A\ox\H_B$ with $\M_A=\B(\H_A)\ox\1$ and $\M_B= \1\ox\B(\H_B)$.
    Therefore we get representations $\pi_j:\A_j\to\B(\H_j)$ so that $\pi(ab) = \pi_A(a)\ox\pi_B(b)$. 
    These representations have to be irreducible as otherwise $\pi$ could not be irreducible.

    \ref{it:quasi_equiv} $\Rightarrow$ \ref{it:both_typeI}: Since $\omega$ is quasi-equivalent to a product state $\rho$, we have a normal $^*$-isomorphism $\phi:\pi_{\omega}(\A)''\to\pi_{\rho}(\A)''$ such that $\phi\circ\pi_{\omega}=\pi_{\rho}$.  \cref{lem:productstate} implies $\pi_{\rho}(\A)''\cong\pi_{\rho_{A}}(\A_{A})''\!\ \Bar\ox\!\ \pi_{\rho_{B}}(\A_{B})''$, such that $\pi_{\rho_{j}}(\A_{j})''\!\ \Bar\ox\!\ \CC\cong_{\phi}\M_{j}$, $j=A,B$. 
    The argument in the proof of \cref{it:type=type} in \cref{thm:pure_tame_states_lemma} shows that $\pi_{\omega_{j}}(\A_{j})''$ and $\M_{j}$ have the same type for $j=A,B$. Finally, by purity of $\omega$ we know that $\pi_{\omega}(\A)''\cong\B(\H_{\omega})$, which implies that $\M_{j}$, and, thus, $\pi_{\omega_{j}}(\A_{j})''$ is of type I for $j=A,B$.
    In particular, $\pi_\omega(\A)''=\pi_\varphi(\A)''$ so that $\varphi$ is quasi-equivalent to $\omega$.

    \ref{it:normal_form} $\Rightarrow$ \ref{it:quasi_equiv}:
    The proof of \cref{thm:tame_GNS} only uses \ref{it:normal_form}, so that we may use the result that the GNS representation of $\omega$ is given by $\pi_\omega(ab)=\pi_{A}(a)\ox\pi_B(b)$ on $\H_\omega=\H_A\ox\H_B$.
    Consider arbitrary unit vectors $\Phi_j\in\H_j$ and set $\varphi_j = \ip{\Phi_j}{\pi_j(\placeholder)\Phi_j}$.
    We get a product state $\varphi$ on $\A$ by $\varphi(ab)=\varphi_A(a)\varphi_B(b)=\ip{\Phi_A\ox\Phi_B}{\pi_\omega(ab)(\Phi_A\ox\Phi_B)}$ (the second equality shows well-definedness).
    Invoking \cref{thm:bipartite_algebras}.\ref{it:GNS_pullback} the GNS representation of $\varphi$ may be constructed from $\H_\varphi=\H_{\varphi_A}\ox\H_{\varphi_B}$, $\pi_\varphi(ab)=\pi_A\ox\pi_B$, and $\Omega_{\varphi_A}\ox\Omega_{\varphi_B}$.
    Clearly, $\pi_\varphi(\A)=\pi_\omega(\A)$ which, in particular, implies quasi-equivalence with $\omega$.
\end{proof}

It turns out that the class of unital $C^*$-algebras with the property that all pure bipartite states with arbitrary other systems are tame coincides with a well-known class of $C^*$-algebras known as type I $C^*$-algebras (see \cite{glimm1961type}, \cite{Arveson1976} or \cite[Ch.~9]{dixmier1982}).
The defining property of a type I $C^*$-algebra $\A$ is that for every representation $\pi:\A\to\B(\H)$ the generated von Neumann algebra $\pi(\A)''\subset\B(\H)$ has type I.

\begin{cor}\label{thm:typeI_cstar}
    Let $\A_A$ be a $C^*$-algebra. The following are equivalent
    \begin{enumerate}[(a)]
        \item\label{it:all_bipartite_tame} For every algebra $\A_B$, every pure bipartite state is tame.
        \item\label{it:typeI_algebra} $\A_A$ is a type I $C^*$-algebra.
    \end{enumerate}
\end{cor}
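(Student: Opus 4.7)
My plan is to reduce both directions to \cref{thm:pure_tame_states}, which characterizes tameness of a pure bipartite state through the type I property of either marginal.

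For \ref{it:typeI_algebra}$\Rightarrow$\ref{it:all_bipartite_tame}, the defining property of a type I $C^*$-algebra is that $\pi(\A_A)''$ is a type I von Neumann algebra for every representation $\pi$. In particular, for any pure bipartite state $\omega$, the marginal $\omega_A$ is a type I state, and hence $\omega$ is tame by the implication \ref{it:one_typeI}$\Rightarrow$\ref{it:tame} of \cref{thm:pure_tame_states}.

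For the converse \ref{it:all_bipartite_tame}$\Rightarrow$\ref{it:typeI_algebra}, I argue by contrapositive. Assuming $\A_A$ is not type I, I invoke the classical Glimm--Sakai characterization of type I $C^*$-algebras (see e.g.\ \cite{glimm1961type,dixmier1982}) to obtain a factor state $\nu$ on $\A_A$ such that $\N:=\pi_\nu(\A_A)''\subset\B(\H_\nu)$ is a non-type-I factor. I then set $\A_B:=\N'\subset\B(\H_\nu)$ and let $\A:=\A_A\ox_{max}\A_B$, which is a bipartite algebra for $\A_A,\A_B$ by \cref{thm:independence}. By the universal property of the maximal $C^*$-tensor product \cite[Prop.~IV.4.23]{takesaki1}, the two $^*$-homomorphisms $\pi_\nu:\A_A\to\B(\H_\nu)$ and the inclusion $\iota_B:\A_B\hookrightarrow\B(\H_\nu)$ have commuting ranges and combine into a $^*$-homomorphism $\rho:\A\to\B(\H_\nu)$ with $\rho(a\ox b)=\pi_\nu(a)b$. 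I then define the state $\omega(x):=\ip{\Omega_\nu}{\rho(x)\Omega_\nu}$.

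It remains to verify that $\omega$ is pure and has a non-type-I marginal. Since $\N$ is a factor, $(\N\vee\N')'=\N\cap\N'=Z(\N)=\CC\1$, hence $\N\vee\N'=\B(\H_\nu)$. Therefore $\rho(\A)''\supset\pi_\nu(\A_A)''\vee\A_B''=\N\vee\N'=\B(\H_\nu)$, so $\rho$ is irreducible; consequently the nonzero vector $\Omega_\nu$ is automatically cyclic, the GNS triple of $\omega$ is identified with $(\H_\nu,\rho,\Omega_\nu)$, and $\omega$ is pure. The $A$-marginal computes to $\omega_A(a)=\ip{\Omega_\nu}{\pi_\nu(a)\Omega_\nu}=\nu(a)$, so its GNS representation coincides with that of $\nu$ and $\pi_{\omega_A}(\A_A)''=\N$ is not of type I. By \cref{thm:pure_tame_states}, $\omega$ is wild, contradicting \ref{it:all_bipartite_tame} and completing the contrapositive.

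The main obstacle is the appeal to Glimm--Sakai to produce a non-type-I factor state of $\A_A$ from the mere failure of the type I condition; once such a $\nu$ is in hand, the remainder is a routine construction using only the universal property of $\A_{max}$ and the standard fact that $\N\vee\N'=\B(\H)$ for any factor $\N\subset\B(\H)$.
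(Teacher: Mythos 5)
Your proof is correct and follows essentially the same route as the paper: the forward direction is the same one-line appeal to \cref{thm:pure_tame_states}, and for the converse you build the same witness (take $\A_B$ to be the commutant of a non-type-I factor representation of $\A_A$, form the vector state on $\A_A\ox_{max}\A_B$, and use $\N\vee\N'=\B(\H)$ to get irreducibility, hence purity, hence wildness via \cref{thm:pure_tame_states}). The only cosmetic difference is that you phrase the converse as a contrapositive starting from a single bad factor state, whereas the paper shows directly that \emph{every} factor representation must generate a type I factor and then cites the Dixmier criterion; both rest on the same classical characterization of type I $C^*$-algebras.
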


\begin{proof}
    \ref{it:typeI_algebra} $\Rightarrow$ \ref{it:all_bipartite_tame}: If $\A_A$ is type I and $\omega$ is a bipartite state, then $\omega_A$ is always a type I state, hence $\omega$ is tame.
    \ref{it:all_bipartite_tame} $\Rightarrow$ \ref{it:typeI_algebra}:
    It suffices to show that every factor representation $\pi$ generates a type I factor \cite[Add.~9.5.9]{dixmier1982}.
    We start by showing for every factor representation $\pi$ on a Hilbert space $\H$ there a $C^*$-algebra $\A_B$ and a bipartite pure state $\omega$ such that $\H_\omega=\H$ and $\pi_\omega|_{\A_A} = \pi$.
    We set $\A_B=\pi(\A)'$ and define $\omega(a\ox b) = \ip{\Omega}{\pi(a)b\Omega}$ for any unit vector $\Omega\in\H$.
    Then $\pi(\A)''\vee \A_B = \B(\H)$ because $\pi$ is a factor representation so that $\rho(a\ox b) = \pi(a)b$ extends to an irreducible representation of $\A_A\ox_{max}\A_B$.
    This implies that $(\pi_\omega,\H_\omega,\Omega_\omega)=(\rho,\H,\Omega)$ so that $\omega$ is pure because its GNS representation is irreducible.
    By assumption, $\omega$ is tame, and thus the marginal $\omega_A$ is a type I state.
    By \cref{thm:pure_tame_states_lemma} this implies that $\pi_\omega(\A_A)''=\pi(\A)''$ is a type I factor which proves that $\pi$ is a type I factor representation.
\end{proof}

\begin{rem}
    Apart from "tame" and "wild" being standard terms to emphasize a certain kind of regularity (or the lack thereof), they appear in von Neumann algebra theory.
    A factor is called tame if it is type I and otherwise wild.
    For pure bipartite states, we saw in \cref{thm:pure_tame_states} that there is a direct connection between tame (resp.\ wild) factors and tame (resp.\ wild) bipartite states.
    Even for mixed states, there is a connection: A bipartite state is mixed if it can be obtained through shared randomness and tame bipartite states between Alice and Bob.
\end{rem}

\begin{rem}
    The obtained results can also be used to give a new proof for the well-known fact that type I $C^*$-algebras are nuclear. 
    If $\A_A$ is a type I $C^*$-algebra and $\A_B$ is any unital $C^*$-algebra, then every pure state $\omega$ on $\A_A\ox_{max}\A_A$ is necessarily tame (because $\omega_A$ is a type I state). 
    Therefore all pure states on the maximal $C^*$-tensor product factorize through the minimal $C^*$-tensor product.
    By the Krein-Milman theorem this implies that all states of $\A\ox_{max}\B$ factor through the minimal $C^*$-tensor product which implies that $\A\ox_{max}\B=\A\ox_{min}\B$.
\end{rem}

\subsection{Connection to Tsirelson's problem}\label{sec:Tsirelsons_prob}

We close this section by making explicit the connection between our approach to the commuting operator framework and the work on correlation bodies in the context of Tsirelson's problem.
The setting is that Alice and Bob are free to choose $n$ different measurements with $k$ outcomes each where $n$ and $k$ are finite integers.
The correlation functions
\begin{equation*}
    p(\alpha,\beta|i,j)
\end{equation*}
describe the probability of Alice measuring outcome $\alpha$ and Bob measuring $\beta$ given that they respectively pick measurements $i$ and $j$.
There are different ways to model this setup mathematically, giving rise to different sets of correlation functions.
These sets are convex subsets $\C_*\subset \RR^{k^2n^2}$ with an index representing the framework in which composite systems are modeled.

The simplest model is finite-dimensional quantum mechanics where Alice's POVMs $\set{M_{i,\alpha}^A}$ act on a finite-dimensional Hilbert space $\H_A$ and Bobs POVMs $\set{M_{j,\beta}^B}$ act on a finite-dimensional Hilbert space $\H_B$.
The state of the joint system would be described by a density operator $\rho$ on the joint Hilbert space $\H_{AB}=\H_A\ox\H_B$ yielding $p(\alpha,\beta|i,j) = \tr[\rho\, M_{i,\alpha}^A\ox M_{j,\beta}^B]$.
The set of correlation functions which can be obtained with finite-dimensional quantum mechanics is denoted $\C_q$.

Slofstra showed in \cite{Slofstra} (see also \cite{Dykema2019}) that $\C_q$ is, in general, not closed.
Its closure is another correlation body denoted by $\C_{qa}$ ("a" is short for "approximation").
If we allow for infinite-dimensional Hilbert spaces $\H_A$ and $\H_B$ and model the bipartite state by a density operator $\rho$, we obtain a correlation body denoted $\C_{qs}$ (the "s" is short for "spatial").
The correlations that one obtains by generalizing from density operators on $\H_A\ox\H_B$ to mere algebraic states on $\rho:\B(\H_A\ox\H_B)\to\CC$, i.e.\ $p(\alpha,\beta|i,j)=\rho(M_{i,\alpha}^A\ox M_{j,\beta}^B)$, are precisely those that finite-dimensional ones can approximate (this follows from \cref{thm:mixed_tame}).
Therefore, $\C_{qa}$ consists of infinite-dimensional correlations where one requires a Hilbert space tensor product $\H_A\ox\H_B$ but allows for singular states.

The most general correlation body is obtained by dropping the requirement of a tensor product separation. Instead, one only requires that $M_{i,\alpha}^A$ and $M_{j,\beta}^B$ are commuting POVMs on a joint Hilbert space $\H_{AB}$. 
The resulting correlation functions $p(\alpha,\beta|i,j)=\tr[\rho M_{i,\alpha}^AM_{j,\beta}^B]$, where $\rho$ is a density operator on $\H_{AB}$, are called the commuting operator correlations. We collect them in a set $\C_{qc}$.
One does not gain more correlations by admitting singular states with respect to commuting observables, as the GNS representation shows.

Since these different frameworks of modeling correlation experiments are increasing in generality, it holds that 
\begin{equation}
    \C_q\subset \C_{qs}\subset \C_{qa} \subset \C_{qc}. 
\end{equation}
By now, each of these inclusions is known to be strict (for sufficiently large $n$ and $k$).
It was proved in \cite{Slofstra,Dykema2019} that $\C_{qs} \subsetneq\C_{qa}$.
To decide the strictness of the inclusion $\C_{qa}\subset\C_{qc}$, i.e.\ to decide whether all commuting operator correlations can be approximated by finite-dimensional ones, is known as Tsirelson's problem \cite{tsirelsons_prob,scholz2008tsirelson} and was famously solved in \cite{Ji2020}.

Consider the universal $C^*$-algebras $\A_A$ and $\A_B$ generated by $n\cdot k$ positive operators $m_{i,\alpha}^j$, $i=1,\ldots,n$, $\alpha=1,\ldots,k$ with the relations $\sum_\alpha m_{i,\alpha}^j = 1$ for all $i$ where $j=A,B$.
The different notions of bipartite states that we discussed are directly connected to the different correlation bodies:
\begin{thm}\label{thm:tsirelsons_prob}
    Let $\omega$ be a pure state on $\A_A\ox_{max}\A_B$ and set $p(\alpha,\beta|i,j)=\omega(m_{i,\alpha}^Am_{j,\beta}^B)$.
    Then $p$ is a commuting operator framework correlation function, i.e.\  $p\in\C_{qc}$. Furthermore,
    \begin{enumerate}[(1)]
        \item\label{it:Cq} $p\in\C_q$ if and only if $\omega$ has finite Schmidt rank,
        \item\label{it:Cqs} $p\in\C_{qs}$ if and only if $\omega$ is tame,
        \item\label{it:Cqa} $p\in\C_{qa}$ if and only if $\omega$ factors through the minimal $C^*$-tensor product $\A_A\ox_{min}\A_B$.
    \end{enumerate} 
\end{thm}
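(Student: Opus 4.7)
The initial statement $p\in\C_{qc}$ is immediate from the GNS representation $(\pi_\omega,\H_\omega,\Omega_\omega)$: the families $\{\pi_\omega(m_{i,\alpha}^A)\}$ and $\{\pi_\omega(m_{j,\beta}^B)\}$ form commuting POVMs on $\H_\omega$, and the equality $p(\alpha,\beta|i,j)=\ip{\Omega_\omega}{\pi_\omega(m_{i,\alpha}^A)\pi_\omega(m_{j,\beta}^B)\Omega_\omega}$ exhibits $p$ as a commuting-operator correlation.

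The plan for each of the three equivalences is to deploy the matching structural characterization established earlier in the paper, translating between the analytic definition of the respective correlation body and the algebraic property of $\omega$.

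For the $(\Leftarrow)$ direction of part \ref{it:Cq}, I would invoke the compression formulation of the Schmidt rank together with the tensor-splitting property developed in \cref{sec:Schmidt}: a minimal compression $(C_A,C_B,\K_A\ox\K_B,\Psi)$ with $\dim\K_j<\infty$ directly provides finite-dimensional POVMs $C_A(m_{i,\alpha}^A)$ and $C_B(m_{j,\beta}^B)$ together with the pure vector $\Psi$ reproducing $p$, so $p\in\C_q$. For the $(\Rightarrow)$ direction, I would start from a finite-dimensional realization $(\rho,M_{i,\alpha}^A,M_{j,\beta}^B)$ on $\H_A\ox\H_B$ and use the universal property of $\A_A$ and $\A_B$ (they are universal $C^*$-algebras generated by their POVMs) to obtain $^*$-homomorphisms $\pi_j:\A_j\to\B(\H_j)$. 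After purifying $\rho$ to a unit vector $\Psi$, the tuple $(\pi_A,\pi_B,\H_A\ox\H_B,\Psi)$ is meant to serve as a finite-dimensional compression of $\omega$, giving $\SR(\omega)\le\dim\H_A$.

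Parts \ref{it:Cqs} and \ref{it:Cqa} follow the same pattern. In \ref{it:Cqs}, the compression definition is replaced by \cref{thm:pure_tame_states}, which identifies tame pure states with vector states on $\H_A\ox\H_B$ arising from irreducible local representations. In \ref{it:Cqa}, it is replaced by \cref{thm:min_states}, which characterizes bipartite states factoring through $\A_{min}$ via a (possibly singular) state on $\B(\H_A\ox\H_B)$. Each direction then becomes a dictionary entry between the correlation-body description and the algebraic criterion.

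The main obstacle in all three parts is the $(\Rightarrow)$ direction. The universal property produces $^*$-homomorphisms $\pi_j$ from any realization of $p$, yielding a state on $\A_{max}$ that agrees with $\omega$ only on the generator products $m_{i,\alpha}^A\cdot m_{j,\beta}^B$; to conclude that $\omega$ itself inherits the desired structural property, one must argue that the local representations $\pi_j$ lifted from the realization of $p$ coincide, up to unitary equivalence, with the reduced GNS data of $\omega$. This is the substantive technical step, and it will rely on purity of $\omega$ together with the correlation-invariance of the reduced GNS representation recorded in \cref{thm:correlation_invariants}.\ref{it:reduced_GNS}, which together pin down $\omega$ (up to the relevant equivalence) from its action on the generating POVMs.
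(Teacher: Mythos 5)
Your overall strategy coincides with the paper's: GNS representation for membership in $\C_{qc}$, and the three structural characterizations (the compression/algebraic definitions from \cref{thm:schmidt_rank}, the tameness criterion \cref{thm:pure_tame_states}, and \cref{thm:min_states}) for the three items. The ($\Leftarrow$) directions you sketch are correct and match the paper. The genuine problem lies in your proposed resolution of the ($\Rightarrow$) directions. You rightly observe that a realization of $p$ only reproduces $\omega$ on the generator products, but your fix --- that ``purity of $\omega$ together with the correlation-invariance of the reduced GNS representation \ldots pin down $\omega$ (up to the relevant equivalence) from its action on the generating POVMs'' --- does not work. Correlation-equivalence in the sense of \cref{def:corr_equiv}, and hence the invariance statement \cref{thm:correlation_invariants}.\ref{it:reduced_GNS}, requires $\omega\up1(ab)=\omega\up2(ab)$ for \emph{all} $(a,b)\in\A_A\times\A_B$; only then does \cref{thm:bipartite_states} force the reduced GNS data to agree. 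The table $p$ records only the finitely many degree-$(1,1)$ moments $\omega(m_{i,\alpha}^A m_{j,\beta}^B)$. Since $\A_A$ and $\A_B$ are universal $C^*$-algebras (infinite-dimensional and highly noncommutative for $n,k\ge2$), a state on $\A_{max}$ is far from determined by these moments: the state $\tr[\rho\,(\pi_A\ox\pi_B)(\placeholder)]$ built from a finite-dimensional realization of $p$ agrees with $\omega$ only on $\lin\set{m_{i,\alpha}^A m_{j,\beta}^B}$ and will in general differ on longer words, so it need not be correlation-equivalent to $\omega$; purity does not help, since both states can be pure. The ``substantive technical step'' you flag therefore cannot be carried out with the tools you cite.

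For comparison, the paper's own proof is silent on exactly this point: its ``only if'' arguments (lifting POVMs to representations, and for item \ref{it:Cqa} an Arveson-extension argument) show that $p$ is induced by \emph{some} pure state with the stated property, not that the given $\omega$ has it. Under that existential reading of the theorem, the identification step you worry about is unnecessary, and your construction for the ($\Rightarrow$) direction of item \ref{it:Cq} --- purify $\rho$, lift the POVMs to representations, restrict to cyclic subspaces --- already suffices. Under the literal reading (``this $\omega$ has finite Schmidt rank''), the identification step is genuinely needed, and neither your argument nor anything you could import from \cref{thm:correlation_invariants} supplies it; you would have to either adopt the existential reading or add a hypothesis ensuring that $p$ determines $\omega$ on all of $\A_{max}$.
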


The first item makes reference to the Schmidt rank, which will be introduced formally in the next section.
These equivalences are, however, irrelevant here as we only need the algebraic definition in the proof, which assigns to $\omega$ the Schmidt rank of the GNS vector provided that $\omega$ is tame.

\begin{proof}
    That $p\in\C_{qc}$ follows by applying the GNS representation.
    The "only if" parts of items \ref{it:Cq}, \ref{it:Cqs} and \ref{it:Cqa} rely on the basic fact that every collection of POVMs $M_{i,\alpha}^A$ and $M_{j,\beta}^B$ on Hilbert spaces induces representations $\pi_j$ on $\A_j$ sending $m_{i,\alpha}^A$ to $M_{i,\alpha}^A$ and similarly for $B$.

    \ref{it:Cq} follows from the algebraic definition of the Schmidt rank (i.e.\ item \ref{it:alg} of \cref{thm:schmidt_rank}).

    \ref{it:Cqs} follows from \cref{thm:pure_tame_states}.

    \ref{it:Cqa}: 
    This item is true also if $\omega$ is not pure. Recall that $\C_{qa}=\Bar{\C_q}$. 
    Denote by $\C_{qmin}$ the correlation functions induced by states on the minimal $C^*$-tensor product.
    "only if": Denote by $\S\subset \A_A\ox_{min}\A_B$ the self-adjoint unital subspace spanned by elements of the form $m_{i,\alpha}^Am_{j,\beta}^B$.
    Every $p\in\C_{qa}$ defines a state on $\S$, and by Arveson's extension theorem \cite{paulsen2002completely}, we can extend this to a state on the minimal tensor product, showing $\C_{qa}\subseteq \C_{qmin}$.
    "if": Let $p\in\C_{qmin}$.
    By \cref{thm:min_states} we can find representations $\pi_j:\A_j\to\B(\H_j)$ and a possibly singular state $\eta:\B(\H_A\ox\H_B)\to\CC$ so that $\eta(\pi_A(m_{i,\alpha}^A)\ox\pi_B(m_{j,\beta}^B))=p(\alpha,\beta|i,j)$.
    We may $w^*$-approximate the singular state $\eta$ by density operators with finite rank.
    Then we can also restrict POVMs to the ranges of these density operators to get correlation functions in $\C_q$, which approximate $p$.
\end{proof}

Let us comment on the statements in the theorem in cases where $\omega$ is mixed.
The first item only makes sense for pure states as we did not give a definition of the Schmidt rank for mixed states.
The third item also holds if $\omega$ is not pure, as the above proof shows.
However, the equivalence in the second item breaks down for general mixed states.
Instead, it only holds that $p\in \C_{qs}$ implies that $\omega$ is tame.
The reason is that in $\C_{qs}$, the shared randomness between Alice and Bob that is needed to write a mixed state as a mixture of pure ones must be cast into a direct sum of Hilbert spaces.
This is, however, not always possible. 
Our definition of tame states is more liberal as it allows for a classical system where all shared randomness may be stored.
It guarantees that all separable states are tame, whereas it seems to us that $p\in\C_{qs}$ is not guaranteed if $\omega$ is a bipartite separable state on $\A_A\ox_{min}\A_B$.

In view of a recent contribution \cite{cabello2023} to the physical implications of $\C_{qa}\ne\C_{qc}$, we mention that it is impossible to obtain correlation functions in $\C_{qc}\setminus\C_{qa}$ in quantum field theories with hyperfinite local observable algebras if the two systems are spacelike-separated.
Essentially, this was already observed in \cite{scholz2008tsirelson}, but no proof was given.
We include the formal statement and a proof in \cref{sec:qft}, where bipartite states arising from quantum field theories are discussed in more detail.

\section{The Schmidt rank for pure bipartite states}\label{sec:Schmidt}

\subsection{Equivalent definitions}

We generalize the Schmidt rank to bipartite states on general bipartite algebras as introduced in \cref{sec:cof}.
This section is self-contained (we repeat all definitions and results stated in \cref{sec:summary} regarding the Schmidt rank), we focus on the mathematical theory and refer to the discussion in \cref{sec:summary} for more physical intuitions.

Let us start with the concept of compressions relative to a bipartite state.
As explained in \cref{sec:summary}, these are compressions of observables corresponding to state emulations in the Schr\"odinger picture.

\begin{defin}
    Let $\omega$ be a pure  bipartite state.
    A {\bf compression} with respect $\omega$ is a collection $(C_A,C_B,\K,\Psi)$ where $\K$ is a Hilbert space $C_A$ and $C_B$ are unital completely positive maps $C_j : \A_j \to \B(\K)$, $j=A,B$, with commuting ranges and $\Psi\in\K$ is a unit vector such that 
    \begin{equation}
        \omega(ab) = \ip\Psi{C_A(a)C_B(b)\Psi}\qquad\forall(a,b)\in\A_A\times\A_B.
    \end{equation}
\end{defin}

If $(C_A,C_B,\K,\Psi)$ is a compression, then there is a bipartite system acting on $\K$ with Alice and Bob's algebras $\B_A$ and $\B_B$ being the $C^*$-algebras generated by $C(\A_A)$ and $C(\A_B)$ and with the bipartite algebra $\B$ being the $C^*$-algebra generated by $\B_A$ and $\B_B$.
The pure bipartite state $\psi=\ip\Psi{(\placeholder)\Psi}:\B\to\CC$ on this system is able to emulate all correlations of $\omega$.
By the universal property of the maximal $C^*$-tensor product (see \cref{thm:univ_prop}), the equation
\begin{equation}\label{eq:compression_C}
    C(a\ox b) =C_A(a)C_B(b)
\end{equation}
 determines a bijection between tuples $(C_A,C_B)$ of unital completely positive maps $C_j:\A_j\to\B(\K)$, $j=A,B$, with commuting ranges and unital completely positive maps $C:\A_{max}\to \B(\K)$ with the property that $C(\A_A\ox1)$ commutes with $C(1\ox\A_B)$.
In the following, if $(C_A,C_B,\K,\Psi)$ is a compression, we denote by $C$ this map.
Note that this makes sense, regardless of the bipartite algebra on which the bipartite state is defined. The GNS construction shows that compressions always exist, i.e., the reduced GNS representation $(\pi_{\omega|\A_{A}}, \pi_{\omega|\A_{B}}, \H_{\omega}, \Omega_{\omega})$ is a compression with respect to $\omega$ (see \cref{thm:correlation_invariants}).

For a correlation experiment, it is not necessary that both parties measure simultaneously.
Let us pretend that Alice starts by measuring an effect $0\le a\le1$. Then we should describe Bob's measurement results by the subnormalized state $\omega(a(\placeholder))$.
It turns out that the Schmidt rank is closely related to the factorization properties of the completely positive map  
\begin{equation}\label{eq:state_cp}
     \Gamma_\omega:\A_A\to\A_B^*,\ a\mapsto\omega(a(\placeholder)).
\end{equation}
If Bob were to measure first, we would instead get a map $\A_B\to\A_A^*$, which is nothing else than (the restriction to $\A_B$ of) the transpose of $\Gamma_\omega$.
Conversely, every completely positive map $\Gamma:\A_A\to\A_B^*$, with the property that $1\in\A_A$ is mapped to a state, uniquely determines a bipartite state $\omega_\Gamma(ab)=\Gamma(a)(b)$ up to correlation-equivalence.

The last ingredient needed for stating the main theorem of this section is the {\bf rank of a state} on a $C^*$-algebra which will apply to the marginals.
This is motivated by the observation that in finite dimensions, the Schmidt rank of a pure bipartite state coincides with the rank of the reduced density operators.
We define the rank of a state $\varphi$ on a $C^*$-algebra as the smallest number $n$ so that $\varphi$ can be written as a convex combination of $n$ pure states, and we define the rank to be infinite if no such $n$ exists, i.e.\
\begin{equation}\label{eq:rank}
    \rank(\varphi) = \min\set[\big]{n,\oo\given \text{$\varphi$ is a convex combination of $n$ pure states}}
\end{equation}
Clearly, this definition agrees with the rank of a density operator in finite-dimensional systems.

With these concepts, we can now state the main theorem:

\begin{thm}\label{thm:schmidt_rank}
    Let $\omega$ be a pure bipartite state. The following definitions for the Schmidt rank $\SR(\omega)\in\NN\cup\{\oo\}$ are equivalent:
    \begin{enumerate}[(A)]
        \item\label[defin]{it:alg} 
            If $\omega$ is tame, $\SR(\omega)$ is the Schmidt rank of the GNS vector $\Omega_\omega$ with respect to the tensor splitting of the GNS space $\H_\omega$ (see \cref{thm:tame_GNS}).
            If $\omega$ is wild, $\SR(\omega)=\oo$,
        \item\label[defin]{it:op_cof}
            The Schmidt rank is the square root of the minimal attainable Hilbert space dimension for a compression $(C_A,C_B,\K,\Psi)$ with respect to $\omega$, i.e., $\SR(\omega):=\min \sqrt{\dim(\K)}$,
        \item\label[defin]{it:op_split}
            The Schmidt rank of $\omega$ is the smallest number $k\in\NN$ which admits unital completely positive maps $C_j:\A_j\to\MM_k$ and a unit vector $\Psi\in\CC^k\ox\CC^k$ such that $\omega(ab)=\ip\Psi{C_A(a)\ox C_B(b)\Psi}$ for all $(a,b)\in\A_A\times\A_B$. 
            If no such number exists, the Schmidt rank is infinite,
        \item\label[defin]{it:factor}
            The Schmidt rank of $\omega$ is the smallest number $k\in\NN$ so that the completely positive map $\Gamma_\omega:\A_A\to\A_B^*$ factorizes through $\MM_k$ in the sense that there are completely positive maps $\alpha:\A_A\to\MM_k$ and $\beta:\MM_k\to\A_B^*$ so that $\Gamma_\omega=\beta\circ\alpha$, i.e.\ the following diagram commutes:
            \begin{equation}
                \begin{tikzcd}
                    \A_A \arrow{rr}{\Gamma_\omega} \arrow{dr}{\alpha} &&\A_B^*\\& \MM_k\arrow{ur}{\beta} &
                \end{tikzcd}
            \end{equation}
            If no such number exists, the Schmidt rank is infinite.
        \item\label[defin]{it:rank}
            The Schmidt rank is the rank of the reduced states $\SR(\omega):=\rank(\omega_A)=\rank(\omega_B)$.
        \item\label[defin]{it:interval} 
            The Schmidt rank of  $\omega$ is the square root of the dimension of the order interval of the marginals, i.e.\ $\SR(\omega):=\sqrt{\dim\,[0,\omega_A]}=\sqrt{\dim\,[0,\omega_B]}$.  By the dimension of the order interval, we mean the vector space dimension of its real linear hull.
    \end{enumerate} 
\end{thm}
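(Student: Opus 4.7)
My plan is to use the tensor factorization of the GNS representation from \cref{thm:tame_GNS} as the backbone, splitting on whether $\omega$ is tame.

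\textbf{Wild case.} If $\omega$ is wild, I show all six quantities are infinite. By \cref{thm:pure_tame_states} the marginal $\omega_A$ is not a type I state. A finite convex decomposition $\omega_A=\sum_{i=1}^n p_i\varphi_i$ into pure states would make $\omega_A$ quasi-equivalent to a direct sum of irreducible representations, hence a type I state; this contradiction handles (E). For (F), a finite-dimensional order interval $[0,\omega_A]$ forces the support projection of $\omega_A$ inside $\pi_{\omega_A}(\A_A)''$ to have finite rank, making the generated algebra type I. For (B), (C), (D), any finite-dimensional compression or factorization produces an expression $\omega_A(a)=\tr(\rho\,\sigma(a))$ with $\sigma$ a CP map into a matrix algebra, again forcing $\omega_A$ to be type I.

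\textbf{Tame case: forward implications.} Assume $\omega$ is tame. By \cref{thm:tame_GNS}, $\H_\omega=\H_A\ox\H_B$, $\pi_\omega(ab)=\pi_A(a)\ox\pi_B(b)$ with irreducible $\pi_A,\pi_B$, and the Schmidt decomposition $\Omega_\omega=\sum_{\alpha=1}^k\lambda_\alpha\Phi^A_\alpha\ox\Phi^B_\alpha$. Define isometries $V_j:\CC^k\to\H_j$ by $V_j\ket\alpha=\Phi^j_\alpha$ and set $C_j=V_j^*\pi_j(\cdot)V_j$ together with $\Psi=\sum_\alpha\lambda_\alpha\ket\alpha\ox\ket\alpha$; a direct calculation verifies (C), and (C)$\Rightarrow$(B) is immediate with $\K=\CC^k\ox\CC^k$. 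For (E), $\omega_A=\sum_\alpha\lambda_\alpha^2\,\ip{\Phi^A_\alpha}{\pi_A(\cdot)\Phi^A_\alpha}$ displays $\omega_A$ as a convex combination of $k$ pure states, with matching lower bound since the reduced density matrix $\rho_A=\sum_\alpha\lambda_\alpha^2\kettbra{\Phi^A_\alpha}$ has rank exactly $k$. For (F), $\omega_A$ acts as the faithful density $\rho_A$ on the type I factor $\MM_k$ (identified via $V_A$), and $[0,\omega_A]$ is affinely isomorphic to $[0,\rho_A]\subset\MM_k$, whose real linear hull is the $k^2$-dimensional space of $k\times k$ Hermitian matrices. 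For (D), set $\alpha(a)=D\,V_A^*\pi_A(a)V_A\,D$ with $D=\diag(\lambda_1,\ldots,\lambda_k)$ and $\beta(X)(b)=\tr\bigl(X(V_B^*\pi_B(b)V_B)^T\bigr)$; complete positivity of $\beta$ reduces to that of the compression $V_B^*\pi_B(\cdot)V_B$.

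\textbf{Reverse directions and main obstacle.} The crucial reverse step is (B)$\Rightarrow$(A), from which (C), (D), (E), (F) $\Rightarrow$ (A) follow by routine manipulations (Stinespring dilation in case (D); the observation that (E) or (F) alone makes $\omega_A$ a type I state, so $\omega$ is tame and (A) can be read off the GNS). Given a compression $(C_A,C_B,\K,\Psi)$, the universal property of $\A_{max}$ (\cref{thm:univ_prop}) yields a unital CP map $C:\A_{max}\to\B(\K)$ with $C(a\ox b)=C_A(a)C_B(b)$. The induced vector state $\omega_1=\ip\Psi{C(\cdot)\Psi}$ on $\A_{max}$ satisfies $\omega_1(a\ox b)=\omega(ab)$, so by \cref{thm:bipartite_states} it equals the canonical pullback $\omega_{max}=\omega\circ\phi$. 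Consequently the GNS representation of $\omega$ is unitarily equivalent to the restriction of $C$ to the cyclic subspace $[C(\A_{max})\Psi]\subseteq\K$, yielding an isometric embedding $\H_\omega\hookrightarrow\K$. Since the Schmidt decomposition forces $\dim\H_A,\dim\H_B\ge k$, we conclude $\dim\K\ge\dim\H_\omega=\dim\H_A\cdot\dim\H_B\ge k^2$, simultaneously giving the lower bound in (B) and showing that the minimum attainable dimension is always a perfect square. The main obstacle is this embedding step: it crucially uses the purity of $\omega$ to make $\Psi$ cyclic for an irreducible representation of $\A_{max}$, together with the uniqueness of GNS representations for correlation-equivalent pure states from \cref{thm:bipartite_states}.
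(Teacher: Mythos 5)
Your forward constructions in the tame case and the reduction of (E)/(F) to type-I statements are essentially sound, but the step you yourself identify as crucial --- (B)$\Rightarrow$(A) --- is broken. You claim that the GNS representation of $\omega$ is unitarily equivalent to "the restriction of $C$ to the cyclic subspace $[C(\A_{max})\Psi]\subseteq\K$", giving an isometric embedding $\H_\omega\hookrightarrow\K$ and hence $\dim\K\ge\dim\H_\omega$. This fails for two reasons. First, $C$ is a unital completely positive map, not a $^*$-homomorphism, so compressing it to a subspace of $\K$ does not produce a representation at all; the GNS representation sits inside the \emph{Stinespring dilation} of $C$, whose dilation space is in general larger than $\K$. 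Second, the inequality $\dim\K\ge\dim\H_\omega$ is simply false: take $\A_A=\A_B=\B(\H)$ with $\H$ infinite-dimensional and $\omega$ a product of two pure normal states; then $\H_\omega=\H\ox\H$ is infinite-dimensional while a compression with $\dim\K=1$ exists. The whole point of a compression is that $\K$ can be much smaller than $\H_\omega$. The paper's \cref{thm:finite_compression} goes the opposite way: after truncating to the jointly cyclic part one gets $\K=\K_A\ox\K_B$, shows via the Radon--Nikodym theorem and the purity of $\omega$ that the minimal Stinespring dilations of the truncated local maps $D_j$ are \emph{irreducible}, and thereby embeds $\K_A\ox\K_B$ \emph{into} $\H_\omega$ by the local isometries $W_A\ox W_B$; the lower bound on $\dim\K$ then comes from $\Psi$ being forced to have full Schmidt rank $\dim\K_A=\dim\K_B$, which the isometries preserve and identify with the Schmidt rank of $\Omega_\omega$.

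A second gap is your wild-case treatment of (B), (C), (D). You argue that a finite-dimensional compression or factorization yields $\omega_A(a)=\tr(\rho\,\sigma(a))$ with $\sigma$ a CP map into a matrix algebra, "forcing $\omega_A$ to be type I." This is vacuous: every state trivially factors through $\MM_1=\CC$ (take $\sigma=\omega_A$), so such a factorization of the \emph{marginal} alone carries no type information. What actually forces tameness is the bipartite structure --- the joint cyclicity of $\Psi$ for both local ranges together with the purity of $\omega$, which is exactly the content of \cref{it:stinespring_irrep} of \cref{thm:finite_compression} and cannot be bypassed by an argument about $\omega_A$ in isolation. Your decompositions for (E) and (F) in the wild case are closer to the mark (they mirror \cref{thm:rank}), but the compression/factorization cases need the genuinely bipartite argument.
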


\cref{it:alg} agrees with the usual definition of the Schmidt rank if $\A_j = \B(\H_j)$ and $\omega(a\ox b)= \ip\Psi{a\ox b\Psi}$ with $\Psi\in\H_A\ox\H_B$. This is because the GNS representation is just the identity, so the Schmidt rank of $\omega$ is the vector Schmidt rank of $\Omega_\omega=\Psi$.
In particular, all states of finite Schmidt rank are tame.
Combining this with the results from \cref{sec:cof}, we find the following hierarchy of increasing entanglement for pure states:
\begin{equation}\label{eq:hierachy}
    \Big\{\!\!\begin{array}{c}\text{\small product}\\[-.1cm] \text{\small states}\end{array}\!\!\Big\}=\SR_1 \subset \SR_{\le 2}\subset \ldots \subset \states_{tame}\subset\states_{min}\subset\states_{max}.
\end{equation}
\cref{it:op_split,it:op_cof} are similar. In \cref{it:op_split}, we only look at finite-dimensional compressions with a tensor splitting while \cref{it:op_cof} allows for general commuting operator compressions.
Essentially their equivalence follows because the commuting operator framework is equivalent to the tensor product formalism in finite-dimensions.
Differences only occur in cases where the Schmidt rank is infinite and are not seen by the definitions.

Finally, we discuss \cref{it:factor}.
While we only ask that the factorizing maps $\alpha$ and $\beta$ are completely positive in the definition, we discussed a physical interpretation of this definition in terms of encoding and decoding operations in \cref{sec:summary}, which requires $\beta$ to be a quantum channel.
This is justified by the following:

\begin{lem}\label{thm:factorization}
    The following are equivalent: 
    There is a factorization $\Gamma_\omega= \beta\circ\alpha$ for maps $\alpha:\A_A\to\MM_k$ and $\beta:\MM_k\to\A_B^*$ such that
    \begin{enumerate}[(a)]
        \item\label{it:cp_factor} $\alpha$ and $\beta$ are completely positive.
        \item\label{it:cp_factorB} $\alpha$ and $\beta$ are completely positive, $\alpha$ is unital and $\beta(1)$ is a state.
        \item\label{it:cp_factorC} $\alpha$ and $\beta$ are completely positive, $\alpha(1)$ is a density operator and $\beta$ takes density operators to states.
    \end{enumerate} 
\end{lem}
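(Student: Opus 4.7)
The implications (b) $\Rightarrow$ (a) and (c) $\Rightarrow$ (a) are immediate since (b) and (c) merely add normalization conditions on top of complete positivity. The substantive content is to upgrade a bare CP factorization to the normalized forms, which I would do via (a) $\Rightarrow$ (b) and (b) $\Rightarrow$ (c).

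For (a) $\Rightarrow$ (b), the plan is a pseudo-inverse normalization. Given CP maps $\alpha,\beta$ with $\beta\circ\alpha=\Gamma_\omega$, set $T=\alpha(\1)\geq 0$ and let $P$ be its support projection in $\MM_k$. Kadison's Schwarz inequality $\alpha(a)^*\alpha(a)\leq\|a\|^2 T$ forces $\alpha(a)=P\alpha(a)P$. With $T^{-1/2}$ the Moore--Penrose pseudo-inverse square root and $\sigma_A$ any fixed state on $\A_A$, define
$$
\tilde\alpha(a) = T^{-1/2}\alpha(a)T^{-1/2} + \sigma_A(a)(\1-P), \qquad \tilde\beta(x) = \beta(T^{1/2}xT^{1/2}).
$$
Both are CP; $\tilde\alpha(\1)=P+(\1-P)=\1$ via the identity $T^{-1/2}TT^{-1/2}=P$; and using $T^{1/2}T^{-1/2}=P$ together with $T^{1/2}(\1-P)=0$ one gets $\tilde\beta\circ\tilde\alpha=\beta(P\alpha(\cdot)P)=\beta\circ\alpha=\Gamma_\omega$ and $\tilde\beta(\1)=\beta(T)=\Gamma_\omega(\1)=\omega_B$, which is a state.

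For (b) $\Rightarrow$ (c), I would pass through a tame realization of $\omega$ built from $(\alpha,\beta)$. Stinespring writes $\alpha(a)=V^*\pi_A(a)V$ with $V:\CC^k\to\H_A$ an isometry (since $\alpha$ is unital), while $\beta$ corresponds to the state $\phi_\beta(x\ox b)=\beta(x)(b)$ on $\MM_k\ox_{\min}\A_B$; because $\MM_k$ is type I with a unique irrep, its GNS representation takes the form $\phi_\beta=\ip{\Omega}{(x\ox\pi_B(b))\Omega}$ with $\Omega\in\CC^k\ox\K$. Combining these yields
$$
\omega(ab)=\ip{\Psi}{(\pi_A(a)\ox\pi_B(b))\Psi}, \qquad \Psi=(V\ox\1)\Omega\in\H_A\ox\K.
$$
Schmidt-decomposing $\Psi=\sum_{i=1}^{k'}\sqrt{\lambda_i}\,\phi_i\ox\psi_i$ (with $k'\leq k$ and $\sum\lambda_i=1$) and setting
$$
\hat\alpha(a)_{ij}=\sqrt{\lambda_i\lambda_j}\,\ip{\phi_i}{\pi_A(a)\phi_j}, \qquad \hat\beta(x)(b)=\sum_{ij}x_{ij}\ip{\psi_i}{\pi_B(b)\psi_j}
$$
gives a factorization through $\MM_{k'}$ with $\hat\alpha(\1)=\diag(\lambda_1,\dots,\lambda_{k'})$ of trace $1$ and $\hat\beta(x)(\1)=\tr x$, so $\hat\beta$ sends density operators to states. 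If $k'<k$ I would pad to $\MM_k$ via the top-left corner embedding $P_0$ and extend $\hat\beta$ to $x\mapsto\hat\beta(P_0xP_0)+\tr\bigl((\1-P_0)x(\1-P_0)\bigr)\sigma_B$ for any fixed state $\sigma_B$ on $\A_B$; this preserves CP, keeps the factorization intact (the second summand vanishes on the range of the padded $\hat\alpha$), and maintains $\hat\beta(x)(\1)=\tr x$.

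The only genuine obstacle is the support-projection bookkeeping in the pseudo-inverse step: without the Schwarz inequality confining $\alpha(a)$ to $P\MM_kP$, the operators $T^{\pm1/2}$ would not collapse correctly in the verification of $\tilde\beta\circ\tilde\alpha=\Gamma_\omega$. Everything else is a routine combination of Stinespring and the Schmidt decomposition, which is why one can reshuffle the normalization freely between the two sides of the factorization.
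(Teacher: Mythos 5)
Your proof is correct. The trivial implications and the (a) $\Rightarrow$ (b) step match the paper in substance: your pseudo-inverse normalization $\tilde\alpha = T^{-1/2}\alpha(\placeholder)T^{-1/2}+\sigma_A(\placeholder)(\1-P)$, $\tilde\beta=\beta(T^{1/2}(\placeholder)T^{1/2})$ is precisely the content of the decomposition $\alpha=K\tilde\alpha(\placeholder)K$, $K=\alpha(1)^{1/2}$, that the paper imports from Brown--Ozawa, and your support-projection bookkeeping via the Schwarz inequality is the right (and necessary) way to make it airtight. Where you genuinely diverge is the third equivalence. The paper never proves (b) $\Rightarrow$ (c) directly; it instead observes that (a) $\Leftrightarrow$ (c) follows from the \emph{same} normalization lemma applied to the transposed factorization $\Gamma_\omega^*=\alpha^*\circ\beta^*$: identifying $\MM_k$ with its dual, making $\beta^*$ unital and $\alpha^*(1)$ a state is exactly condition (c) for the original maps. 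That duality trick is two lines. Your route instead passes through a concrete tame realization --- Stinespring for $\alpha$, the GNS/multiplicity structure of a state on $\MM_k\ox\A_B$ for $\beta$, a Schmidt decomposition of the resulting vector $\Psi=(V\ox\1)\Omega$, and a padding step to restore the matrix size $k$. This is all valid (and the padding is needed, since the lemma fixes $k$), but it front-loads machinery that the paper only deploys afterwards in Step~2 of the proof of \cref{thm:schmidt_rank} (via \cref{thm:one_sys_finite}); what it buys you is an explicit canonical form for the normalized factorization, namely $\hat\alpha(\1)=\diag(\lambda_1,\dots,\lambda_{k'})$ built from the Schmidt coefficients, which the paper's duality argument does not exhibit.
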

\begin{proof}
    \ref{it:cp_factor} $\Leftrightarrow$ \ref{it:cp_factorB} follows from the fact that $\alpha = K\tilde\alpha(\placeholder)K$ for a unital completely positive map $\tilde \alpha$ and $K=\alpha(1)^{\frac{1}{2}}$ \cite[Lem.~2.2.5]{brownozawa} if we replace $\alpha$ and $\beta$ by $\tilde\alpha$ and $\tilde\beta=\beta(K(\placeholder)K)$.

    \ref{it:cp_factor} $\Leftrightarrow$ \ref{it:cp_factorC} can be deduced using the same trick but applied to the transpose maps $\Gamma_\omega^*=\alpha^*\circ\beta^*$.
    Identifying $\MM_k$ with its dual, we may thus assume that $\beta^*$ is unital and that $\alpha^*(1)$ is a state which implies the desired properties for $\alpha$ and $\beta$. 
\end{proof}

\subsection{The case where one system is finite-dimensional}

If one of the two parties has a finite-dimensional quantum system, we can immediately compute the Schmidt rank. 
This situation is important in applications to device-independent cryptography (see \cref{sec:summary}), and the results that we obtain turn out to be a helpful tool in the proof of \cref{thm:schmidt_rank}.
This situation allows for a normal form involving a one-sided compression of the large system. This turns out to be a helpful tool in the proof of \cref{thm:schmidt_rank}.

Let $\A_A=\MM_n$ and let $\A_B$ be an arbitrary $C^*$-algebra.
Then there is a unique bipartite algebra $\A=\MM_n\ox\A_B \equiv \MM_n(\A_B)$ so that we are guaranteed that $A$ and $B$ are statistically independent (see \cref{thm:bipartite_algebras}).

\begin{prop}\label{thm:one_sys_finite}
    Let $\omega$ be a bipartite pure state for $\A_A=\MM_n$ and arbitrary $\A_B$.
    Denote $\rho$, the density operator that implements the reduced state $\omega_A$ on $\MM_n$.
    Set $k=\rank(\rho)$ and let $\Psi \in\CC^n\ox\CC^k$ be the canonical purification of $\rho$.
    Then there is a unique unital completely positive map $T_B:\A_B\to\MM_k$ such that 
    \begin{equation}
        \ip\Psi{a\ox T_B(b)\Psi} =\omega(a\ox b)\qquad\forall(a,b)\in\MM_n\times\A_B.
    \end{equation}
\end{prop}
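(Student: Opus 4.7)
The plan is to exploit the fact that $\A_A=\MM_n$ is a type I $C^*$-algebra, so by \cref{thm:typeI_cstar} the pure bipartite state $\omega$ is tame. Then \cref{thm:tame_GNS}.\ref{it:tame_GNS} yields a tensor splitting $\H_\omega=\H_A\otimes\H_B$ with $\pi_\omega(a\otimes b)=\pi_A(a)\otimes\pi_B(b)$ and both $\pi_A,\pi_B$ irreducible. Since $\MM_n$ admits a unique irreducible representation (up to unitary equivalence), I can take $\H_A=\CC^n$ with $\pi_A=\id$, so that $\pi_\omega(a\otimes b)=a\otimes\pi_B(b)$ on $\CC^n\otimes\H_B$.

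Next I would identify the Schmidt decomposition of $\Omega_\omega$ with the canonical purification of $\rho$. Writing $\rho=\sum_{\alpha=1}^k\lambda_\alpha^2\,|\phi_\alpha\rangle\langle\phi_\alpha|$ with $\lambda_\alpha>0$, the relation $\omega_A(a)=\ip{\Omega_\omega}{(a\otimes 1)\Omega_\omega}=\tr(\rho a)$ together with $\rank(\omega_A)=\SR(\omega)=k$ (this uses \cref{thm:schmidt_rank}.\ref{it:rank}, but can alternatively be seen by direct computation of the marginal from a Schmidt decomposition) implies that the Schmidt decomposition reads $\Omega_\omega=\sum_{\alpha=1}^k\lambda_\alpha\,\phi_\alpha\otimes\psi_\alpha$ for some orthonormal $\{\psi_\alpha\}_{\alpha=1}^k\subset\H_B$. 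The canonical purification of $\rho$ is $\Psi=\sum_{\alpha=1}^k\lambda_\alpha\,\phi_\alpha\otimes\ket\alpha\in\CC^n\otimes\CC^k$.

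To construct $T_B$, define an isometry $V:\CC^k\to\H_B$ by $V\ket\alpha=\psi_\alpha$, so $(1\otimes V)\Psi=\Omega_\omega$, and set $T_B(b):=V^*\pi_B(b)V$. This is unital completely positive as a compression of a representation. The identity is then immediate:
\begin{equation*}
\ip\Psi{a\otimes T_B(b)\,\Psi}=\ip{(1\otimes V)\Psi}{(a\otimes\pi_B(b))(1\otimes V)\Psi}=\ip{\Omega_\omega}{(a\otimes\pi_B(b))\Omega_\omega}=\omega(a\otimes b).
\end{equation*}

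For uniqueness, suppose $T_B,T_B'$ both satisfy the equation. Then $X:=T_B(b)-T_B'(b)\in\MM_k$ satisfies $\ip\Psi{(a\otimes X)\Psi}=0$ for every $a\in\MM_n$. Testing with the rank-one matrices $a=\ketbra{\phi_\gamma}{\phi_\delta}$ (extending $\{\phi_\alpha\}_{\alpha=1}^k$ to an orthonormal basis of $\CC^n$) evaluates the left-hand side to $\lambda_\gamma\lambda_\delta\,\langle\delta|X|\gamma\rangle$, which must vanish for all $\gamma,\delta\in\{1,\dots,k\}$; since the $\lambda_\alpha$ are strictly positive, this forces $X=0$. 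The main obstacle is really the bookkeeping of matching $\Omega_\omega$ with $\Psi$ via the isometry $V$; everything else is routine once tameness provides the tensor splitting of the GNS space.
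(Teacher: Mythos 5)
Your proof is correct, but it takes a genuinely different route from the paper's. The paper derives \cref{thm:one_sys_finite} as an immediate special case of \cref{thm:bob_joins_alice_GNS} ("Bob joins Alice's GNS space"): for an arbitrary (not necessarily pure) bipartite state one sets $T_B(b)=V^*\pi_\omega(b)V$, where $V:\H_{\omega_A}\to\H_\omega$ is the isometry induced by the embedding $\A_A\hookrightarrow\A$, and checks that $T_B$ lands in $\pi_{\omega_A}(\A_A)'$; uniqueness follows from cyclicity of $\Omega_{\omega_A}$. For $\A_A=\MM_n$ the GNS representation of $\omega_A$ is $a\mapsto a\ox\1_k$ on $\CC^n\ox\CC^k$ with cyclic vector the canonical purification, whose commutant is $\1\ox\MM_k$, and the proposition drops out. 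That argument needs only the GNS construction and no purity, no tameness, and no finite-dimensionality of $\A_A$. You instead invoke the comparatively heavy structural results \cref{thm:typeI_cstar} and \cref{thm:tame_GNS} to get the tensor splitting $\H_\omega=\CC^n\ox\H_B$, and then build $T_B=V^*\pi_B(\placeholder)V$ by compressing $\pi_B$ to the Schmidt support of $\Omega_\omega$; your uniqueness argument via rank-one test matrices is a concrete instance of the paper's cyclicity argument. What your route buys is an explicit formula that matches the construction the paper records immediately after the lemma ($T_B=\1\ox P_B\pi_B(\placeholder)P_B$); what it costs is generality and logical economy. Two minor points: you were right to flag that citing \cref{thm:schmidt_rank}.\ref{it:rank} here is best avoided (since \cref{thm:one_sys_finite} feeds into the proof of \cref{thm:schmidt_rank}) and to note that the identification of $k$ with the number of Schmidt terms follows by computing the marginal directly; and in your uniqueness computation the matrix element should read $\lambda_\gamma\lambda_\delta\,\braAket{\gamma}{X}{\delta}$ for $a=\ketbra{\phi_\gamma}{\phi_\delta}$, an index slip that does not affect the conclusion.
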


With \cref{it:rank}, it is clear that the Schmidt rank of $\omega$ is just the rank of the density operator $\rho$.
The unital completely positive map $T_B$ can be constructed explicitly: Let $\rho = \sum_{i=1}^k p_i \kettbra{\Phi_i} $, and pick the purification $\Psi = \sum_{i=1}^k \sqrt{p_i}\,\Phi_i\ox\ket i$. Then $\bra iT_B(b)\ket j = (p_ip_j)^{-1/2} \omega(\ketbra{\Phi_i}{\Phi_j}  \ox b)$.
\cref{thm:one_sys_finite} follows directly from:

\begin{lem}["Bob joins Alice's GNS space"]\label{thm:bob_joins_alice_GNS}
    Let $\omega$ be a (not necessarily pure) state on a bipartite algebra. Pick one marginal, say $\omega_A$.
    There is a unique operator $T_B(b) \in \pi_{\omega_A}(\A_A)'$, such that 
    \begin{equation}\label{eq:bob_joins_alice_GNS}
        \omega(ab) = \ip{\Omega_{\omega_A}}{\pi_{\omega_A}(a)T_B(b)\Omega_{\omega_A}}\qquad\forall(a,b)\in\MM_n\times\A_B.
    \end{equation}
    The map $T_B:\A_B\to\pi_{\omega_A}(\A_A)'$ is unital and completely positive.
    A dilation of $T_B$ is given by the isometry $V:\H_{\omega_A}\to \H_\omega$ induced by the embedding $\A_A\hookrightarrow\A$%
    \footnote{Let $\phi:\B\to\C$ be a $^*$-homomorphism and let $\omega$ be a state on  $\C$. Then the induced isometry $V:\H_{\omega\circ\phi}\to\H_\omega$ is defined by $V\pi_{\omega\circ\phi}(b)\Omega_{\omega\circ\phi}=\pi_\omega(\phi(b))\Omega_\omega$.}
    and the representation $\pi_\omega|_{\A_B}$:
    \begin{equation}\label{eq:TB_dilation}
        T_B = V^* \pi_\omega(\placeholder)V.
    \end{equation}
\end{lem}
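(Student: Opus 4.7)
The plan is to construct $T_B$ directly from the dilation formula \eqref{eq:TB_dilation} and then verify all stated properties. First, I will define the canonical isometry $V:\H_{\omega_A}\to\H_\omega$. Concretely, on the dense subspace $\pi_{\omega_A}(\A_A)\Omega_{\omega_A}\subset\H_{\omega_A}$, set
\begin{equation*}
    V\pi_{\omega_A}(a)\Omega_{\omega_A} := \pi_\omega(a)\Omega_\omega,\qquad a\in\A_A.
\end{equation*}
This is well-defined and isometric because $\omega|_{\A_A}=\omega_A$ gives $\ip{\pi_\omega(a)\Omega_\omega}{\pi_\omega(a')\Omega_\omega}=\omega(a^*a')=\omega_A(a^*a')=\ip{\pi_{\omega_A}(a)\Omega_{\omega_A}}{\pi_{\omega_A}(a')\Omega_{\omega_A}}$. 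Extending by continuity, $V^*V=\1_{\H_{\omega_A}}$, $V\Omega_{\omega_A}=\Omega_\omega$, and a direct computation on the dense subspace shows the intertwining relation $V\pi_{\omega_A}(a) = \pi_\omega(a)V$ for all $a\in\A_A$, or equivalently $\pi_{\omega_A}(a)V^* = V^*\pi_\omega(a)$.

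Next, define $T_B(b):=V^*\pi_\omega(b)V$ for $b\in\A_B$. By Stinespring's theorem (or a direct check), the map $T_B:\A_B\to\B(\H_{\omega_A})$ is unital completely positive, since $T_B(1)=V^*V=\1$. To show $T_B(b)\in\pi_{\omega_A}(\A_A)'$, use the intertwining relation together with the commutativity of $\pi_\omega(\A_A)$ and $\pi_\omega(\A_B)$ in $\pi_\omega$: for $a\in\A_A$ and $b\in\A_B$,
\begin{equation*}
    \pi_{\omega_A}(a)T_B(b) = \pi_{\omega_A}(a)V^*\pi_\omega(b)V = V^*\pi_\omega(a)\pi_\omega(b)V = V^*\pi_\omega(b)\pi_\omega(a)V = T_B(b)\pi_{\omega_A}(a).
\end{equation*}

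The dilation formula \eqref{eq:bob_joins_alice_GNS} then follows from a short calculation: using $V\Omega_{\omega_A}=\Omega_\omega$ and the intertwiner,
\begin{align*}
    \ip{\Omega_{\omega_A}}{\pi_{\omega_A}(a)T_B(b)\Omega_{\omega_A}}
    &= \ip{V\pi_{\omega_A}(a^*)\Omega_{\omega_A}}{\pi_\omega(b)V\Omega_{\omega_A}} \\
    &= \ip{\pi_\omega(a^*)\Omega_\omega}{\pi_\omega(b)\Omega_\omega} = \omega(ab).
\end{align*}
Finally, uniqueness follows from cyclicity of $\Omega_{\omega_A}$ for $\pi_{\omega_A}(\A_A)$: if $S(b)\in\pi_{\omega_A}(\A_A)'$ satisfies the same relation, then for any $a_1,a_2\in\A_A$,
\begin{equation*}
    \ip{\pi_{\omega_A}(a_1)\Omega_{\omega_A}}{S(b)\pi_{\omega_A}(a_2)\Omega_{\omega_A}} = \ip{\Omega_{\omega_A}}{\pi_{\omega_A}(a_1^*a_2)S(b)\Omega_{\omega_A}} = \omega(a_1^*a_2 \cdot b),
\end{equation*}
and the same identity holds for $T_B(b)$, so $S(b)=T_B(b)$ by density. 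The only non-routine step is verifying that $T_B(b)$ lies in the commutant $\pi_{\omega_A}(\A_A)'$; this is the place where the commutativity of the subalgebras $\A_A,\A_B\subset\A$ is essential, and it pivots on the intertwining property of $V$.
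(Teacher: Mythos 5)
Your proposal is correct and follows essentially the same route as the paper: define $T_B = V^*\pi_\omega(\placeholder)V$ via the canonical isometry, verify the formula and commutant membership using the commutativity of $\A_A$ and $\A_B$, and obtain uniqueness from cyclicity of $\Omega_{\omega_A}$. Your explicit use of the intertwining relation $V\pi_{\omega_A}(a)=\pi_\omega(a)V$ is a slightly cleaner bookkeeping than the paper's insertion of $V^*V$ and $VV^*$ factors, but it is the same argument.
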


\begin{proof}
    It is clear that \cref{eq:TB_dilation} defines a unital completely positive map from $\A_B$ to $\B(\H_{\omega_A})$.
    Note that $\pi_{\omega_A}(a)=V^*\pi(a)V$.
    We check that \cref{eq:bob_joins_alice_GNS} holds:
    \begin{align*}
        \omega(ab) = \ip{\Omega_\omega}{\pi_\omega(a)\pi_\omega(b)\Omega} 
        &= \ip{\Omega}{V^*V\pi_\omega(a)VV^*\pi(b) V^*V\Omega} \\
        &= \ip{\Omega_{\omega_A}}{\pi_{\omega_A}(a)T_B(b)\Omega_{\omega_A}}.
    \end{align*}
    That $T_B(b)\in\pi_{\omega_A}(\A_A)'$ holds follows from the formula $\pi_\omega(a)T_B(b)=V^*\pi(ab)V$ which holds because $T_B(b)\pi_{\omega_A}(a) = V^* \pi_\omega(b)VV^*\pi_\omega(a)V = V^*\pi_\omega(b)\pi_\omega(a)V= V^*\pi(ab)V$.
    Finally, uniqueness holds because every operator $T_B(b)\in\pi_{\omega_A}(\A_A)'$ which satisfies \eqref{eq:bob_joins_alice_GNS}, has the same matrix elements with respect to the dense subspace $\pi_{\omega_A}(\A_A)\Omega_{\omega_A}$:
    \begin{align*}
        \ip{\pi_{\omega_A}(a_1)\Omega_{\omega_A}}{T_B(b)\pi_{\omega_A}(a_2)\Omega_{\omega_A}}
        =\ip{\Omega_{\omega_A}}{\pi_{\omega_A}(a_1^*a_2)T_B(b)\Omega_{\omega_A}}
        =\omega(a_1^*a_2b).
    \end{align*}\qedhere
\end{proof}

For tame bipartite pure states, we get an explicit construction of the map $T_B$ appearing in \cref{thm:bob_joins_alice_GNS}:
Let $\omega$ be a tame pure bipartite state with Schmidt rank $k\in\NN\cup\{\oo\}$. 
Consider the Schmidt decomposition $\Omega_\omega = \sum_{j=1}^k \lambda_\alpha \,\Phi_\alpha^A\ox \Phi_\alpha^B \in \H_\omega$ with respect to the canonical tensor product decomposition $\H_\omega=\H_A\ox\H_B$ of its GNS space.
Let $\pi_j:\A_j\to\B(\H_j)$ be the irreducible representations such that $\pi_\omega(ab)=\pi_A(a)\ox\pi_B(b)$.
By \cref{thm:tame_GNS}, the GNS representation of the reduced state $\omega_A$ is
\begin{equation}
    \H_{\omega_A}=\H_A\ox P_B\H_B,\quad \pi_{\omega_A} = \pi_A(\placeholder)\ox P_B, \quad\Omega_{\omega_A}=\Omega_\omega,
\end{equation}
where $P_B = \sum_{j=1}^k \kettbra{\Phi_\alpha^B}$.
The unique unital completely positive map $T_B$ is 
\begin{equation}
    T_B = \1_{\H_A}\ox (P_B\pi_B(\placeholder)P_B).
\end{equation}
We see that the range of $T_B$ is weakly dense in $\M_B=\1\ox\B(\H_B)$.

\subsection{Proof of equivalence}\label{sec:proof}
\newcommand\SRdef[1]{\SR^{\textup{\ref*{#1}}}}

In this subsection, we prove \cref{thm:schmidt_rank}.
For the sake of this proof, we introduce the notation $\SR^{\textup{(X)}}(\omega)$ for the Schmidt rank in the sense of Definition (X) where (X) $=$ \ref*{it:alg},\,\ldots,\,\ref*{it:interval}.

A tool used repeatedly in the proof is the Radon-Nikodym theorem for completely positive maps (in particular, for states).
We briefly explain the theorem here but refer to \cref{sec:appendix} for a more detailed explanation, including a full proof.
If $\B$ is a $C^*$-algebra and $S,T:\B\to\B(\H)$ are completely positive maps, then we write $S\le_{cp}T$ if $T-S$ is completely positive and we denote by $[0,T]_{cp}$ the convex set of completely positive maps $S\le_{cp}T$.
Let $T = V^*\pi(\placeholder)V$ be the minimal Stinespring dilation.
The Radon-Nikodym theorem asserts 
\begin{equation}\label{eq:radon_niko_cp}
    S\leftrightarrow Q \iff S = V^*\pi(\placeholder) Q V
\end{equation}
defines a monotone and affine bijection between $[0,T]_{cp}$ and $[0,\1]_{\pi(\B)'}$.
The most important special case is that of a state $\omega:\B\to\CC$. 
Since the Stinespring dilation of a state is its GNS representation, the bijection between $[0,\omega]$ and $[0,\1]_{\pi_\omega(\B)'}$ is
 \begin{equation}\label{eq:radon_niko}
     \varphi\leftrightarrow Q \iff \varphi = \ip{\Omega_\omega}{\pi_\omega(\placeholder)Q\Omega_\omega}.
\end{equation}
In the proof of \cref{thm:schmidt_rank}, we will use that the linear extension of this bijection is a complete order embedding of $\pi_\omega(\B)'$ into $\A^*$.

\subsubsection*{Step 1. Equivalence of Definitions \ref{it:alg}, \ref{it:op_cof} and \ref{it:op_split}} 

Since Definitions \ref{it:alg}, \ref{it:op_cof} and \ref{it:op_split} make no reference to the bipartite algebra, we may, without loss of generality, assume it to be the maximal $C^*$-tensor product $\A_{max}$.

We start with a construction that takes in a finite-dimensional compression and returns another one with a smaller dimension and with a tensor-splitting structure.
Its properties are summarized in the following:

\begin{prop}\label{thm:finite_compression}
    Let $\omega$ be a pure bipartite state. 
    If a finite-dimensional compression $(C_A',C_B',\K',\Psi')$ exists, we can construct a finite-dimensional compression $(C_A,C_B,\K,\Psi)$ with $\dim\K'\ge\dim\K$ that satisfies the following properties:
    \begin{enumerate}[(i)]
        \item\label{it:tensor_split} 
            $\K = \K_A\ox \K_B$ and $C_A(\A_A)''=\B(\K_A)\ox\1$ and $C_B(\A_B)''=\1\ox\B(\K_B)$.
            Hence, there are unital completely positive maps $D_j:\A_j\to\B(\K_j)$ so that $C_A = D_A(\placeholder)\ox\1$ and similarly for $C_B$.
        \item\label{it:cyclicity}
            $[(D_A(\A_A)\ox\1)\Psi]=[(\1\ox D_B(\A_B))\Psi]=\K_A\ox\K_B$.
    \end{enumerate}
    Any compression with respect $\omega$ that satisfies these two properties also satisfies:
    \begin{enumerate}[resume*]
        \item\label{it:schmidty}
            $\dim\K_A=\dim\K_B$ and $\Psi\in \K_A\ox\K_B$ is fully entangled in the sense that its Schmidt rank is $\dim\K_j$.
            Therefore, $(D_A,D_B,\Psi)$ satisfy the criteria of \cref{it:op_split}.
        \item\label{it:stinespring_irrep}
            The minimal Stinespring dilation $(\pi_j,\H_j,W_j)$ of $D_j:\A_j\to\B(\K_j)$ is irreducible, $j=A,B$.
            Therefore, $(\pi_A,\pi_B,\Omega)$, with $\Omega=W_A\ox W_B\Psi$, satisfy the properties of \cref{it:normal_form} of \cref{thm:pure_tame_states}.
        \item\label{it:schmidt}
            $\SRdef{it:alg}(\omega)$ is equal to the vector Schmidt rank of $\Psi$ (and hence finite).
    \end{enumerate}
\end{prop}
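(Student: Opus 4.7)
Plan. Assume $\A=\A_{max}$ throughout (harmless by correlation-invariance of the objects involved). The argument splits into two halves: deriving (iii)--(v) from (i)--(ii), and constructing a compression satisfying (i)--(ii) from an arbitrary finite-dimensional one.

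\emph{Part 2 (properties from (i)--(ii)).} Let $(C_A,C_B,\K,\Psi)$ satisfy (i)--(ii) and write the Schmidt decomposition $\Psi=\sum_{j=1}^m\mu_j\xi_j\ox\zeta_j$. Cyclicity forces $\{\xi_j\}$ and $\{\zeta_j\}$ to span $\K_A$ and $\K_B$, so $\dim\K_A=\dim\K_B=m$ and $\Psi$ has full Schmidt rank, establishing (iii). For (iv)--(v), take the minimal Stinespring dilations $(\pi_j,\H_j,W_j)$ of $D_j$; then $(\pi_A\ox\pi_B,W_A\ox W_B)$ is a Stinespring dilation of $\tilde C:\A_{max}\to\B(\K)$, $\tilde C(a\ox b)=D_A(a)\ox D_B(b)$, and $\Omega:=(W_A\ox W_B)\Psi$ implements $\omega$ via the representation $\pi:\A_{max}\to\B(\H_A\ox\H_B)$ satisfying $\pi(a\ox b)=\pi_A(a)\ox\pi_B(b)$. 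By GNS uniqueness and purity of $\omega$, $\pi$ acts irreducibly on the cyclic subspace $[\pi(\A_{max})\Omega]$. A projection $P\in\pi_A(\A_A)'$ lifts to $P\ox\1\in\pi(\A_{max})'$, which restricts to $0$ or $\1$ on this GNS subspace; expanding $(P\ox\1)\Omega=\sum_j\mu_j(PW_A\xi_j)\ox W_B\zeta_j$ and using linear independence of the orthonormal $\{W_B\zeta_j\}$ together with $\mu_j>0$ forces $P\in\{0,\1\}$ on $W_A\K_A$, and commutation with $\pi_A(\A_A)$ extends this to $[\pi_A(\A_A)W_A\K_A]=\H_A$; this gives (iv), and symmetrically for $\pi_B$. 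Irreducibility of $\pi_A,\pi_B$ implies $\pi(\A_{max})''=\B(\H_A\ox\H_B)$, so $[\pi(\A_{max})\Omega]=\H_A\ox\H_B$, which by \cref{thm:tame_GNS} is the canonical tensor splitting of $\H_\omega$; the Schmidt decomposition $\Omega=\sum_j\mu_jW_A\xi_j\ox W_B\zeta_j$ has orthonormal factors, so $\SRdef{it:alg}(\omega)=m=\sqrt{\dim\K}$, giving (v).

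\emph{Part 1 (construction).} Given $(C'_A,C'_B,\K',\Psi')$, combine into $C':\A_{max}\to\B(\K')$, set $\N'=C'(\A_{max})''$, and pass to the cyclic subspace $\K_1=[\N'\Psi']\subset\K'$; cyclicity of $\Psi'$ for $\N'|_{\K_1}$ in finite dimensions yields the multiplicity-free decomposition $\K_1=\bigoplus_i\mathcal{L}_i$ with $\N'|_{\K_1}=\bigoplus_i\B(\mathcal{L}_i)$. Within each block the commuting subalgebras $\N'_A|_{\mathcal{L}_i},\N'_B|_{\mathcal{L}_i}$ jointly generate $\B(\mathcal{L}_i)$, so both are factors, and in finite dimensions one obtains a tensor product decomposition $\mathcal{L}_i=\mathcal{L}_i^A\ox\mathcal{L}_i^B$ with $\N'_A|_{\mathcal{L}_i}=\B(\mathcal{L}_i^A)\ox\1$ and $\N'_B|_{\mathcal{L}_i}=\1\ox\B(\mathcal{L}_i^B)$. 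The central projections $P_i$ induce a convex decomposition of $\omega$, which purity collapses to a single block $\mathcal{L}_0=\mathcal{L}_0^A\ox\mathcal{L}_0^B$; normalizing $\Psi=P_0\Psi'/\|P_0\Psi'\|$ gives $\omega(ab)=\ip\Psi{(D_A(a)\ox D_B(b))\Psi}$ for unital completely positive maps $D_A,D_B$, and a Stinespring dilation of the $D_j$ produces a representation satisfying \eqref{eq:ordinaryQM}, so $\omega$ is tame in the sense of \cref{def:tame}. Applying \cref{thm:pure_tame_states,thm:tame_GNS}, $\omega$ has the canonical tame normal form $\omega(ab)=\ip{\Omega_\omega}{(\pi_{\omega_A}(a)\ox\pi_{\omega_B}(b))\Omega_\omega}$ on $\H_\omega=\H_{\omega_A}\ox\H_{\omega_B}$, with $\pi_{\omega_j}$ irreducible; the irreducibility-extraction argument in the proof of (\ref{it:tame})$\Rightarrow$(\ref{it:normal_form}) of \cref{thm:pure_tame_states}, applied to the Stinespring lift of the UCP form above, identifies $\Omega_\omega$ with $(W_A\ox W_B)\Psi$ up to unitary equivalence, so the Schmidt rank of $\Omega_\omega$ equals the Schmidt rank $m$ of $\Psi$. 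Now let $P_j$ be the projection onto the Schmidt support of $\Omega_\omega$ in $\H_{\omega_j}$, set $\K_j=P_j\H_{\omega_j}$, $\K=\K_A\ox\K_B$, $\Psi^{\text{new}}=\Omega_\omega\in\K$, and $D_j^{\text{new}}(x)=P_j\pi_{\omega_j}(x)P_j|_{\K_j}$. Irreducibility of $\pi_{\omega_j}$ makes $\pi_{\omega_j}(\A_j)$ $\sigma$-weakly dense in $\B(\H_{\omega_j})$, hence in finite dimensions $P_j\pi_{\omega_j}(\A_j)P_j=P_j\B(\H_{\omega_j})P_j=\B(\K_j)$, giving $\lin D_j^{\text{new}}(\A_j)=\B(\K_j)$, which yields (i) and, together with the full Schmidt rank of $\Omega_\omega$ in $\K_A\ox\K_B$, also (ii). The dimension bound $\dim\K=m^2\le\dim\mathcal{L}_0\le\dim\K_1\le\dim\K'$ is immediate.

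\emph{Main obstacle.} The principal difficulty is the first step of Part 1: extracting a tensor product structure on the cyclic subspace requires the finite-dimensional structure theory of commuting subfactors of $\B(\mathcal{L}_i)$, and the appeal to purity is essential to reduce the central direct sum to a single block. Once tameness is in hand, the canonical tame normal form trivializes the remaining work because irreducibility of $\pi_{\omega_j}$ upgrades the factor condition $D_j^{\text{new}}(\A_j)''=\B(\K_j)$ to the stronger linear identity $\lin D_j^{\text{new}}(\A_j)=\B(\K_j)$ that is needed for cyclicity--- an upgrade that is not available for the raw $D_j$ produced by Step 1, which is what necessitates the detour through the tame normal form.
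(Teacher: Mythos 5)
Your overall architecture matches the paper's: cut the given compression down to a block on which $C(\A_{max})$ acts irreducibly, invoke the finite-dimensional fact that two commuting subalgebras jointly generating $\B(\L)$ must be $\B(\L_A)\ox\1$ and $\1\ox\B(\L_B)$, and use purity plus cyclicity to force irreducibility of the Stinespring dilations. Two remarks on the route. First, your claim that cyclicity of $\Psi'$ makes $\N'|_{\K_1}$ multiplicity-free is false: a cyclic vector only forces the multiplicities $m_i\le\dim\L_i$; this is patched by a further orthogonal decomposition inside each central block and another appeal to purity. Second, your construction of property (ii) takes a long detour through tameness, \cref{thm:pure_tame_states} and the canonical GNS splitting of \cref{thm:tame_GNS}; this is logically admissible (those results are established independently in Section 3), but unnecessary --- the paper simply truncates the compression already in hand by the projections onto $[(D_A(\A_A)\ox\1)\Psi]$ and $[(\1\ox D_B(\A_B))\Psi]$, which live in the opposite tensor factors because they commute with $C_A(\A_A)$ resp.\ $C_B(\A_B)$.

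The genuine gap is in your argument for (iv). The operator $P\ox\1$ lies in $\pi(\A_{max})'$, but it need not preserve the GNS subspace $[\pi(\A_{max})\Omega]$; only its compression to that subspace commutes with the restricted (irreducible) representation, and that compression is therefore a scalar $\lambda\in[0,1]$ --- not ``$0$ or $\1$''. Equivalently, purity only yields $\ip{\Omega}{(P\ox\1)\pi(a\ox b)(P\ox\1)\Omega}=\lambda\,\omega(ab)$. To pass from this to $W_A^*P\pi_A(a)W_A=\lambda D_A(a)$ (and hence, by minimality of the dilation, $P=\lambda\1$, so $\lambda\in\{0,1\}$), one must separate the coefficients in
$\sum_{jk}\mu_j\mu_k\ip{\xi_j}{W_A^*P\pi_A(a)W_A\,\xi_k}\ip{\zeta_j}{D_B(b)\zeta_k}=\lambda\sum_{jk}\mu_j\mu_k\ip{\xi_j}{D_A(a)\xi_k}\ip{\zeta_j}{D_B(b)\zeta_k}$,
and for that one needs the functionals $b\mapsto\ip{\zeta_j}{D_B(b)\zeta_k}$ to be linearly independent. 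Orthonormality of the vectors $W_B\zeta_j$, which is all you invoke, does not give this (e.g.\ $D_B(b)=\varphi(b)\1$ makes all of these functionals proportional); what does give it is precisely hypothesis (ii), $[(\1\ox D_B(\A_B))\Psi]=\K_A\ox\K_B$, which is where the paper's proof uses cyclicity (it picks $b_{ij}$ with $D_B(b_{ij})\Psi=\Phi_i^A\ox\Phi_j^B$ and computes the matrix elements of the dominated map directly). So (iv) is recoverable, but the step as written both asserts an unjustified dichotomy and cites the wrong ingredient at the one place where the cyclicity hypothesis must enter.
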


We will prove later that \cref{it:cyclicity} already determines the compression up to unitary equivalence (see \cref{sec:uniqueness}).
Recall that if $(C_A,C_B,\K,\Psi)$ is a compression with respect to a bipartite state $\omega$, then we have a unital completely positive map $C:\A_{max}\to\B(\K)$ whose marginals are the maps $C_j$ (see the discussion around \eqref{eq:compression_C}).
For the proof, we start with a small Lemma:

\begin{lem}\label{thm:lemma_finite_compression}
Let $(C_A,C_B,\K,\Psi)$ be a finite-dimensional compression such that $C(\A_{max})$ acts irreducibly on $\K$.
Consider the $^*$-algebras $\N_j = C_j(\A_j)''$. Then $\N_A$ and $\N_B$ are factors and $\N_A = \N_B'$.
Therefore $\K = \K_A \otimes \K_B$ and $C_A=D_A\ox \1$, $C_B=\1\ox D_B$.
\end{lem}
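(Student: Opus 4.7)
The plan is to extract the factorization from irreducibility plus the commutativity of $\N_A$ and $\N_B$, using the fact that finite-dimensional factors are automatically of type I.

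First I would unpack what the irreducibility of $C(\A_{max})$ on $\K$ means. Because $C(a \ox b) = C_A(a)C_B(b)$, the range $C(\A_{max})$ is a self-adjoint subspace whose commutant coincides with $(\N_A \cup \N_B)'=\N_A'\cap\N_B'$. Irreducibility thus gives $\N_A'\cap\N_B' = \CC\1$, and taking commutants, $\N_A\vee\N_B = \B(\K)$. Moreover the hypothesis that $C_A$ and $C_B$ have commuting ranges immediately promotes to $\N_A\subseteq\N_B'$ and $\N_B\subseteq\N_A'$. Hence the center of $\N_A$ satisfies $\N_A\cap\N_A'\subseteq\N_B'\cap\N_A' = \CC\1$, so $\N_A$ is a factor, and by symmetry so is $\N_B$.

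Next I would exploit finite-dimensionality. Every finite-dimensional factor is a type I factor, so there is a tensor decomposition $\K = \K_A\ox\K_A^c$ with $\N_A = \B(\K_A)\ox\1$ and consequently $\N_A' = \1\ox\B(\K_A^c)$. Since $\N_B\subseteq\N_A'$, we can write $\N_B = \1\ox\M_B$ for a von Neumann subalgebra $\M_B\subseteq\B(\K_A^c)$, and $\M_B$ inherits the factor property from $\N_B$. Now $\N_A\vee\N_B = \B(\K_A)\,\Bar\ox\,\M_B$ in the notation of \cref{def:haag}, and this must equal $\B(\K)=\B(\K_A)\,\Bar\ox\,\B(\K_A^c)$ by the irreducibility computation above. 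Comparing the two expressions forces $\M_B = \B(\K_A^c)$, and therefore $\N_B = \1\ox\B(\K_A^c) = \N_A'$, which is the asserted equality $\N_A = \N_B'$.

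Finally, I would set $\K_B := \K_A^c$. Since $C_A(\A_A)\subseteq\N_A = \B(\K_A)\ox\1$, every element of the range has the form $x\ox\1$, and mapping $a\mapsto x$ defines a unital completely positive map $D_A\colon\A_A\to\B(\K_A)$ with $C_A = D_A(\placeholder)\ox\1$. The construction of $D_B\colon\A_B\to\B(\K_B)$ with $C_B = \1\ox D_B$ is identical. The main conceptual obstacle is the passage from $\N_A\subseteq\N_B'$ to $\N_A = \N_B'$; the argument above shows that in finite dimensions this is automatic once $\N_A\vee\N_B = \B(\K)$, but in the absence of finite-dimensionality (or at least of the type~I property of $\N_A$), the analogous statement is the content of Haag-duality and would not be for free.
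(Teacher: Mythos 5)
Your proposal is correct and follows essentially the same route as the paper: irreducibility gives $\N_A\vee\N_B=\B(\K)$, commutativity of the ranges gives $\N_A\subseteq\N_B'$, and finite-dimensionality upgrades this to $\N_A=\N_B'$ with the tensor splitting. The only difference is that you spell out the last step (via the type~I decomposition of $\N_A$) where the paper simply invokes finite-dimensionality, which is a welcome addition rather than a deviation.
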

\begin{proof}[Proof of the Lemma]
    We start by showing $C(\A_{max}) \subset \N_A\vee \N_B$.
    It suffices to prove $C(x)\in \N_A\vee \N_B$ for elements $x=\sum_{i=1}^n a_i\ox b_i$ of the algebraic tensor product.
    We have $C(x) = \sum_i C_A(a_i)C_B(b_i) \in \N_A\vee \N_B$. 
    It now also follows that $\N_A\vee \N_B \supset C(\A_{max})''=\B(\K)$ so that we have $\N_A\vee \N_B = \B(\K)$.
    Furthermore, commutativity of $C_A(\A_A)$ and $C_A(\A_B)$ implies that $\N_A$ and $\N_B$ commute which by finite-dimensionality (and $\N_A\vee \N_B=\B(\K)$) implies $\N_A=\N_B'$.
    Therefore $\K=\K_A\ox\K_B$ and $\N_A=\B(\K_A)\ox\1$, $\N_B=\1\ox\B(\K_B)$. 
\end{proof}

\begin{proof}[Proof of \cref{thm:finite_compression}]
\ref{it:tensor_split}: 
This follows from \cref{thm:lemma_finite_compression} if we show that we may assume that $C(\A)$ has no invariant subspaces.
Set $\M = C(\A_{max})''$. As a finite-dimensional $C^*$-algebra, $\M$ is a direct sum of matrix algebras so that we have
Then we have direct sum decompositions $\K = \bigoplus_\gamma \K_\gamma$, $\M=\bigoplus_\gamma\B(\K_\gamma)$, $C(x) = \oplus_\gamma C_\gamma(x)$ and $\Psi = \oplus \sqrt{p_\gamma} \Psi_\gamma$ with $\norm{\Psi_\gamma}=1$.
We can discard any direct summands with $p_\gamma=0$.
Since $\omega$ is pure, it follows that $(C_\gamma|_{\A_A},C_\gamma|_{\A_B},\K_\gamma,\Psi_\gamma)$ are themselves compressions of $\omega$ each of which satisfies $C_\gamma(\A)''=\B(\K_\gamma)$.
We may pick any one of these and apply \cref{thm:lemma_finite_compression}.

\ref{it:cyclicity}:
Let $P_A$ be the projection onto the closed subspace $D_A(\A_A)\ox\1\Psi$. Then  $P_A \in  C_A(\A_A)' = \1\ox \B(\K_B)$ showing that $P_A = \1\ox Q_B$ for a projection $Q_B$ on $\K_B$.
Similarly, the projection $P_B$ onto  $\1\ox D_B(\A_B)\Psi$ is of the form $Q_A\ox \1$.
We can now simply truncate everything with these projections, i.e.\ replace $\K_j$ by $Q_j\K_j$ and $D_j$ by $Q_jD_j(\placeholder)Q_j$ (we already have $\Psi = Q_A\ox Q_B\Psi$).

\ref{it:schmidty}: This follows from \ref{it:cyclicity} Let $\Psi = \sum_{\alpha=1}^k \lambda_\alpha\Phi_\alpha^A\ox \Phi_\alpha^B$, $k\in\NN\cup\{\oo\}$, be the Schmidt decomposition. 
Clearly, $\1\ox D_B(\A_B)\Psi \subset \CC^k\ox \lin\,\{\Phi^B_1,\ldots,\Phi^B_k\}$.
It is therefore necessary that $\{\Phi_\alpha^B\}_{\alpha=1}^k$ span $\K_B$ as we would have a contradiction to \ref{it:cyclicity} otherwise.

\ref{it:stinespring_irrep}:
By the Radon-Nikodym theorem, the claim is equivalent to: All completely positive maps $T:\A_j \to \B(\K_j)$ with $T\le D_j$ are proportional to $D_j$, i.e.\ $D_j$ is extremal, for $j=A,B$.
Let $T:\A\to\B(\K_A)$ be a completely positive map that is cp-dominated by $D_A$. 
It follows that the positive linear functional $\ip\Psi{(T\otimes D_B)(\placeholder)\Psi}$ is dominated by $\omega$ which by purity implies that there is a $\lambda>0$ such that $\ip\Psi{T(a)\otimes D_B(b)\Psi} = \lambda \omega(ab)$ for all $(a,b)\in\A_A\times\A_B$.
Let $\Psi=\sum_{i=1}\lambda_i\Phi_i^A\ox\Phi_i^B$, $k\in\NN\cup\{\oo\}$, be the Schmidt decomposition of $\Psi$.
Since $\1\otimes D_B(\A_B)\Psi = \K_A\otimes\K_B$, we can pick $b_{ij}\in \A_B$ so that $D_B(b_{ij})\Psi = \Phi_i^A\otimes \Phi_j^B$.
Then 
\begin{align*}
    \ip{\Phi_i^A}{T(a)\Phi_j^A} 
    &= \sum_k \frac{\lambda_k}{\lambda_i} \ip{\Phi^A_k}{D(a)\Phi_j^A} \ip{\Phi_k^B}{\Phi_i^B}\\
    &=\lambda_i^{-1} \ip\Psi{(T(a)\Phi_j^A)\otimes \Phi_k^B}\\
    &=\lambda_i^{-1} \ip\Psi{D_A(a)\otimes D_B(b_{ij})\Psi} =\lambda_i \omega(ab_{ij}).
\end{align*}
This shows that the matrix elements of $T(a)$, $a\in\A_A$, are fully determined.
In particular, we get $T(a) = \lambda D_A(a)$ for all $a\in\A_A$ as $D_A$ is also cp-dominated by $D_A$.
The same argument shows the claim for the second system.

\ref{it:schmidt}:
    As already noted in \cref{it:stinespring_irrep}, this implies that $(\pi_A,\pi_B,\Omega)$ satisfy the properties of \cref{it:normal_form} of \cref{thm:pure_tame_states}.
    Since local isometries do not change the Schmidt rank, we know that $\Omega=W_A\ox W_B\Psi$ has Schmidt rank $k$.
    From \cref{thm:tame_GNS}, we know that $\pi_\omega(ab)=\pi_A(a)\ox\pi_B(b)$ and $\Omega_\omega=\Omega$.
    Therefore, the GNS vector has Schmidt rank $k$ with respect to the tensor splitting of the GNS space.
\end{proof}

We collect all direct consequences that this has for the claimed equivalence of Definitions \ref{it:alg}, \ref{it:op_cof} and \ref{it:op_split}: 

\begin{cor}
    \begin{enumerate}[(1)]
        \item\label{it:1} 
            \cref{it:op_cof,it:op_split} are equivalent.
        \item\label{it:2} 
            If a pure bipartite state admits a finite-dimensional compression, then it is tame. Therefore, $\SRdef{it:op_cof}(\omega)=\SRdef{it:op_split}(\omega)=\oo$ for all wild bipartite pure states.
        \item\label{it:3}
            Definitions \ref{it:alg}, \ref{it:op_cof} and \ref{it:op_split} agree on bipartite pure states $\omega$ so that $\SRdef{it:op_split}(\omega)<\oo$.
        \item\label{it:4}
            Definitions \ref{it:alg}, \ref{it:op_cof}, and \ref{it:op_split} agree on wild bipartite pure states. 
    \end{enumerate} 
\end{cor}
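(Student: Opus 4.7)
My plan is to reduce everything to \cref{thm:finite_compression}, which converts any finite-dimensional compression into one with tensor-splitting structure. Items \ref{it:1} and \ref{it:2} should then fall out directly, and items \ref{it:3} and \ref{it:4} will follow after one explicit construction from the GNS data.

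I would begin with item \ref{it:1}. Any compression of the form in \cref{it:op_split} is automatically a compression in the sense of \cref{it:op_cof} of total dimension $k^{2}$, which gives $\SRdef{it:op_cof}(\omega)\le\SRdef{it:op_split}(\omega)$. For the converse, I would start from an arbitrary finite-dimensional compression and invoke parts \ref{it:tensor_split}--\ref{it:schmidty} of \cref{thm:finite_compression} to produce, without increasing the total dimension, a compression whose data $(D_A,D_B,\Psi)$ match the form of \cref{it:op_split} with $k=\sqrt{\dim\K'}\le\sqrt{\dim\K}$. Taking infima gives $\SRdef{it:op_split}(\omega)\le\SRdef{it:op_cof}(\omega)$; when both ranks are infinite they agree trivially. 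Item \ref{it:2} then follows immediately from \cref{thm:finite_compression}.\ref{it:schmidt}, which says that any finite-dimensional compression can be transformed into one whose $\Psi$ has finite vector Schmidt rank equal to $\SRdef{it:alg}(\omega)$; in particular $\SRdef{it:alg}(\omega)<\oo$, which by the algebraic definition forces $\omega$ to be tame. Contrapositively, wild states admit no finite-dimensional compressions, so $\SRdef{it:op_cof}(\omega)=\SRdef{it:op_split}(\omega)=\oo$, proving item \ref{it:4} as well.

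For item \ref{it:3}, assume $\SRdef{it:op_split}(\omega)<\oo$. By \ref{it:1}, $\SRdef{it:op_cof}(\omega)<\oo$, and by \ref{it:2}, $\omega$ is tame with $\SRdef{it:alg}(\omega)<\oo$. The inequality $\SRdef{it:alg}(\omega)\le\SRdef{it:op_cof}(\omega)$ is obtained by applying \cref{thm:finite_compression}.\ref{it:schmidt} to a compression attaining $\SRdef{it:op_cof}(\omega)$. For the reverse inequality $\SRdef{it:op_split}(\omega)\le\SRdef{it:alg}(\omega)$, I would construct an explicit tensor-splitting compression directly from the GNS data guaranteed by tameness: by \cref{thm:tame_GNS}, $\H_\omega=\H_A\ox\H_B$, $\pi_\omega(ab)=\pi_A(a)\ox\pi_B(b)$, and $\Omega_\omega=\sum_{\alpha=1}^k\lambda_\alpha\Phi_\alpha^A\ox\Phi_\alpha^B$ with $k=\SRdef{it:alg}(\omega)$. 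Setting $P_j=\sum_\alpha\kettbra{\Phi_\alpha^j}$, $\K_j=P_j\H_j\cong\CC^k$, and $D_j(\placeholder)=P_j\pi_j(\placeholder)P_j:\A_j\to\B(\K_j)$ (unital when $P_j$ is regarded as the identity of $\B(\K_j)$), one has $(P_A\ox P_B)\Omega_\omega=\Omega_\omega$, and hence $\omega(ab)=\ip{\Omega_\omega}{D_A(a)\ox D_B(b)\Omega_\omega}$ exhibits the desired compression of dimension $k$.

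The substantive work has already been discharged in \cref{thm:finite_compression}; what remains is bookkeeping and the GNS-level construction above. The only mildly subtle point I anticipate is in the converse direction of item \ref{it:3}: for a tame state the local GNS spaces $\H_j$ need not be finite-dimensional even when $\SRdef{it:alg}(\omega)<\oo$, so one cannot work with the full $\H_j$ and must instead project down to the $k$-dimensional spans of the Schmidt vectors to obtain a genuine compression with target $\MM_k$.
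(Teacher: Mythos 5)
Your proof is correct and follows essentially the same route as the paper: all four items are read off from \cref{thm:finite_compression} (you invoke \cref{it:schmidt} of that proposition where the paper invokes \cref{it:stinespring_irrep} together with \cref{thm:pure_tame_states} to get tameness, which amounts to the same thing, and you prove item \ref{it:4} contrapositively rather than as a formal consequence of \ref{it:1} and \ref{it:2}). The explicit GNS-projection construction you add for the direction $\SR^{\textup{(C)}}(\omega)\le\SR^{\textup{(A)}}(\omega)$ in item \ref{it:3} is not strictly needed there---it already follows from \cref{it:schmidty,it:schmidt} of \cref{thm:finite_compression}---but it is precisely the construction the paper uses in the subsequent lemma handling tame states with $\SR^{\textup{(C)}}(\omega)=\oo$, so it is harmless and correct.
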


\begin{proof}
    \ref{it:1} is immediate from \cref{it:tensor_split,it:cyclicity,it:schmidty} of \cref{thm:finite_compression}.
    By \cref{thm:pure_tame_states}, \ref{it:2} follows from \cref{it:stinespring_irrep} of \cref{thm:finite_compression}.
    \ref{it:3} is proved in \cref{it:schmidt} of \cref{thm:finite_compression}.
    \ref{it:4} is is a direct consequence \ref{it:1} and \ref{it:2}. 
\end{proof}

The remaining step in the proof is:

\begin{lem}
    Definitions \ref{it:alg}, \ref{it:op_cof} and \ref{it:op_split} agree on tame bipartite pure states $\omega$ so that $\SRdef{it:op_split}(\omega)=\oo$.
\end{lem}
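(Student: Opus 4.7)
The remaining case is \(\omega\) tame with \(\SRdef{it:op_split}(\omega) = \infty\); since item \ref{it:1} of the preceding corollary already gives \(\SRdef{it:op_cof}(\omega) = \SRdef{it:op_split}(\omega)\), it suffices to show \(\SRdef{it:alg}(\omega) = \infty\). My plan is to prove the contrapositive by establishing the sharp bound \(\SRdef{it:op_split}(\omega) \le \SRdef{it:alg}(\omega)\) for every tame pure state, so that a finite value of \(\SRdef{it:alg}(\omega)\) would immediately contradict the standing assumption \(\SRdef{it:op_split}(\omega)=\infty\).

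To construct the compression, I would assume \(\SRdef{it:alg}(\omega) = k < \infty\) and invoke \cref{thm:tame_GNS}: the GNS space splits canonically as \(\H_\omega = \H_A \otimes \H_B\) with \(\pi_\omega(ab) = \pi_A(a) \otimes \pi_B(b)\) for irreducible representations \(\pi_j\), and the GNS vector admits a Schmidt decomposition \(\Omega_\omega = \sum_{\alpha=1}^k \lambda_\alpha\, \Phi_\alpha^A \otimes \Phi_\alpha^B\). Let \(P_j := \sum_{\alpha=1}^k \kettbra{\Phi_\alpha^j}\) be the Schmidt support projections, set \(\K_j := P_j\H_j \cong \CC^k\), and define unital completely positive maps \(D_j : \A_j \to \B(\K_j) \cong \MM_k\) by \(D_j(x) := P_j \pi_j(x) P_j\). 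Take \(\Psi := \Omega_\omega\) viewed as a vector in \(\K_A \otimes \K_B\).

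The verification is routine: \(\Omega_\omega\) already lies in \(\K_A \otimes \K_B\), so \((P_A \otimes P_B)\Omega_\omega = \Omega_\omega\), and hence
\[
    \omega(ab) = \ip{\Omega_\omega}{\pi_A(a) \otimes \pi_B(b)\, \Omega_\omega} = \ip{\Psi}{D_A(a) \otimes D_B(b)\, \Psi}
\]
for all \((a,b)\in\A_A\times\A_B\). This exhibits a compression satisfying \cref{it:op_split} of dimension \(k\), which forces \(\SRdef{it:op_split}(\omega) \le k\). I do not anticipate any real obstacle: once \cref{thm:tame_GNS} is available, the construction is just truncation of the GNS data to the Schmidt support, and the only genuinely checkable point is that each \(D_j\) is unital on \(\K_j\), which is automatic since \(D_j(1_{\A_j}) = P_j\) acts as the identity on \(\K_j = P_j\H_j\). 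Combined with items \ref{it:2}--\ref{it:4} of the preceding corollary, this completes the proof of \cref{thm:schmidt_rank} on the equivalence of \cref{it:alg,it:op_cof,it:op_split}.
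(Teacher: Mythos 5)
Your proof is correct and is essentially identical to the paper's: both assume $\SRdef{it:alg}(\omega)=k<\oo$, take the tensor splitting of the GNS representation for a tame pure state, truncate $\pi_A$ and $\pi_B$ by the Schmidt support projections of $\Omega_\omega$ to obtain unital completely positive maps into $\MM_k$, and use $\Omega_\omega$ itself as the compression vector. The only cosmetic difference is that you route the splitting through \cref{thm:tame_GNS} where the paper cites \cref{thm:pure_tame_states} directly; the construction and verification are the same.
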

\begin{proof}
    We need to prove that $\SRdef{it:op_split}=\oo$ implies $\SRdef{it:alg}=\oo$.
    We do this by showing the contrapositive: $\SRdef{it:alg}<\oo$ implies $\SRdef{it:op_split}<\oo$:
    So, assume $k:=\SRdef{it:alg}(\omega)<\oo$.
    Let $\pi_j:\A_j\to\B(\H_j)$, and $\Omega\in\H_A\ox\H_B$ be as in \cref{thm:pure_tame_states} so that $\ip\Omega{\pi_A(a)\ox\pi_B(b)\Omega}=\omega(ab)$ for all  $(a,b)\in\A_A\times\A_B$. 
    Let $\Omega=\sum_{\alpha=1}^k\lambda_\alpha \Phi^A_\alpha\ox\Phi^B_\alpha$ be the Schmidt decomposition, set $\K_j=\lin\{\Phi_1^j,\ldots,\Phi_k^j\}\equiv\CC^k$ and let $P_j$ be the orthogonal projection onto $\K_j$.
    Define $C_j:= P_j\pi_j(\placeholder)P_j:\A_j\to\B(\K_j)\equiv\MM_k$ and note that $\Omega\in\K_A\ox\K_B\subset\H_A\ox\H_B$. 
    Then $(C_A,C_B,\Omega)$ satisfy the criteria of \cref{it:op_split}, so that we indeed find $\SRdef{it:op_split}(\omega)\le k <\oo$.
\end{proof}

Combining these results, we see that the Definitions \ref{it:alg}, \ref{it:op_cof}, and \ref{it:op_split} agree on all bipartite pure states.
\qed

\subsubsection*{Step 2. Equivalence of \cref{it:op_split,it:factor}}

If the Schmidt rank in the sense of \cref{it:op_split} is finite, there are unital completely positive maps $C_j:\A_j\to\MM_k$ and a $\Psi\in\CC^k\ox\CC^k$ such that $\omega(ab)=\psi(C_A(a)\ox C_B(b))$, where $\psi$ is the pure state on $\MM_k\ox\MM_k$ implemented by $\Psi$.
$\psi$ corresponds to a completely positive map $\Gamma_\psi : \MM_k\to\MM_k^*$ as in \cref{eq:state_cp}.
We define completely positive maps $\alpha$ and $\beta$ by $\alpha = C_A$ and $\beta = C_B^*\circ\Gamma_\psi$.
These indeed factor $\Gamma_\omega$ through $\MM_k$: $\beta(\alpha(a))(b)= \Gamma_\psi(C_A(a))(C_B(b))=\psi(C_A(a)C_B(b)) = \omega(ab)$.
This is illustrated in the following commuting diagram 
\begin{equation}
    \begin{tikzcd}
        \A_A \arrow{rr}{\Gamma_\omega} \arrow{d}{C_A} 
        &&\A_B^*\\
        \MM_k\arrow{rr}{\Gamma_\psi}\arrow{urr}{\beta} &&\MM_k^*\arrow{u}{C_B^*}
    \end{tikzcd}
\end{equation}
This shows $\SRdef{it:factor}(\omega) \le \SRdef{it:op_split}(\omega)$.

Let $\alpha$ and $\beta$ be completely positive maps factorizing $\Gamma_\omega$ through $\MM_k$.
By \cref{thm:factorization}, we may assume $\alpha$ to be unital and $\beta(\1)$ to be a state.
Therefore, there is a state $\varphi$ on $\MM_k\ox \A_B$ defined by $\varphi(X\ox b) = \beta(X)(b)$.
By construction, it then holds that $\omega(ab) = \varphi(\alpha(a)\ox b)$.
By \cref{thm:one_sys_finite}, there is a unital completely positive map $C_B:\A_B\to\MM_k$ and a unit vector $\Phi\in\CC^k\ox\CC^k$ such that $\varphi(X\ox b)=\ip\Phi{X\ox C_B(b)\Phi}$.
Actually in \cref{thm:one_sys_finite} the map $C_B$ maps to $\MM_r$, with $r\le k$ being the Schmidt rank of the vector $\Phi$, but we can just take a direct sum with some arbitrary unital completely positive map to ensure what we claimed.
If we now set $C_A=\alpha$, the triple $(C_A,C_B,\Phi)$ satisfies the requirements in \cref{it:op_split} because $\ip\Phi{C_A(a)\ox C_B(b)\Phi}=\varphi(\alpha(a)\ox b) = \omega(ab)$.
Since this works for all $k$ that admit a factorization of $\Gamma_\omega$ through $\MM_k$, it follows that $\SRdef{it:factor}(\omega)\ge \SRdef{it:op_split}(\omega)$.
\qed

\subsubsection*{Step 3. Equivalence of \cref{it:alg,it:rank,it:interval}}

In particular, we have to show the equalities $\rank(\omega_A)=\rank(\omega_B)$ and $\dim\,[0,\omega_A]=\dim\,[0,\omega_B]$.
Otherwise, \cref{it:rank,it:interval} are not even well-defined.
We start with the following Lemma which connects the rank of a state (see \cref{eq:rank}) to properties of the GNS representation.

\begin{lem}\label{thm:rank}
    Let $\varphi$ be a factor state on a $C^*$-algebra $\B$.
    \begin{enumerate}[(1)]
        \item 
            If $\pi_\varphi(\B)$ is type II or III, then  $\rank(\varphi)=\oo$.
            If $\pi_\varphi(\B)$ has type I, then  $\pi_\varphi(\B)'$ has type I$_{\rank(\varphi)}$. 
        \item 
            $\rank\varphi = \sqrt{\dim\,[0,\varphi]}$. 
    \end{enumerate} 
\end{lem}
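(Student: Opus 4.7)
The plan is to transport everything to the commutant $\M':=\pi_\varphi(\B)'$ via the Radon-Nikodym bijection \eqref{eq:radon_niko}; write also $\M:=\pi_\varphi(\B)''$. The central auxiliary step is the identification of which elements of $[0,\1]_{\M'}$ correspond to scalar multiples of pure states in $[0,\varphi]$. My claim is that a positive functional $\psi\in[0,\varphi]$ with Radon-Nikodym derivative $Q\in[0,\1]_{\M'}$ is a positive scalar multiple of a pure state if and only if the support projection $s(Q)\in\M'$ is minimal. To verify this, I would construct the GNS datum of $\psi$ explicitly as $\bigl(\pi_\varphi|_{s(Q)\H_\varphi},\,s(Q)\H_\varphi,\,Q^{1/2}\Omega_\varphi\bigr)$, using that $Q^{1/2}\in\M'$ commutes with $\M$ so that $[\M Q^{1/2}\Omega_\varphi] = s(Q)\H_\varphi$. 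The commutant in $\B(s(Q)\H_\varphi)$ of the restricted algebra is $s(Q)\M's(Q)$, and irreducibility of $\pi_\psi$ becomes exactly the condition $s(Q)\M's(Q)=\CC s(Q)$, i.e.\ minimality of $s(Q)$.

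For Part 1, this reduces the rank to a counting of minimal projections in $\M'$. I first show that the existence of a non-zero minimal projection $e\in\M'$ forces $\M$ to be type I: since $\Omega_\varphi$ is cyclic for $\M$, it is separating for $\M'$, so $e\Omega_\varphi\ne0$, and $e\H_\varphi$ is a non-trivial $\M$-invariant subspace on which the commutant of the restricted action is $e\M'e=\CC e$. The restriction of $\M$ to $e\H_\varphi$ is then a normal irreducible representation of the factor $\M$ whose kernel is a $\sigma$-weakly closed two-sided ideal; triviality of the center of $\M$ forces this kernel to be zero, so $\M\cong\B(e\H_\varphi)$ is type I. The contrapositive gives $\rank(\varphi)=\oo$ when $\M$ has type II or III. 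Conversely, if $\M$ has type I, cyclicity of $\Omega_\varphi$ implements the standard decomposition $\H_\varphi=\K\ox\K'$ with $\M=\B(\K)\ox\1$ and $\M'=\1\ox\B(\K')$, so $\M'$ has type I$_{\dim\K'}$. A rank argument — the identity on $\K'$ cannot be written as a sum of fewer than $\dim\K'$ positive multiples of rank-one projections, and this minimum is attained by an orthonormal basis — then yields $\rank(\varphi)=\dim\K'$, so $\M'$ is type I$_{\rank(\varphi)}$.

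For Part 2, I use that the Radon-Nikodym bijection is the restriction of a real-linear order embedding $(\M')_{sa}\hookrightarrow\B^*_{sa}$, as stated around \eqref{eq:radon_niko}. Consequently, the real linear hull of $[0,\varphi]$ has the same dimension as $(\M')_{sa}$. In the type I$_n$ case from Part 1 with $n<\oo$, this dimension equals $n^2=\rank(\varphi)^2$, and in all remaining cases (types II, III, and I$_\infty$) both $\rank(\varphi)$ and $\dim\,[0,\varphi]$ are infinite. Either way, $\rank(\varphi)=\sqrt{\dim\,[0,\varphi]}$ follows.

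The main obstacle I anticipate is the explicit identification of pure states within $[0,\varphi]$ via minimal projections of $\M'$, which demands a careful realization of the GNS data of the dominated functional $\psi$ together with the cyclic-separating duality that upgrades the existence of a minimal projection in $\M'$ to type I-ness of $\M$. Once this step is in place, both parts of the lemma follow by standard von Neumann algebraic bookkeeping.
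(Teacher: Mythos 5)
Your proof is correct and follows essentially the same route as the paper's: both transport the problem to the commutant $\pi_\varphi(\B)'$ via the Radon--Nikodym bijection, identify the pure components of $\varphi$ with minimal projections there, and then read off the type of $\pi_\varphi(\B)'$ and the dimension of $\lin\,[0,\varphi]$ from that. Your formulation via the support projection $s(Q)$ (rather than asserting, as the paper does, that the Radon--Nikodym derivative of $p_i\psi_i$ is itself a projection) is if anything the more careful variant, since a dominated multiple of a pure state a priori only yields a scalar multiple of a minimal projection.
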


Recall that marginals of pure bipartite states are factor states (see \cref{thm:pure_tame_states_lemma}).
 
\begin{proof}
    Set $\M=\pi_\varphi(\B)'$.
    Denote by $\beta: \M \to \B^*$ the linear extension of the Radon-Nikodym bijection between $[0,\1]_\M$ and $[0,\varphi]$ (cf.\ beginning of \cref{sec:proof}).
    If $\varphi= \sum_{i=1}^k p_i\psi_i$ for pure states $\psi_i$, then $p_i\psi_i\in[0,\varphi]$. 
    Set $P_i = \beta^{-1}(p_i\psi_i)\in\M$.
    Then $0\le P_i\le \1$ and $\sum_{i=1}^k P_i =\1$.
    Sine $\psi_i$ is pure, $P_i$ is a projection (because the projections are the extremal points of the unit interval) and all positive operators in $\M$ which are dominated by $P_i$ are proportional to $P_i$.
    Therefore all $P_i\M P_i \cong \CC$ so that the $P_i$ are orthogonal one-dimensional projections.
    Since one-dimensional projections only exist in type I algebras, the first claim of the first item follows.
    Since $\M$ is a factor and we can write the identity of $\M$ as a sum on $k$ minimal projections, $\M$ has type I$_k$ which proves the first item.
    The second item follows because 
    \[
        \dim\,[0,\varphi] := \dim(\lin\,[0,\varphi]) = \dim \M.
    \] 
    The vector space dimension of a factor is infinite, except if it is a type I$_n$ factor in which case $\dim\M = n^2$. 
\end{proof}

\begin{lem}
    Let $\omega$ be a bipartite pure state. Set $k=\SRdef{it:alg}(\omega)\in\NN\cup\{\oo\}$. If $\omega$ is a tame state, then $\pi_{\omega_A}(\A_A)'$ and $\pi_{\omega_B}(\A_B)'$ are type I$_k$ factors.
\end{lem}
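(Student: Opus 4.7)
The plan is to read off the commutants directly from the explicit form of the GNS representation that tameness provides. By \cref{thm:tame_GNS}, tameness gives a canonical tensor splitting $\H_\omega=\H_A\ox\H_B$ on which the GNS representation factors as $\pi_\omega(ab)=\pi_A(a)\ox\pi_B(b)$ with $\pi_j:\A_j\to\B(\H_j)$ irreducible, and a Schmidt decomposition
\[
    \Omega_\omega=\sum_{\alpha=1}^{k}\lambda_\alpha\,\Phi^A_\alpha\ox\Phi^B_\alpha,\qquad k=\SRdef{it:alg}(\omega)\in\NN\cup\{\oo\}.
\]
Setting $P_B=\sum_{\alpha=1}^k\kettbra{\Phi^B_\alpha}$, the same corollary identifies the GNS representation of the marginal as $\H_{\omega_A}=\H_A\ox P_B\H_B$, $\pi_{\omega_A}=\pi_A(\placeholder)\ox P_B$, $\Omega_{\omega_A}=\Omega_\omega$. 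The only work left is to compute the commutant of $\pi_{\omega_A}(\A_A)$ on $\H_A\ox P_B\H_B$.

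First I would note that since $\pi_A$ is irreducible, $\pi_A(\A_A)''=\B(\H_A)$ and $\pi_A(\A_A)'=\CC\1_{\H_A}$. Then the commutation theorem for tensor products of von Neumann algebras, applied on the subspace $\H_A\ox P_B\H_B$, gives
\[
    \pi_{\omega_A}(\A_A)'=\bigl(\pi_A(\A_A)\ox P_B\bigr)'=\CC\1_{\H_A}\ox\B(P_B\H_B).
\]
This is a type I factor whose multiplicity equals $\dim P_B\H_B=k$, which handles both the finite case ($\cong\MM_k$, a type I$_k$ factor) and the infinite case ($\cong\B(\ell^2)$, a type I$_\oo$ factor). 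The symmetric argument, exchanging the roles of $A$ and $B$ and using $P_A=\sum_\alpha\kettbra{\Phi^A_\alpha}$, yields that $\pi_{\omega_B}(\A_B)'$ is likewise a type I$_k$ factor.

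There is no real obstacle here: the substantive work has already been done in \cref{thm:pure_tame_states} and \cref{thm:tame_GNS}, which force the tensor-product form of the GNS data for tame states. The only minor point to keep track of is that the commutant must be taken on the reduced space $\H_A\ox P_B\H_B$ (not on $\H_A\ox\H_B$), so that the trivial $\B(P_B\H_B)$ factor really does have dimension $k$ equal to the vector Schmidt rank. Once that is checked, the proof is a one-line application of tensor-product commutants.
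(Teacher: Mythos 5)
Your proof is correct and follows the same route as the paper, which simply declares the claim immediate from the explicit GNS representation of the marginals given in \cref{thm:tame_GNS}; you have merely spelled out the commutant computation $\bigl(\pi_A(\A_A)\ox P_B\bigr)'=\CC\1\ox\B(P_B\H_B)$ on the reduced space, including the correct observation that the commutant must be taken on $\H_A\ox P_B\H_B$ so that its dimension is governed by $k$.
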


\begin{proof}
    This is immediate from the explicit construction of the GNS representation of marginals of tame bipartite pure states in \cref{it:tame_GNS_marginal} of \cref{thm:tame_GNS}.
\end{proof}

If we combine these two Lemmas, the equivalence of \cref{it:alg,it:interval,it:rank} follows.
\qed

\subsection{Uniqueness of minimal compressions}\label{sec:uniqueness}

If a state satisfies Haag-duality, then we can show that there is a unique minimal compression: 

\begin{prop}\label{thm:uniqueness}
    If $\omega$ satisfies Haag-duality, a unique compression $(C_A,C_B,\K,\Psi)$ with respect to $\omega$ satisfying $[C_j(\A_j)\Psi]=\K$, $j=A,B$, and $\sqrt{\dim\K}=\SR(\omega)$ exists.
\end{prop}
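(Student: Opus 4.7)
The plan is to first construct a canonical cyclic minimal compression directly from the GNS representation of $\omega$, and then show that any other cyclic minimal compression is unitarily equivalent to it.

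For existence in the tame case (which is automatic when $\SR(\omega)<\oo$), I would invoke \cref{thm:tame_GNS} to split $\H_\omega=\H_A\ox\H_B$ with irreducible representations $\pi_A,\pi_B$ and Schmidt decomposition $\Omega_\omega=\sum_{\alpha=1}^{k}\lambda_\alpha\,\Phi_\alpha^A\ox\Phi_\alpha^B$, $k=\SR(\omega)$. Letting $\K_j:=\lin\{\Phi_\alpha^j:\alpha=1,\dots,k\}$ with projection $P_j$, the canonical compression is $\K^c=\K_A\ox\K_B$, $C_A^c(a)=(P_A\pi_A(a)P_A)\ox\1$, $C_B^c(b)=\1\ox(P_B\pi_B(b)P_B)$, and $\Psi^c=\Omega_\omega$. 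Cyclicity follows since irreducibility of $\pi_A$ makes $P_A\pi_A(\A_A)P_A$ weakly dense in $\B(\K_A)$, and positivity of all $k$ Schmidt coefficients ensures $(\B(\K_A)\ox\1)\Omega_\omega$ spans $\K_A\ox\K_B$; symmetrically for $B$. Here $\sqrt{\dim\K^c}=k=\SR(\omega)$. In the wild Haag-dual case, where $\SR(\omega)=\oo$, the GNS triple itself is the natural candidate, using that Haag-duality forces $\M_A=\M_B'$ with both factors, and auxiliary standardness arguments yield cyclicity of $\Omega_\omega$ for $\M_A$ and $\M_B$ separately.

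For uniqueness, let $(C_A,C_B,\K,\Psi)$ be any cyclic compression with $\sqrt{\dim\K}=\SR(\omega)<\oo$. \cref{thm:finite_compression} then forces $\K=\K_A\ox\K_B$, $C_A=D_A\ox\1$, $C_B=\1\ox D_B$, with minimal Stinespring dilations $(\pi_j^{S},\H_j^{S},W_j^{S})$ of $D_j$ that are irreducible and $\Psi\in\K_A\ox\K_B$ fully entangled with $\dim\K_j=\SR(\omega)$. Setting $\pi:=\pi_A^S\ox\pi_B^S$ on $\H_A^S\ox\H_B^S$ and $V:=W_A^S\ox W_B^S$, the vector $\Omega:=V\Psi$ implements $\omega$; full entanglement of $\Psi$ combined with irreducibility of each $\pi_j^S$ makes $\Omega$ cyclic for $\pi$, so $(\pi,\H_A^S\ox\H_B^S,\Omega)$ is a GNS representation of $\omega$. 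Uniqueness of the GNS plus uniqueness of the tensor decomposition of $\H_\omega$ into commuting irreducible factors — a direct consequence of Haag-duality via \cref{thm:tame_GNS} — yields unitaries $U_j:\H_j^S\to\H_j$ intertwining $\pi_j^S$ with $\pi_j$ and with $(U_A\ox U_B)\Omega=\Omega_\omega$. Restricting each $U_j$ along $W_j^S$ gives unitaries $\K_j\to\K_j^c$ (the dimensions agree, both equal to $\SR(\omega)$), which assemble to the desired unitary equivalence $(C_A,C_B,\K,\Psi)\cong(C_A^c,C_B^c,\K^c,\Psi^c)$.

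The main obstacle is the wild case: \cref{thm:finite_compression} only supplies the tensor splitting in finite dimensions, so for $\SR(\omega)=\oo$ one must establish an infinite-dimensional analogue. Concretely, one has to show that Haag-duality of $\omega$ lifts to Haag-duality of $C_A(\A_A)''$ and $C_B(\A_B)''$ on $\K$; once this is done, purity of $\omega$ and cyclicity of $\Psi$ force these algebras to be factors generating $\B(\K)$, after which \cref{thm:pure_tame_states_lemma} matches their types with those of $\M_A,\M_B$ and uniqueness of Stinespring identifies the compression with the GNS of $\omega$ as in the tame case.
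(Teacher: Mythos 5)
Your construction in the tame finite-rank case is correct and coincides with what the paper obtains (the canonical compression lives on $\V_A\cap\V_B=P_A\H_A\ox P_B\H_B$), but the proposal has two genuine gaps, both in the infinite-Schmidt-rank regime, which is where Haag-duality actually does the work. First, for existence in the wild case you assert that ``auxiliary standardness arguments yield cyclicity of $\Omega_\omega$ for $\M_A$ and $\M_B$ separately,'' so that the GNS triple itself is the sought compression. Haag-duality does not give this: cyclicity of $\Omega_\omega$ for $\M_A$ is equivalent to $\Omega_\omega$ being separating for $\M_A'=\M_B$, which is an extra hypothesis (it already fails for tame states whose Schmidt rank is smaller than $\dim\H_A$, and nothing in \cref{def:haag} rules out the analogous failure for wild states). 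The correct move — and the one the paper makes — is to note that the projections $Q_j$ onto $\V_j=[\pi_\omega(\A_j)\Omega_\omega]$ satisfy $Q_A\in\M_A'=\M_B$ and $Q_B\in\M_B'=\M_A$ by Haag-duality, hence commute, and to define the compression as $Q\pi_\omega(\placeholder)Q$ with $Q=Q_AQ_B$ projecting onto $\K=\V_A\cap\V_B$; cyclicity then holds by construction. This single construction covers tame and wild states uniformly.

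Second, your uniqueness argument rests entirely on \cref{thm:finite_compression} and you propose to fix the $\SR(\omega)=\oo$ case by proving ``an infinite-dimensional analogue'' of the tensor splitting $\K=\K_A\ox\K_B$ with $C_A(\A_A)''=\B(\K_A)\ox\1$. This cannot work: for a wild state the algebras $\M_A,\M_B$ (and hence $C_A(\A_A)''$, $C_B(\A_B)''$ in any faithful emulation) are not type I, so no such splitting exists — that is precisely the content of \cref{thm:pure_tame_states}. (The gap also covers tame states of infinite Schmidt rank, for which \cref{thm:finite_compression} is likewise unavailable.) The paper's uniqueness proof avoids tensor splittings altogether: it passes to the minimal Stinespring dilation $(\tilde\pi,\tilde\K,W)$ of the joint map $C:\A_{max}\to\B(\K)$, observes that $[\tilde\pi(\A)W\Psi]$ carries the GNS representation of $\omega$, and then uses only the algebraic identity $C(a)C(b)=C(ab)$, the cyclicity hypotheses $[C_j(\A_j)\Psi]=\K$, and the commutativity of $Q_A$ and $Q_B$ (again supplied by Haag-duality) to show that the dilation projection is exactly the projection onto $\V_A\cap\V_B$. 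If you want a complete proof you should replace the tensor-splitting strategy by an argument of this kind rather than trying to extend \cref{thm:finite_compression}.
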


The uniqueness here means uniqueness up to unitary equivalence with local unitaries.

\begin{proof}
    Existence:
    Denote by $Q_j$ the projection onto $\V_j:=[\pi_\omega(\A_j)\Omega_\omega]\subset \H_\omega$.
    By Haag-duality $Q_A\in \M_A'=\M_B$ and $Q_B\in\M_B'=\M_A$ so that $Q_A$ and $Q_B$ commute.
    Thus $Q=Q_AQ_B$ is an orthogonal projection and $Q\H_\omega=\V_A\cap\V_B$.
    Define $C = Q\pi_\omega(\placeholder)Q$, $C_j=C|_{\A_j}$, $\K = \V_A\cap\V_B$ and $\Psi=\Omega_\omega$.
    It follows that $C_A(a)C_B(b) = Q\pi_\omega(a)Q_AQ_B\pi_\omega(b)Q=QQ_A\pi_\omega(ab)Q_BQ=C(ab)$ which implies $C_A(a)C_B(b) = C(b^*a^*)^*= C(a^*b^*)^*=C_B(b)C_A(a)$ and $\ip\Psi{C_A(a)C_B(b)\Psi}=\ip{\Omega_\omega}{\pi_\omega(ab)\Omega_\omega}=\omega(ab)$.
    By construction, we have $[C_j(\A_j)\Psi]=[Q_BQ_A\pi_\omega(\A_j)\Omega_\omega]=[Q_BQ_A\V_j]=\K$.

    Uniqueness: 
    Consider the unital completely positive map $C:\A_{max}\to\B(\K)$ whose marginals are $C_A$ and $C_B$.
    Let $(\tilde\pi,\tilde\K,W)$ be the minimal Stinespring dilation of the unital completely positive map $C:\A_{max}\to \B(\K)$ and set $P=W^*W$.
    Define $\Omega=W\Psi$ and $\V_j=[\tilde\pi(\A_j)\Omega]$.
    The subspace $\H=[\tilde\pi(\A)\Omega]$ contains $\V_A$ and $\V_B$ and the restriction $\pi=\tilde\pi(\placeholder)|_{\H}$ is the GNS representation of $\omega$.
    The equation $C(a)C(b)=C_A(a)C_B(b)=C(ab)$ is equivalent to $P\tilde\pi(a)P\tilde\pi(b)P=P\pi(ab)P$ and implies
    \begin{equation*}\label{eq:help2}
       \ip\Psi{P\Phi}=\ip\Psi\Phi\qquad\forall(\Psi,\Phi)\in\V_A\times\V_B. 
       \tag{$\triangle$}
    \end{equation*}
    Furthermore, $[C_j(\A_j)\Psi]=\K$, $j=A,B$, implies
    \begin{equation*}\label{eq:help3}
        P\tilde\K = [P_j\V_j], \quad j=A,B.  
        \tag{{\raisebox{0.2ex}{\footnotesize$\bigcirc$}}}
    \end{equation*}
    To show the uniqueness it suffices to show that $P$ is the projection onto $\V_A\cap\V_B$ since this is how we constructed the compression in the first part of the proof.
    We claim that $P$ acts as the identity on $\V_A\cap \V_B$.
    This is seen from \eqref{eq:help2} $\norm{P\Psi}^2=\ip\Psi{P\Psi}=\ip\Psi\Psi=\norm\Psi^2$ for $\Psi\in\V_A\cap\V_B$.
    We now assume that $P\Psi=\Psi$ and that $\Psi \perp \H_A\cap\H_B$ and have to show that this implies $\Psi=0$.
    Since $\omega$ satisfies Haag duality the projection $Q_A$ onto $\V_A$ and $Q_B$ onto $\V_B$ commute as projections on $\H$ and hence also as projections on the potentially larger space $\tilde\K$.
    This commutativity implies that 
    $$
    \tilde\K=\V_A\cap\V_B\oplus\V_A\cap\V_B^\perp \oplus\V_A^\perp\cap\V_B\oplus \V_A^\perp\cap\V_B^\perp.
    $$
    With respect to this, we have $\Psi=0+\Psi_A+\Psi_B+\Psi_\perp$. 
    By \eqref{eq:help3}, there are $\Phi_j\in\V_j$ such that $\Psi=P\Phi_A=P\Phi_B$.
    Therefore, $\norm{\Psi_A}^2=\ip{\Psi_A}{\Psi}= \ip{\Psi_A}{P\Phi_B}=\ip{\Phi_A}{\Phi_B}=0$ where we used \eqref{eq:help2}.
    Similarly, one sees $\Phi_B=0$.
    Consequently, $\Psi=\Psi_\perp\in\H_A^\perp\cap\H_B^\perp$ and it follows that $\norm{\Psi}^2=\ip{P\Psi}{\Phi_A} = \ip{\Psi}{\Phi_A}=0$.
    This proves that $P\tilde\K = \V_A\cap\V_B\subset\H$ and hence, that $C= P\tilde\pi(\placeholder)P = Q_AQ_B\pi(\placeholder)Q_AQ_B$ where $Q_j$ is the projection onto $\V_j$ and $\pi = \tilde\pi(\placeholder)|_\H$.
\end{proof}

\begin{cor}
    Every tame pure state has a unique compression of the form $(C_A\ox\1,C_B\ox\1,\K_A\ox\K_B,\Psi)$ satisfying $[C(\A_A)\ox\1\Psi]=[\1\ox C(\A_B)\Psi]=\K_A\ox\K_B$ and $\dim\K_A=\dim\K_B=\SR(\omega)$.
    If $\SR(\omega)<\oo$ then every compression with $\sqrt{\dim\K}=k$ is of this form.
    It can be obtained from the factorization of the GNS representation \cref{thm:tame_GNS} by restricting the local Hilbert spaces to the local supports of the GNS vector.
\end{cor}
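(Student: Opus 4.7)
The plan is to combine \cref{thm:tame_GNS} with \cref{thm:uniqueness} and compute the resulting objects explicitly using the Schmidt decomposition of the GNS vector. Since tame pure states satisfy Haag-duality by \cref{thm:tame_GNS}.\ref{it:tame_GNS}, \cref{thm:uniqueness} applies and yields a unique compression $(C_A,C_B,\K,\Psi)$ with $[C_j(\A_j)\Psi]=\K$. It remains to show that this compression has the claimed tensor-product form and that its local dimension equals $\SR(\omega)$.

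To do this, I will use the realization $\H_\omega=\H_A\ox\H_B$, $\pi_\omega(ab)=\pi_A(a)\ox\pi_B(b)$, $\Omega_\omega=\sum_{\alpha=1}^k\lambda_\alpha\Phi^A_\alpha\ox\Phi^B_\alpha$ given by \cref{thm:tame_GNS}, with $k=\SR(\omega)$. Since $\pi_A$ and $\pi_B$ are irreducible, $\pi_A(\A_A)$ is $\sigma$-weakly dense in $\B(\H_A)$ and similarly for $B$, so the cyclic subspaces are $\V_A=[\pi_\omega(\A_A)\Omega_\omega]=\H_A\ox P_B\H_B$ and $\V_B=P_A\H_A\ox\H_B$, where $P_j$ is the projection onto $\K_j:=\lin\{\Phi_\alpha^j\}_{\alpha=1}^k$. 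Hence the projections onto $\V_A$ and $\V_B$ are $Q_A=\1\ox P_B$ and $Q_B=P_A\ox\1$, they commute, and their product $Q=P_A\ox P_B$ projects onto $\K_A\ox\K_B$ with $\Omega_\omega\in\K_A\ox\K_B$.

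Now the compression constructed in the proof of \cref{thm:uniqueness} is $C=Q\pi_\omega(\placeholder)Q$ on $\K=\V_A\cap\V_B=\K_A\ox\K_B$ with $\Psi=\Omega_\omega$. Evaluating on simple tensors gives $C_A(a)=(P_A\pi_A(a)P_A)\ox\1_{\K_B}$ and $C_B(b)=\1_{\K_A}\ox(P_B\pi_B(b)P_B)$, exhibiting precisely the tensor-splitting form with $\dim\K_A=\dim\K_B=k=\SR(\omega)$. Uniqueness up to local unitaries follows directly from the uniqueness statement of \cref{thm:uniqueness}.

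For the second assertion, assume $\SR(\omega)=k<\oo$ and let $(C_A',C_B',\K',\Psi')$ be any compression with $\sqrt{\dim\K'}=k$. Applying \cref{thm:finite_compression} (which can only decrease the dimension) forces the output compression to have $\dim\K=\dim\K'=k^2$; thus no truncation actually occurs, and the produced compression coincides with the original one up to unitary equivalence. In particular, the original compression already satisfies \cref{it:tensor_split,it:cyclicity} of \cref{thm:finite_compression}, i.e.\ has the claimed tensor-splitting form and the cyclicity property $[C_j'(\A_j)\ox\1\Psi']=\K'$. By the uniqueness part of \cref{thm:uniqueness}, it must then agree, up to local unitaries, with the canonical compression obtained above. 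The main subtlety in carrying this out rigorously is verifying that the cyclic subspaces $\V_A,\V_B$ really are the full tensor factors $\H_A\ox P_B\H_B$ and $P_A\H_A\ox\H_B$, which is where irreducibility of $\pi_A$ and $\pi_B$ enters essentially; everything else is a direct assembly of previously proved facts.
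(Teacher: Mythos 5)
Your proof is correct and is essentially the argument the paper intends: the corollary is stated without proof as an immediate consequence of \cref{thm:uniqueness} (applicable since tame pure states satisfy Haag-duality by \cref{thm:tame_GNS}), the explicit tensor-product form of the GNS representation with $\V_A=\H_A\ox P_B\H_B$, $\V_B=P_A\H_A\ox\H_B$, and \cref{thm:finite_compression} for the finite-rank case. Your dimension-counting argument showing that no truncation can occur in \cref{thm:finite_compression} when $\dim\K'=\SR(\omega)^2$ is a valid way to make the second assertion precise.
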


\subsection{Properties of the Schmidt rank}\label{sec:properties}

We collect some properties that are enjoyed by the Schmidt rank. These are lower-semicontinuity, multiplicativity under tensor products, and monotonicity under local operations.

\begin{prop}
    The Schmidt rank is $w^*$-lower-semicontinuous on the set of pure states on a fixed bipartite algebra $\A$, i.e.\ if $\omega_\alpha$ is a net of pure states on $\A$ $w^*$-converging to a pure state, then
    \begin{equation}\label{eq:lsc}
        \liminf_\alpha \, \SR(\omega_\alpha) \ge \SR(\lim_\alpha\omega_\alpha).
    \end{equation}
\end{prop}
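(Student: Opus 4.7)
The plan is to use the compression characterization of the Schmidt rank in finite dimensions, \cref{it:op_split}: $\SR(\omega)\le k$ precisely when there exist UCP maps $C_j\colon\A_j\to\MM_k$ and a unit vector $\Psi\in\CC^k\ox\CC^k$ with $\omega(ab)=\ip{\Psi}{C_A(a)\ox C_B(b)\Psi}$. Lower-semicontinuity then reduces to a compactness argument in finite dimensions.

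Assume $k:=\liminf_\alpha\SR(\omega_\alpha)<\oo$, as otherwise nothing is to show, and pass to a cofinal subnet along which $\SR(\omega_\alpha)\le k$. By \cref{it:op_split}, each $\omega_\alpha$ comes with a triple $(C_A^\alpha, C_B^\alpha, \Psi_\alpha)$ of UCP maps $C_j^\alpha\colon\A_j\to\MM_k$ and a unit vector $\Psi_\alpha\in\CC^k\ox\CC^k$. The set of UCP maps $\A_j\to\MM_k$ is compact in the topology of pointwise norm-convergence, since it sits closed in the Tychonoff product $\prod_{x\in\A_j}\{X\in\MM_k:\norm{X}\le\norm{x}\}$, which is compact by finite-dimensionality of $\MM_k$; the unit sphere of $\CC^k\ox\CC^k$ is compact for the same reason. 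Passing to a further subnet, we may assume $C_j^\alpha(x)\to C_j(x)$ in norm for each $x\in\A_j$, $j=A,B$, and $\Psi_\alpha\to\Psi$ in $\CC^k\ox\CC^k$, with $C_A,C_B$ UCP and $\norm{\Psi}=1$.

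Joint continuity of multiplication and the inner product in finite dimensions then yields
\[
    \omega(ab)=\lim_\alpha\omega_\alpha(ab)=\lim_\alpha\ip{\Psi_\alpha}{C_A^\alpha(a)\ox C_B^\alpha(b)\Psi_\alpha}=\ip{\Psi}{C_A(a)\ox C_B(b)\Psi}
\]
for all $(a,b)\in\A_A\times\A_B$, so the triple $(C_A,C_B,\Psi)$ witnesses $\SR(\omega)\le k$ via \cref{it:op_split}, establishing \eqref{eq:lsc}. The only mildly delicate step is the simultaneous extraction of convergent subnets of the three objects $(C_A^\alpha,C_B^\alpha,\Psi_\alpha)$; thanks to the finite-dimensionality of $\MM_k$, they all live in genuinely compact spaces under strong-enough topologies, so the joint extraction and passage to the limit are routine. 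Note that working via \cref{it:op_split} rather than \cref{it:factor} keeps the argument entirely in finite dimensions and avoids appealing to $w^*$-compactness of infinite-dimensional state spaces.
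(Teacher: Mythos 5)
Your proof is correct and follows essentially the same route as the paper: both reduce to the finite-dimensional compression characterization (\cref{it:op_split}), observe that the set of triples $(C_A,C_B,\Psi)$ with UCP maps into $\MM_k$ and a unit vector in $\CC^k\ox\CC^k$ is compact for the pointwise topology, and conclude that the set of states admitting such a compression is $w^*$-closed, hence contains the (pure) limit state. The paper packages the subnet extraction as "the continuous image of a compact set is closed," while you carry it out explicitly; the content is identical.
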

\begin{proof}
    We may assume $k:=\liminf_\alpha\SR(\omega_\alpha)<\oo$ and, by passing to a subnet, that $\SR(\omega_\alpha)\le k$ for all $\alpha$.
    Consider the set $K$ of triples $(C_A,C_B,\Psi)$ with unital completely positive maps $C_j:\A_j\to\MM_k$ and unit vector $\Psi\in\CC^k\ox\CC^k$.
    Equipped with the product topology of the topology of pointwise convergence for the maps $C_A$ and the standard topology for the vector, $K$ is a compact set, and the map $E:K \to \states(\A_{max})$ is continuous with respect to the $w^*$-topology.
    Therefore the image $E(K)$ is $w^*$-closed and hence contains the state $\lim_\alpha\omega_\alpha$ if $\SR(\omega_\alpha)\le k$. 
    Since the limit is pure by assumption, the result follows.
\end{proof}

\begin{prop}\label{thm:SR_is_multiplicative}
    The Schmidt rank is multiplicative under tensor products:
    Let $\omega$ and $\varphi$ be pure states on bipartite algebras $\A$ and $\B$.
    Then $\omega\ox\varphi$ is a pure state on the bipartite algebra $\A\ox_{min}\B$ (with the local algebras being $\A_j\ox_{min}\B_j$, $j=A,B$) and $\SR(\omega\ox\varphi)=\SR(\omega)\cdot \SR(\varphi)$.
\end{prop}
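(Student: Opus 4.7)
The plan is to carry out three steps in sequence: verify that $\A\ox_{min}\B$ is a bipartite algebra for $\A_A\ox_{min}\B_A$ and $\A_B\ox_{min}\B_B$, establish purity of $\omega\ox\varphi$, and then compute its Schmidt rank using the algebraic definition. For the first step, commutativity of the two local algebras and their generation of $\A\ox_{min}\B$ are immediate consequences of the corresponding properties of $\A$ and $\B$. The only subtle point is triviality of the intersection $(\A_A\ox_{min}\B_A)\cap(\A_B\ox_{min}\B_B)=\CC1$, which I would establish by slice maps: for $x$ in the intersection and any state $\psi$ on $\B$, the slice $(\id\ox\psi)(x)$ lies in $\A_A\cap\A_B=\CC1$, and symmetrically for slices by states on $\A$; since slice maps are point-separating on the minimal tensor product, this forces $x\in\CC1$.

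For purity, \cref{lem:productstate} identifies the GNS representation of the product state $\omega\ox\varphi$ on the $\A$-vs-$\B$ bipartite algebra $\A\ox_{min}\B$ as $(\pi_\omega\ox\pi_\varphi,\,\H_\omega\ox\H_\varphi,\,\Omega_\omega\ox\Omega_\varphi)$. Purity of $\omega$ and $\varphi$ makes $\pi_\omega$ and $\pi_\varphi$ irreducible, and the external tensor product of irreducible representations acts irreducibly on the Hilbert space tensor product, so $\omega\ox\varphi$ is pure.

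For the Schmidt rank, I would split into cases. If both factors are tame, \cref{thm:tame_GNS} provides canonical tensor splittings $\H_\omega=\H^\omega_A\ox\H^\omega_B$ and $\H_\varphi=\H^\varphi_A\ox\H^\varphi_B$ separating the local actions. Reorganizing,
\begin{equation*}
    \H_\omega\ox\H_\varphi\cong(\H^\omega_A\ox\H^\varphi_A)\ox(\H^\omega_B\ox\H^\varphi_B),
\end{equation*}
and in this presentation $\A_A\ox_{min}\B_A$ and $\A_B\ox_{min}\B_B$ act on the first and second factors respectively; by uniqueness in \cref{thm:tame_GNS} this is the canonical tame splitting of $\H_{\omega\ox\varphi}$. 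Writing Schmidt decompositions $\Omega_\omega=\sum_\alpha\lambda_\alpha\Phi^{A,\omega}_\alpha\ox\Phi^{B,\omega}_\alpha$ and $\Omega_\varphi=\sum_\beta\mu_\beta\Phi^{A,\varphi}_\beta\ox\Phi^{B,\varphi}_\beta$ with $\SR(\omega)$ and $\SR(\varphi)$ nonzero terms, the GNS vector of the product decomposes as
\begin{equation*}
    \Omega_\omega\ox\Omega_\varphi=\sum_{\alpha,\beta}\lambda_\alpha\mu_\beta\,(\Phi^{A,\omega}_\alpha\ox\Phi^{A,\varphi}_\beta)\ox(\Phi^{B,\omega}_\alpha\ox\Phi^{B,\varphi}_\beta),
\end{equation*}
with $\SR(\omega)\cdot\SR(\varphi)$ pairwise orthonormal terms on each side, which proves the claim. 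If instead one of the factors is wild, say $\omega$, then by \cref{thm:pure_tame_states} one marginal (say $\omega_A$) is not type I; the $A$-marginal of the product is $\omega_A\ox\varphi_A$, whose GNS von Neumann algebra is $\pi_{\omega_A}(\A_A)''\Bar\ox\pi_{\varphi_A}(\B_A)''$, and this von Neumann tensor product of factors is non-type-I since $\pi_{\omega_A}(\A_A)''$ is not, so $\omega\ox\varphi$ is wild and the identity $\SR(\omega\ox\varphi)=\SR(\omega)\cdot\SR(\varphi)$ reduces to $\oo=\oo$. The main obstacle I foresee is reconciling the two bipartite perspectives on $\A\ox_{min}\B$ --- as a bipartite algebra for the pair $(\A,\B)$, which identifies the GNS via \cref{lem:productstate}, and as a bipartite algebra for $(\A_A\ox_{min}\B_A,\A_B\ox_{min}\B_B)$, which is the structure relevant for the Schmidt rank --- cleanly enough that the uniqueness clause of \cref{thm:tame_GNS} absorbs the tensor-factor reorganization without further bookkeeping.
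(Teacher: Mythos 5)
Your proposal is correct and follows essentially the same route as the paper's proof: both argue via the algebraic definition, using $\pi_{\omega\ox\varphi}=\pi_\omega\ox\pi_\varphi$ to show the product is tame exactly when both factors are, and then multiplicativity of the vector Schmidt rank. You supply details the paper leaves implicit (the bipartite-algebra verification via slice maps, purity, and the reorganization of tensor factors), all of which check out.
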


\begin{proof}
    This follows from the algebraic definition of the Schmidt rank.
    Since it holds that $\pi_{\omega\ox\varphi}=\pi_\omega\ox\pi_\varphi$, the product can only be tame if both $\omega$ and $\varphi$ are.
    The result follows because the vector Schmidt rank is multiplicative on tensor products of bipartite Hilbert spaces.
\end{proof}

Recall the definition of local operations (see \cref{def:local_op}).

\begin{prop}
    The Schmidt rank is monotone under local operations. 
    Let $\A$ and $\B$ be bipartite algebras and let $\omega$ and $\varphi$ be bipartite pure states on $\A$ and $\B$ respectively.
    If there is a local operation $T:\A\to\B$ such that $T^*(\varphi)=\omega$, then $\SR(\varphi)\ge\SR(\omega)$.
\end{prop}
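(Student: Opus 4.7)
The plan is to use the compression characterization of the Schmidt rank (\cref{it:op_cof}): I will turn any compression of $\varphi$ into a compression of $\omega$ of the same Hilbert space dimension. Write $T_A := T|_{\A_A}:\A_A\to\B_A$ and $T_B := T|_{\A_B}:\A_B\to\B_B$; these are unital completely positive by hypothesis. Given any compression $(C_A,C_B,\K,\Psi)$ with respect to $\varphi$, I would define $D_j := C_j\circ T_j:\A_j\to\B(\K)$. Both $D_A$ and $D_B$ are unital and completely positive as compositions of such maps, and their ranges commute since the ranges of $C_A$ and $C_B$ do. Hence $(D_A,D_B,\K,\Psi)$ is a candidate compression for $\omega$.

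To confirm that it actually compresses $\omega$, I need the identity
\begin{equation*}
\langle\Psi, D_A(a)D_B(b)\Psi\rangle = \omega(ab) \qquad \forall (a,b)\in\A_A\times\A_B.
\end{equation*}
The left-hand side equals $\varphi(T_A(a)T_B(b))$ by the compression property of $(C_A,C_B,\K,\Psi)$, so what is needed is $\omega(ab)=\varphi(T_A(a)T_B(b))$. This is the step I expect to be the main obstacle: the hypothesis $T^*\varphi=\omega$ yields only $\omega(ab)=\varphi(T(ab))$, and since $T$ is merely UCP (hence not multiplicative in general) these two expressions are not a priori equal. To close the gap, I would invoke the correlation-invariance of the Schmidt rank (\cref{thm:correlation_invariants}) and replace $\omega$ and $\varphi$ by correlation-equivalent representatives on the maximal $C^*$-tensor products. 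On $\A_{max}$ and $\B_{max}$, the universal property \cref{thm:univ_prop} produces a canonical local operation satisfying $T(a\ox b)=T_A(a)T_B(b)$, after which the desired identity is immediate from $T^*\varphi=\omega$.

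With the identity in hand, $(D_A,D_B,\K,\Psi)$ is a compression of $\omega$, so $\SR(\omega)\leq\sqrt{\dim\K}$; taking the infimum over all compressions of $\varphi$ yields $\SR(\omega)\leq\SR(\varphi)$, and the case $\SR(\varphi)=\infty$ is vacuous. An equally tidy alternative uses the factorization definition (\cref{it:factor}): a factorization $\Gamma_\varphi = \beta\circ\alpha$ through $\MM_k$ directly induces the factorization $\Gamma_\omega=(T_B^*\circ\beta)\circ(\alpha\circ T_A)$ through the same $\MM_k$, where complete positivity is preserved because the dual of a CP map is CP with respect to the natural matrix orderings on the dual spaces, and the bound $\SR(\omega)\leq\SR(\varphi)$ follows at once.
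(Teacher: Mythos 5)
You follow the same route as the paper: pre-compose a compression of $\varphi$ with the marginal maps $T_A,T_B$ (or, equivalently, sandwich a factorization of $\Gamma_\varphi$ between $\alpha\circ T_A$ and $T_B^*\circ\beta$) and appeal to \cref{it:op_cof} or \cref{it:factor}. You also correctly put your finger on the one delicate step: everything reduces to the identity $\omega(ab)=\varphi\bigl(T_A(a)T_B(b)\bigr)$, which does not follow from $\omega(ab)=\varphi(T(ab))$ when $T$ is merely a unital completely positive map with $T(\A_j)\subset\B_j$ as in \cref{def:local_op}.

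The patch you offer, however, does not close this gap. The universal property (\cref{thm:univ_prop}) produces from the pair $(T_A,T_B)$ a \emph{different} local operation $\tilde T$ with $\tilde T(a\ox b)=T_A(a)T_B(b)$, and nothing guarantees $\tilde T^*\varphi=T^*\varphi=\omega$; passing to correlation-equivalent representatives cannot help, since correlation-equivalence fixes exactly the numbers $\omega(ab)$ that are at stake. The identity can genuinely fail: take $\A=\B=\MM_2\ox\MM_2$, $\varphi=\ip{00}{(\placeholder)\,00}$ and $T(x)=\ip\Omega{x\Omega}\1$ with $\Omega$ maximally entangled. This $T$ is unital, completely positive, and maps each $\A_j$ into $\CC\1\subset\B_j$, so it is a local operation in the sense of \cref{def:local_op}; yet $T^*\varphi$ is the maximally entangled pure state, whereas $\varphi(T_A(a)T_B(b))=\tfrac14\tr(a)\tr(b)$. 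This example shows more: for \cref{def:local_op}-local operations (which are precisely the two-way non-signalling operations) monotonicity is simply false, so no argument can succeed without a stronger hypothesis. The intended hypothesis --- the one the paper's own two-line proof tacitly uses, and the one given in the summary section, where a local operation is a \emph{pair} $(T_A,T_B)$ acting by $ab\mapsto T_A(a)T_B(b)$ --- makes the needed identity hold by definition; under that reading both of your arguments are complete and coincide with the paper's. Under the literal reading of \cref{def:local_op}, the step you flagged as the main obstacle is a genuine gap, and the appeal to correlation-invariance is circular rather than a fix.
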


\begin{proof}
    Define $T_j:\A_j\to\B_j$, $j=A,B$, as the marginal maps of $T$. Let $(C_A,C_B,\K,\Psi)$ be a compression of $\omega$.
    Define $D_j=C_j\circ T_j$, then $(D_A,D_B,\K,\Psi)$ is a compression of $\omega\circ T$ so that the inequality holds by \cref{it:op_cof} of the Schmidt rank.
\end{proof}

As an application, we consider entanglement distillation which allows us to obtain lower bounds on the Schmidt rank:

\begin{defin}
    Let $\omega$ be a bipartite state for local algebras $\A_A$ and $\A_B$.
    Let $\K_A$ and $\K_B$ be Hilbert spaces and let $\Psi\in\K_A\ox\K_B$ be a unit vector.
    We say that $\Psi$ is {\bf distillable} from $\omega$, if there are unital completely positive maps $D_j:\B(\K_j)\to\A_j$ such that 
    \begin{equation}\label{eq:distillation}
        \omega(D_A(a)D_B(b))=\ip\Psi{a\ox b \Psi} \qquad\forall (a,b)\in\B(\K_A)\times\B(\K_B).
    \end{equation}
\end{defin}

This definition is equivalent to the existence of a local operation from $\A_{max}$ to $\B(\K_A\ox\K_B)$ which takes $\omega$ to the bipartite state $\psi = \ip\Psi{(\placeholder)\Psi}$.

\begin{cor}
    The Schmidt rank of a distillable vector is a lower bound to the Schmidt rank of $\omega$, i.e.\ if $\Phi\in \K_A\ox\K_B$ is distillable from $\omega$, then $\SR(\omega)\ge\SR(\Phi)$.
\end{cor}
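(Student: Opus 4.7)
The plan is to combine the distillation data into a single local operation between bipartite algebras and then invoke the monotonicity of the Schmidt rank under local operations proved above. Since $D_A$ and $D_B$ take values in $\A_A$ and $\A_B$ respectively, their ranges commute in the bipartite algebra $\A$. Hence the universal property of the maximal $C^*$-tensor product (\cref{thm:univ_prop}) produces a unique unital completely positive map
\begin{equation*}
    T\colon \B(\K_A)\ox_{max}\B(\K_B)\to\A,\qquad T(a\ox b)=D_A(a)D_B(b),
\end{equation*}
and $T$ is a local operation between bipartite algebras because $T|_{\B(\K_j)}=D_j$ takes values in $\A_j$.

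Next, I would define the target state on $\B(\K_A)\ox_{max}\B(\K_B)$. Let $\psi$ be the pull-back of the vector state $X\mapsto\ip\Phi{X\Phi}$ on $\B(\K_A\ox\K_B)$ through the canonical chain of $^*$-homomorphisms $\B(\K_A)\ox_{max}\B(\K_B)\twoheadrightarrow \B(\K_A)\ox_{min}\B(\K_B)\hookrightarrow \B(\K_A\ox\K_B)$. By \eqref{eq:distillation}, $\omega\circ T$ and $\psi$ agree on elementary tensors $a\ox b$, so $\omega\circ T=\psi$ by density. The state $\psi$ is correlation-equivalent to the vector state on $\B(\K_A)\ox_{min}\B(\K_B)$, which is pure because $\B(\K_A)\ox_{min}\B(\K_B)$ acts irreducibly on $\K_A\ox\K_B$; purity is a correlation invariant (\cref{thm:correlation_invariants}.\ref{it:purity}), so $\psi$ is pure.

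It remains to identify $\SR(\psi)$ with $\SR(\Phi)$. The GNS representation of the vector state on $\B(\K_A)\ox_{min}\B(\K_B)$ is the defining (irreducible) representation on $\K_A\ox\K_B$ with cyclic vector $\Phi$. It splits as $\pi_A\ox\pi_B$ with $\pi_j=\id_{\B(\K_j)}$ irreducible, so this state is tame (\cref{thm:pure_tame_states}) and the canonical tensor splitting of \cref{thm:tame_GNS} identifies its Schmidt rank with the vector Schmidt rank of $\Phi$. Since the Schmidt rank is a correlation invariant (immediate from its algebraic definition together with \cref{thm:correlation_invariants}.\ref{it:reduced_GNS}), we obtain $\SR(\psi)=\SR(\Phi)$. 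Applying the monotonicity proposition to $T$ then yields $\SR(\omega)\ge \SR(\psi)=\SR(\Phi)$, as claimed. The only delicate point, unavoidable when $\K_A$ or $\K_B$ is infinite-dimensional, is that $\B(\K_A\ox\K_B)$ itself is not a bipartite algebra in the paper's sense; this is why we take $T$ to have source $\B(\K_A)\ox_{max}\B(\K_B)$ instead.
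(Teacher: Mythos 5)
Your proof is correct and is essentially the paper's argument: the paper composes the distillation maps $D_j$ directly with a minimal compression $(C_A,C_B,\Psi)$ witnessing $\SR(\omega)=k$ to obtain a compression of the vector state of $\Phi$ into $\MM_k\ox\MM_k$, which is precisely the content of the monotonicity proposition you invoke. Your additional care in constructing the source algebra $\B(\K_A)\ox_{max}\B(\K_B)$, verifying purity of $\psi$, and identifying $\SR(\psi)$ with the vector Schmidt rank of $\Phi$ fills in bookkeeping that the paper leaves implicit, but the underlying idea is the same.
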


\begin{proof}
    If $\SR(\omega)=k$, then there are completely positive unital maps $C_j:\A_j\to\MM_k$ and a Schmidt rank-$k$ unit vector $\Psi\in\CC^k\ox\CC^k$ so that $\omega(ab)=\ip\Psi{C_A(a)\ox C_B(b)\Psi}$.
    Composition with the distillation maps $D_j:\B(\K_j)\to\A_j$ gives us unital completely positive maps $T_j=C_j\circ D_j:\B(\K_j)\to\MM_k$ which are able to distill $\Phi$ from a state of Schmidt rank $k$.
\end{proof}

In \cite{infinitely_entangled_states}, bipartite states are said to have infinite distillable entanglement if it is possible to distill vectors of arbitrary Schmidt rank.

\begin{cor}
    States with infinite distillable entanglement have infinite Schmidt rank.
\end{cor}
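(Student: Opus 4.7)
The statement follows almost immediately by combining the previous corollary with the definition of infinite distillable entanglement given just above. My plan is to take any $\omega$ with infinite distillable entanglement and, for each $k \in \NN$, exhibit a distillable vector $\Phi_k \in \K_A^{(k)}\ox\K_B^{(k)}$ with $\SR(\Phi_k)\ge k$, which is exactly what the definition of infinite distillable entanglement guarantees. The preceding corollary then yields $\SR(\omega)\ge \SR(\Phi_k)\ge k$ for every $k$, forcing $\SR(\omega)=\oo$.

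The only thing one might want to double-check is that the preceding corollary was stated for pure bipartite states (so that the Schmidt rank is well-defined), whereas the distillable-entanglement notion in \cite{infinitely_entangled_states} applies to general bipartite states. For mixed $\omega$ one would need to first reduce to the pure case (e.g.\ by purifying or by invoking monotonicity of the Schmidt rank under local operations applied to a pure extension), but since the Schmidt rank in the paper is defined only for pure states, the statement should be read as an assertion about pure bipartite states $\omega$, in which case the argument above goes through verbatim.

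There is really no obstacle here: the preceding corollary already encapsulates the content of monotonicity under local operations combined with the fact that distillation is by definition a local operation producing $\ip\Psi{(\placeholder)\Psi}$. The "infinite distillable entanglement" hypothesis simply supplies a sequence of targets $\Phi_k$ with $\SR(\Phi_k)\to\oo$, and the conclusion $\SR(\omega)=\oo$ is the direct comparison. No further machinery (tensor splittings, GNS analysis, or Radon--Nikodym arguments) is needed beyond what was already established.
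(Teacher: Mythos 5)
Your proof is correct and matches the paper's (implicit) argument: the paper states this corollary without proof precisely because it follows immediately from the preceding corollary $\SR(\omega)\ge\SR(\Phi)$ for distillable $\Phi$ together with the definition of infinite distillable entanglement as the ability to distill vectors of arbitrary Schmidt rank. Your caveat about restricting to pure states is also consistent with the paper, which defines the Schmidt rank only for pure bipartite states.
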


\section{Examples and applications}\label{sec:examples}

\subsection{Gapped ground states and finitely correlated states on spin chains}\label{sec:spinchains}

We consider one-dimensional spin chains. We regard these as bipartite systems between the left and right sides.

The full $C^*$-algebra is $\A_{\ZZ}=\bigotimes_{x\in\ZZ} \MM_d$ (formally, the infinite tensor product is defined as an inductive limit).
To a region $\Lambda \subset \ZZ$, one associates the algebra $\A_\Lambda = \bigotimes_{x\in\Lambda}\MM_d$ which is identified with a subalgebra of $\A_\ZZ$ by tensoring with the identity on all sites in the complement $\Lambda^c$.
In particular, we consider the left and right chain algebras $\A_L = \A_{(-\oo,0]}$, $\A_R = \A_{[1,\oo)}$.

\subsubsection*{Gapped ground states}

It follows from a theorem by Matsui \cite{matsui} that gapped ground states on one-dimen\-sional spin chains are tame bipartite states if the interactions are not too crazy.
What Matsui really proves is that the state has the split property if it satisfies an area law that was proved to hold by Hastings in \cite{Hastings2007} in the following context:

An interaction is a map $\Phi$ from finite subsets $X\subset\ZZ$ to hermitian elements of $\A_X$ \cite{bratteli2}. 
To every finite subset $\Lambda \subset \ZZ$, we associate the Hamiltonian $H_\Lambda = \sum_{X\subset\Lambda} \Phi(X) \in \A_\Lambda$.
For simplicity, we restrict ourselves to uniformly bounded interactions of finite-range, i.e.\ $\sup\norm{\Phi(X)} <\oo$ and if there is an $N$ such that $\Phi(X)=0$ if  $\abs X>0$.
For such interactions, the dynamics generated by $H_\Lambda$ converge as $\Lambda\to\ZZ$ to the strongly continuous one-parameter automorphism group generated by the closure of the derivation
\begin{equation}
    \delta: \bigcup_n \A_{[-n,n]} \to \A_{\ZZ}, \ \delta(a) = i \lim_\Lambda [H_\Lambda,a],
\end{equation}
see for example \cite[Prop.~6.2.3]{bratteli2}.
For the precise definition of a gapped ground state, we refer to \cite{ogata2021classification} and the references therein.
What's important for us is that these are necessarily pure states on $\A_\ZZ$.

\begin{prop}[Matsui, Hastings]
    Let $\omega_\Phi$ be a gapped ground state of a uniformly bounded finite-range interaction. 
    If we consider $\omega_\Phi$ as a bipartite state for the left and right chain algebras $\A_L$ and $\A_R$, then $\omega_\Phi$ is a tame bipartite pure state.
\end{prop}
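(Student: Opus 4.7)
The plan is to chain together the two cited results of Hastings and Matsui with the characterization of tameness from \cref{thm:pure_tame_states} via the split property of \cref{def:haag}. First, as noted in the discussion preceding the statement, a gapped ground state $\omega_\Phi$ of a uniformly bounded finite-range interaction is necessarily a pure state on $\A_\ZZ$; in particular it is a pure bipartite state for the pair $(\A_L,\A_R)$.

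Second, Hastings' one-dimensional area law \cite{Hastings2007} gives a uniform bound on the entanglement entropy of the restriction of $\omega_\Phi$ to any finite interval, and hence a summable Schmidt spectrum across the left--right cut. Matsui's theorem \cite{matsui} promotes this into the split property for the bipartition $(\A_L,\A_R)$: writing $\M_L := \pi_{\omega_\Phi}(\A_L)''$ and $\M_R := \pi_{\omega_\Phi}(\A_R)''$, there is a type I factor $\N\subset\B(\H_{\omega_\Phi})$ with $\M_L\subset\N\subset\M_R'$.

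Third, I convert the split property into tameness. Since $\omega_\Phi$ is pure, $\M_L\vee\M_R=\B(\H_{\omega_\Phi})$ and both $\M_L$ and $\M_R$ are factors (see \cref{thm:pure_states_and_subfactors}), so the split property yields the isomorphism $\B(\H_{\omega_\Phi})\cong \M_L\,\Bar\ox\,\M_R$. A tensor product of factors is a type I factor only when both tensor factors are type I, so $\M_L$ and $\M_R$ are type I factors. By \cref{it:type=type} of \cref{thm:pure_tame_states_lemma} this forces the marginals $\omega_\Phi|_{\A_L}$ and $\omega_\Phi|_{\A_R}$ to be type I states, and \cref{thm:pure_tame_states} then yields that $\omega_\Phi$ is tame.

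All of the substantive work is hidden in the cited theorems of Hastings and Matsui; the only real obstacle in the present argument is to correctly identify the split property as the bridge between their statements and the tameness criteria set up earlier in the paper.
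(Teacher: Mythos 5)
Your proposal is correct and follows the same route the paper intends: Hastings' area law, Matsui's theorem upgrading it to the split property for the left--right cut, and then the equivalence between the split property and tameness for pure states via \cref{thm:pure_tame_states}. The paper itself leaves the last step implicit (it only remarks after \cref{thm:tame_GNS} that a pure bipartite state is tame iff it is split), so your explicit derivation that purity plus the split property forces $\M_L$ and $\M_R$ to be type I factors, and hence the marginals to be type I states, is a welcome filling-in of that gap rather than a deviation.
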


\subsubsection*{Finitely correlated states}

We now consider translation invariant states on infinite spin chains, which we again regard as bipartite for left and right sides.
We will see that we can explicitly determine the Schmidt rank for the Heisenberg anti-ferromagnet and the AKLT model by using the theory of finitely correlated states introduced in \cite{FNW}.

In \cite{FNW}, a translation-invariant state $\omega$ on $\A_\ZZ$ is said to be finitely correlated if the vector space 
\begin{equation}\label{eq:VR}
    \V = \set{\omega(a_L\ox \placeholder) \given a_L\in\A_L} \subset \A_R^*
\end{equation}
is finite-dimensional.
$\V$ can be equipped with the structure of an operator system whose order unit is $e = \omega(1\ox\placeholder)$ and which carries a natural state $\rho:\V\to\CC$ defined by $\omega(a\ox\placeholder)\mapsto\omega(a\ox1)$.%
\footnote{The matrix order and $^*$-operation on $\V$ are inherited from $\A_R^*$. That this indeed turns $\V$ into an operator system can, e.g.\ be seen from \cref{thm:ucoi}. We do not know if this matrix order is equivalent to the one constructed in \cite{FNW} by means of a quotient construction but we know that $\EE$ is completely positive with respect to both matrix orders.}
Furthermore, there is a unital completely positive map $\EE:\MM_d\ox \V\to \V$ sending $a\ox\omega(a_L\ox\placeholder)$ to $\omega(a_L\ox a \ox \placeholder)$.
We use the notation $\EE_{a}:=\EE(a\ox \placeholder)$, $a\in\MM_d$.
Together, $(\V,\EE,\rho)$ fully describe the state $\omega$: 
\begin{equation}\label{eq:finitely_corr}
    \omega(a_{-N}\ox a_{1-N}\ox \ldots a_N) = \rho(\EE_{a_{M}}\circ \EE_{a_{1-N}}\circ\ldots \EE_{a_N}(e)),
\end{equation}
where $a_i\in\MM_d$, $N\in\NN$, and where all other tensor factors are the identity element.
Conversely, every collection $(\S,\EE,\rho)$ of a (finite-dimensional) operator system $\S$ a unital completely map $\EE:\MM_d\ox\S\to\S$ and a state $\rho$ so that $\rho\circ\EE_{\1}=\rho$ defines a translation invariant (finitely correlated) state.
A translation-invariant state $\omega$ is called \emph{$C^*$-finitely correlated} if $\S$ can be chosen to be a finite-dimensional $C^*$-algebra.
In this case, one can always choose $\S = \MM_n$ (implying $e=\1_n$) and $\rho$ to be faithful, which we call an $n$-dimensional representation as a $C^*$-finitely correlated state.

\begin{thm}\label{thm:finitely_corr}
    Let $\omega$ be a translation invariant pure state on $\A_{\ZZ}$.
    The following are equivalent:
    \begin{enumerate}[(a)]
        \item\label{it:fcs_csfcs} $\omega$ is a $C^*$-finitely correlated state,
        \item\label{it:fcs_finiteSR} $\omega$ has finite Schmidt rank if regarded as a bipartite state for the left and right side,
        \item\label{it:fcs_fcs} $\omega$ is finitely-correlated and satisfies Haag-duality.
    \end{enumerate} 
    In this case, the Schmidt rank is the smallest dimension $n$ so that $\omega$ allows an $n$-dimensional representation as a $C^*$-finitely correlated state.
\end{thm}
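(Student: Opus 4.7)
The plan is to establish the chain (a)$\Rightarrow$(b)$\Rightarrow$(c)$\Rightarrow$(a) and then identify $\SR(\omega)$ with the minimal $C^*$-representation dimension. For (a)$\Rightarrow$(b), given an $n$-dimensional representation $(\MM_n,\EE,\rho)$, I would build a factorization of $\Gamma_\omega$ through $\MM_n$ and invoke \cref{it:factor}. Writing $\EE^{(N)}:\MM_d^{\ox N}\ox\MM_n\to\MM_n$ for the $N$-fold iterate of $\EE$, each $\EE^{(N)}$ is unital completely positive, so $\beta^*:\A_R\to\MM_n$ defined on $\A_{[1,M]}$ by $\beta^*(b)=\EE^{(M)}(b\ox\1_n)$ is consistent in $M$ (using unitality of $\EE_{\1}$) and extends by continuity to a unital completely positive map. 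Dually, via the trace identification $\MM_n\cong\MM_n^*$, the prescription $\alpha(a_L)(X)=\rho(\EE^{(N+1)}(a_L\ox X))$ defines a completely positive map $\alpha:\A_L\to\MM_n$ that is consistent in $N$ by the invariance $\rho\circ\EE_{\1}=\rho$. The relation \cref{eq:finitely_corr} then reads $\beta(\alpha(a_L))(b)=\omega(a_L b)$, so $\Gamma_\omega=\beta\circ\alpha$ and $\SR(\omega)\le n$.

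For (b)$\Rightarrow$(c), finite Schmidt rank forces $\omega$ to be tame (via \cref{it:alg}) and hence to satisfy Haag-duality by \cref{thm:tame_GNS}. Any factorization $\Gamma_\omega=\beta\circ\alpha$ through $\MM_k$ confines $\V=\Gamma_\omega(\A_L)$ to $\beta(\MM_k)$, of dimension at most $k^2$, so $\omega$ is finitely correlated.

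For (c)$\Rightarrow$(a), which I expect to be the main obstacle, I would first show $\SR(\omega)<\infty$ and then upgrade the resulting compression to a translation-covariant one. For finiteness I would use \cref{it:interval}: finite $\dim\V$ bounds the dimension of $[0,\omega_B]$ via the Radon-Nikodym pairing \cref{eq:radon_niko}, because under Haag-duality $\pi_{\omega_B}(\A_R)'$ is a factor and the natural map from that commutant into $\V$ (send $Q$ to the functional $b\mapsto\langle\Omega_\omega,\pi_\omega(b)Q\Omega_\omega\rangle$) is injective. Once $k=\SR(\omega)<\infty$, Haag-duality and \cref{thm:uniqueness} supply a unique minimal compression $(D_L,D_R,\CC^k\ox\CC^k,\Psi)$ with tensor-split structure $D_L:\A_L\to\MM_k$, $D_R:\A_R\to\MM_k$. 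Translation invariance of $\omega$ makes the compression associated with the shifted cut $\A_{(-\infty,1]}\,|\,\A_{[2,\infty)}$ also minimal for $\omega$, so uniqueness yields a unitary $U$ on $\CC^k\ox\CC^k$ intertwining the old compression with the shifted one absorbing the extra site $\A_{\{1\}}=\MM_d$. Reading $U$ off produces a unital completely positive map $\EE:\MM_d\ox\MM_k\to\MM_k$ together with an invariant state $\rho$ coming from the marginal of $\kettbra\Psi$, which constitute a $k$-dimensional $C^*$-finitely correlated representation.

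Putting everything together, (a)$\Rightarrow$(b) forces $\SR(\omega)\le n$ for every representation dimension $n$, while (c)$\Rightarrow$(a) produces one of dimension exactly $\SR(\omega)$, giving the claimed minimality. The technical heart of the argument is verifying that the shift intertwiner $U$ in (c)$\Rightarrow$(a) has the correct block structure for the resulting $\EE$ to be of the form $\MM_d\ox\MM_k\to\MM_k$ (rather than a generic endomorphism of $\CC^k\ox\CC^k$) and unital completely positive with $\rho\circ\EE_{\1}=\rho$; this amounts to carefully tracking how the fixed one-site algebra $\MM_d$ sits inside the refinement of the bipartition and checking that the compression is natural under this refinement.
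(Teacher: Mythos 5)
Your steps (a)$\Rightarrow$(b) and (b)$\Rightarrow$(c) are essentially the paper's argument: the factorization of $\Gamma_\omega$ through $\MM_n$ built from iterates of $\EE$ together with $\rho\circ\EE_{\1}=\rho$ is exactly what the paper does, and "finite Schmidt rank $\Rightarrow$ tame $\Rightarrow$ Haag-duality, with $\dim\V\le\SR(\omega)^2$" is the same reasoning. The problems are concentrated in (c)$\Rightarrow$(a). A first, smaller issue: the Radon--Nikodym map sends $\pi_{\omega_R}(\A_R)'$ bijectively onto $\W:=\lin[0,\omega_R]$, \emph{not} into $\V=\set{\omega(a_L\ox\placeholder)}$; a priori $\V\subsetneq\W$, so injectivity of that map does not bound $\dim\pi_{\omega_R}(\A_R)'$ by $\dim\V$. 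What Haag-duality actually buys (this is \cref{thm:Haag-duality_order_int}) is that $\V$ is $w^*$-dense in $\W$; since $\V$ is finite-dimensional, hence closed, this forces $\V=\W$, and only then does \cref{it:interval} of \cref{thm:schmidt_rank} give $\SR(\omega)=\sqrt{\dim\V}<\infty$.

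The more serious gap is your construction of the transfer operator. The minimal compression for the cut $0|1$ and the one for the cut $1|2$ are compressions with respect to two \emph{different} bipartitions, so \cref{thm:uniqueness} cannot be invoked to compare them; the only unitary it yields relates the cut-$1|2$ compression to the image of the cut-$0|1$ compression under the shift automorphism, which is a tautology and produces no map $\MM_d\ox\MM_k\to\MM_k$. To extract $\EE$ you would have to show that $D_R(a_1\ox b)$ depends on $b\in\A_{[2,\infty)}$ only through the shifted compression $D_R'(b)$, and that the induced map on $\MM_d\ox\MM_k$ is unital and completely positive — precisely the "technical heart" you defer, and it is the entire content of the implication. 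The paper avoids this construction altogether: having shown $\V=\W$, it uses the complete order isomorphism $\lambda:\pi_{\omega_R}(\A_R)'\to\W$ of \cref{thm:ucoi} to transfer the $C^*$-structure of the finite-dimensional factor $\pi_{\omega_R}(\A_R)'\cong\MM_k$ (with unit matching $e$) to $\V$, so that the triple $(\V,\EE,\rho)$ that already exists by the definition of a finitely correlated state \emph{is} a $k$-dimensional $C^*$-finitely correlated representation, with no new $\EE$ to build; this also delivers the minimality claim, since $k=\SR(\omega)$. You should either adopt that route or supply the missing naturality argument for the compression under refinement of the cut.
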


From the applications of finitely correlated states to the AKLT model and the Heisenberg model in \cite[Ex.~1 and Cor.~7.3]{FNW}, we get:

\begin{cor}
    \begin{enumerate}[(1)]
        \item 
            The ground state of the generalized AKLT model has Schmidt rank $2$.
        \item 
            The ground state of the Heisenberg anti-ferromagnet has an infinite Schmidt rank. 
    \end{enumerate} 
\end{cor}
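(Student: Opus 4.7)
The plan is to deduce both statements directly from \cref{thm:finitely_corr}, combined with the explicit analyses of the two models carried out in \cite{FNW}.

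\textbf{The generalized AKLT model.} In \cite[Ex.~1]{FNW}, the translation-invariant ground state $\omega_{\mathrm{AKLT}}$ is exhibited as a $C^*$-finitely correlated state with auxiliary algebra $\S=\MM_2$, i.e.\ there is a unital completely positive map $\EE:\MM_d\ox\MM_2\to\MM_2$ and a faithful state $\rho$ on $\MM_2$ with $\rho\circ\EE_{\1}=\rho$ reproducing $\omega_{\mathrm{AKLT}}$ via \cref{eq:finitely_corr}. Since $\omega_{\mathrm{AKLT}}$ is a pure translation-invariant state on $\A_\ZZ$, the final sentence of \cref{thm:finitely_corr} applies and gives $\SR(\omega_{\mathrm{AKLT}})\le 2$. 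For the matching lower bound, recall from the hierarchy \eqref{eq:hierachy} that $\SR(\omega)=1$ characterizes product states; since the AKLT ground state has nonvanishing truncated two-point functions across the cut between sites $0$ and $1$ (as can be read off from its matrix-product form in \cite[Ex.~1]{FNW}), it is not a product state for $\A_L,\A_R$. Hence $\SR(\omega_{\mathrm{AKLT}})\geq2$, and equality follows.

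\textbf{The Heisenberg antiferromagnet.} By \cite[Cor.~7.3]{FNW}, the ground state $\omega_{\mathrm{H}}$ of the (spin-$\tfrac12$) Heisenberg antiferromagnet is not a finitely correlated state; equivalently, the space $\V$ in \cref{eq:VR} is infinite-dimensional. In particular it fails to be $C^*$-finitely correlated. Applying the equivalence \ref{it:fcs_csfcs}~$\Leftrightarrow$~\ref{it:fcs_finiteSR} of \cref{thm:finitely_corr} to the pure translation-invariant state $\omega_{\mathrm{H}}$ then yields $\SR(\omega_{\mathrm{H}})=\infty$.

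The only nontrivial ingredient beyond quoting the two cited results is the lower bound $\SR(\omega_{\mathrm{AKLT}})\ge 2$, which amounts to the well-known fact that the AKLT ground state is not a product state across any half-chain cut; this is the main (but routine) check in the argument, and can be made explicit by evaluating a single nonvanishing connected correlator of nearest-neighbor spins straddling the cut using the matrix-product representation.
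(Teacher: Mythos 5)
Your proof is correct and follows essentially the same route as the paper, which likewise obtains both statements by feeding the FNW results (\cite[Ex.~1 and Cor.~7.3]{FNW}) into \cref{thm:finitely_corr}. The only detail you add beyond the paper's one-line derivation is the explicit lower bound $\SR(\omega_{\mathrm{AKLT}})\ge 2$ via the non-product-state observation, which is a correct and harmless elaboration of the "smallest dimension $n$" clause in \cref{thm:finitely_corr}.
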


Another consequence of \cref{thm:finitely_corr} is:

\begin{cor}\label{thm:FCS_but_not_CSFCS}
    Let $\omega$ be a pure translation invariant state.
    If $\omega$ is finitely correlated but not $C^*$-finitely correlated, then $\omega$ does not satisfy Haag duality.
\end{cor}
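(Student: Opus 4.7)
The plan is to prove this corollary as an immediate contrapositive of the equivalence \ref{it:fcs_fcs} $\Rightarrow$ \ref{it:fcs_csfcs} in \cref{thm:finitely_corr}. No new machinery is required; the work has already been done in establishing the theorem itself.

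Concretely, I would argue by contraposition. Assume $\omega$ is a pure translation-invariant state which is finitely correlated and satisfies Haag-duality. Then condition \ref{it:fcs_fcs} of \cref{thm:finitely_corr} holds. By the equivalence of \ref{it:fcs_fcs} and \ref{it:fcs_csfcs} in that theorem, this forces $\omega$ to be $C^*$-finitely correlated. Contraposing: if $\omega$ is finitely correlated but not $C^*$-finitely correlated, then $\omega$ cannot satisfy Haag-duality.

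There is no genuine obstacle at the level of this corollary; any difficulty has been pushed into the proof of \cref{thm:finitely_corr}. I would, however, briefly note the conceptual content: finite correlation constrains the dimension of the operator system $\V$ in \eqref{eq:VR}, while Haag-duality ensures that the commutant structure of the GNS representation is rigid enough to upgrade this finite-dimensional operator-system description to a representation on a finite-dimensional $C^*$-algebra $\MM_n$. Absent Haag-duality, the von Neumann algebra $\M_L' \supsetneq \M_R$ leaves room for a finitely correlated description that cannot be implemented by a finite-dimensional $C^*$-algebra. This is why the corollary is nontrivial at the level of examples even though its formal derivation from \cref{thm:finitely_corr} is a one-line contraposition.
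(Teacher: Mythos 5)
Your proposal is correct and is exactly how the paper obtains this corollary: it is the contrapositive of the implication \ref{it:fcs_fcs} $\Rightarrow$ \ref{it:fcs_csfcs} of \cref{thm:finitely_corr}, with all the real work residing in the proof of that theorem.
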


Therefore such a state $\omega$ is necessarily wild and has an infinite Schmidt rank. 
To prove \cref{thm:finitely_corr}, we need the following Lemma, which might be of independent interest (see \cref{sec:outlook}).

\begin{lem}\label{thm:Haag-duality_order_int}
    Let $\omega$ be a pure bipartite state for algebras $\A_A$ and $\A_B$.
    If $\omega$ satisfies Haag-duality, then the linear functionals $a \mapsto \omega(ab)$, $b\in\A_B$ are $w^*$-dense in the linear hull of the order interval $[0,\omega_A]\subset\A_A^*$. The same holds for the other system.
\end{lem}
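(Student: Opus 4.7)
The plan is to translate the claim into a weak-operator density statement inside $\pi_{\omega_A}(\A_A)'$ by means of the Radon--Nikodym bijection \eqref{eq:radon_niko} applied to $\omega_A$. Extended linearly, this bijection identifies $\lin[0,\omega_A]$ with $\pi_{\omega_A}(\A_A)'$ (a von Neumann algebra being the linear span of its unit interval), and its inverse is WOT-to-$w^*$ continuous, since $\varphi_Q(a)=\ip{\pi_{\omega_A}(a^*)\Omega_{\omega_A}}{Q\Omega_{\omega_A}}$ depends WOT-continuously on $Q$. On the other side, \cref{thm:bob_joins_alice_GNS} supplies a unital completely positive map $T_B\colon\A_B\to\pi_{\omega_A}(\A_A)'$ with $T_B(b)=V^*\pi_\omega(b)V$, where $V\colon\H_{\omega_A}\to\H_\omega$ is the canonical isometry, and the functional $a\mapsto\omega(ab)$ is precisely the Radon--Nikodym image of $T_B(b)$. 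Consequently the lemma reduces to showing that $T_B(\A_B)$ is WOT-dense in $\pi_{\omega_A}(\A_A)'$.

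I would then identify the target algebra explicitly using Haag-duality. Let $P=VV^*$ be the projection onto $\V_A=[\M_A\Omega_\omega]$; invariance of $\V_A$ under $\M_A$ gives $P\in\M_A'$, and the standard reduction theorem for a von Neumann algebra compressed by a projection in its commutant yields
\[
    \pi_{\omega_A}(\A_A)''=V^*\M_AV,\qquad \pi_{\omega_A}(\A_A)'=V^*\M_A'V
\]
as von Neumann subalgebras of $\B(\H_{\omega_A})$. The Haag-duality hypothesis $\M_A'=\M_B$ then rewrites the second identity as $\pi_{\omega_A}(\A_A)'=V^*\M_BV$.

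To finish, the bicommutant theorem gives $\pi_\omega(\A_B)$ WOT-dense in $\M_B$, and the cutdown $m\mapsto V^*mV$ is itself WOT-continuous. Hence $T_B(\A_B)=V^*\pi_\omega(\A_B)V$ is WOT-dense in $V^*\M_BV=\pi_{\omega_A}(\A_A)'$, which is exactly what was needed. The statement for $\omega_B$ follows by interchanging the roles of the two systems.

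The main obstacle I anticipate is the clean identification $\pi_{\omega_A}(\A_A)'=V^*\M_A'V$ that precedes the use of Haag-duality. It is a piece of standard reduction theory --- because $P\in\M_A'$, the commutant of the compressed algebra $\M_A|_{\V_A}$ inside $\B(\V_A)$ is exactly $P\M_A'P$ restricted to $\V_A$ --- but it needs a careful verification since $P$ is typically non-central in $\M_A'$, so the cutdown is not a homomorphism on $\M_A'$ but only a completely positive map whose image is the relevant von Neumann algebra. Once that identification is made, the remaining ingredients (Radon--Nikodym continuity, bicommutant density of $\pi_\omega(\A_B)$ in $\M_B$, and the linear-span trick for the unit interval) are routine.
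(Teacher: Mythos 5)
Your proposal is correct and follows essentially the same route as the paper's proof: both reduce the claim via the Radon--Nikodym bijection to the weak-operator density of the compressed operators $V^*\pi_\omega(\A_B)V = P_A\pi_\omega(\A_B)P_A$ in $\pi_{\omega_A}(\A_A)' = P_A\M_A'P_A$, identify that commutant with $P_A\M_B P_A$ using Haag-duality together with the standard commutation theorem for reduced algebras, and conclude by the bicommutant theorem and the WOT-to-$w^*$ continuity of the correspondence. The only difference is presentational: you phrase the compression through the isometry $V$ and the map $T_B$ of Lemma~\ref{thm:bob_joins_alice_GNS}, while the paper works directly with the projection $P_A$.
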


\begin{proof}[Proof of \cref{thm:Haag-duality_order_int}]
    Consider the GNS representation $(\pi_\omega,\H_\omega,\Omega_\omega)$ of $\omega$ and the projection $P_A$ onto $[\pi_{\omega}(\A_A)\Omega]$.
    Then, by Haag-duality, $P_A \in \M_A'=\M_B$.
    We know that $(P_A \pi_\omega(\placeholder)P_A,P_A\H_\omega,\Omega_\omega)$ is the GNS representation of the reduced state.
    Then $\N_A = P_A\M_A P_A$ is the von Neumann algebra corresponding to the reduced state $\omega_A$.
    Now let $\nu\in\A_A^*$ with $0\le \nu\le \omega_A$ be given and let $Q\in \N_A'$ be such that $\nu = \ip{\Omega_\omega}{QP_A\pi(\placeholder)P_A\Omega_\omega}$ (this exists by \cref{thm:affine_bij}).
    Since the commutant is $\N_A' = P_A\M_A'P_A=P_A\M_BP_A$, there is a net $(b_n)_n$ in $\A_B$ so that $P_A \pi(b_n) P_A$ converges weakly to $Q$.
    This implies that $\omega(ab_n)$ converges to $\nu(a)$ for all $a\in\A_A$.
\end{proof}

\begin{proof}[Proof of \cref{thm:finitely_corr}]
    We use the notation $\omega_R$ for the reduced state onto $\A_R$, i.e.\ $\omega_R=\omega(1\ox\placeholder)$. 

    \ref{it:fcs_csfcs} $\Rightarrow$ \ref{it:fcs_finiteSR}:
    Let $n$ be minimal so that $(\MM_n,\EE,\rho)$ is a representation of $\omega$ as a $C^*$-finitely correlated state $\omega$. 
    We define a completely positive $\alpha:\A_R \to \MM_n$ by linear and continuous extension of
    \[
        \alpha(a_{1}\ox \ldots \ox a_N) = (\EE_{a_{1}} \circ \ldots\circ \EE_{a_N})(\1).
    \] 
    We now define a completely positive map $\beta:\MM_n\to\A_L^*$ by
    \[
        \beta(x)(a_{-N}\ox\ldots\ox a_{0}) = \rho((\EE_{a_{-N}}\circ\ldots\circ \EE_{a_0})(x)).
    \] 
    We claim that $\beta\circ\alpha=\Gamma_\omega$, which implies that the Schmidt rank of $\omega$ is less than $n$ (using \cref{def:factor}):
    \begin{align*}
        \beta(\alpha(a_{1}\ox \ldots a_N))(a_{-N}\ox \ldots a_0)
        &= \rho(\EE_{a_{-N}}\circ\ldots \EE_{a_0}\circ\EE_{a_1}\circ\ldots\EE_{a_N}(\1))\\
        &=\omega(a_{-N}\ox\ldots a_N)\\
        &=\Gamma_\omega(a_1\ox a_N)(a_{-N}\ox\ldots a_0).
    \end{align*}

    \ref{it:fcs_finiteSR} $\Rightarrow$ \ref{it:fcs_fcs}:
    Every bipartite state with finite Schmidt rank is tame and hence satisfies Haag duality.
    Furthermore, it is clear that 
    \begin{equation*}\label{eq:help1}
        \V=\set{\omega(a_L\ox \placeholder)\given a_L\in\A_L}\subset \lin([0,\omega_R])=:\W.\tag{$\square$}
    \end{equation*}
    Thus $\V$ is finite-dimensional because $\W$ is finite-dimensional ($\dim\ \W = \SR(\omega)^2$ by \cref{it:interval} in \cref{thm:schmidt_rank}). 

    \ref{it:fcs_fcs} $\Rightarrow$ \ref{it:fcs_csfcs}:
    Since $\omega$ satisfies Haag-duality, \cref{thm:Haag-duality_order_int} implies that the inclusion \eqref{eq:help1} is $w^*$-dense.
    Therefore the assumption that $\V$ is finite-dimensional implies that $\W$ contains a $w^*$-dense finite-dimensional subspace which forces $\V=\W$.
    The Radon-Nikodym theorem for states gives us a complete order isomorphism $\lambda:\pi_{\omega_R}(\A_R)'\to \W\subset\A^*$, i.e.\ $\lambda$ and $\lambda^{-1}$ are completely positive (see \cref{thm:ucoi} in \cref{sec:appendix}).
    As we know that $\V=\W$ (both equipped with the matrix order inherited from $\A^*$), this shows that $\V$ is completely order isomorphic to $\pi_{\omega_R}(\A)'$. 
    Under this isomorphism the unit $e$ of $\V$ is mapped to $\1\in\pi_{\omega_R}(\A_R)'$.
\end{proof}

\subsection{Purification in algebraic quantum mechanics}\label{sec:purification}

Purification is the phenomenon that every state of a quantum system can be obtained from a pure bipartite state by discarding an ancillary system, i.e.\ that every quantum state is the marginal of a pure state.
What makes purification work is that bipartite pure states can be entangled.
In fact, purification of a classical system is impossible since every pure bipartite state is necessarily a product state.

It is often said that the GNS representation is an operator-algebraic generalization of the purification.
However, the GNS representation does not provide a bipartite system, and it also exists for classical systems where purification is impossible.

\begin{prop}\label{thm:purification}
    Let $\omega$ be a state on a $C^*$-algebra $\A$.
    The following are equivalent
    \begin{enumerate}[(a)]
        \item $\omega$ is a factor state,
        \item $\omega$ admits a purification in the sense that there exists a $C^*$-algebra $\B$ and a bipartite pure state $\psi$ on $\A\ox_{max}\B$ so that $\omega = \psi|_\A$.
    \end{enumerate} 
    Furthermore, $\psi$ is tame if and only if $\omega$ is a type I factor state. In this case the Schmidt rank of $\psi$ is the number $k\in\NN\cup\{\oo\}$ such that $\pi_\psi(\A)'$ type I$_k$.
    If these hold, $\psi$ can always be chosen to satisfy Haag-duality. If $\A$ is separable, $\B$ can also be chosen to be separable. 
\end{prop}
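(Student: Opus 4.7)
The overall strategy is to prove (a)$\Leftrightarrow$(b) by constructing the purification explicitly from the GNS representation, and then to read off tameness, the Schmidt-rank formula, Haag-duality and separability from that construction.

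For (b)$\Rightarrow$(a), I would argue as follows. If $\psi$ is a pure bipartite state on $\A\ox_{max}\B$, then \cref{thm:pure_states_and_subfactors} gives that $\M_A=\pi_\psi(\A)''$ is a factor, and item \ref{it:type=type} of \cref{thm:pure_tame_states_lemma} shows that $\pi_\omega(\A)''$ (with $\omega=\psi|_\A$) is isomorphic to a compression of $\M_A$ of the same type; in particular it is a factor, so $\omega$ is a factor state.

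For (a)$\Rightarrow$(b), the key move is to take $\B$ to be $\pi_\omega(\A)'\subset\B(\H_\omega)$ (viewed as a $C^*$-algebra). The pair of commuting representations $\pi_\omega:\A\to\B(\H_\omega)$ and $\id:\B\to\B(\H_\omega)$ induces, by \cref{thm:univ_prop}, a unital $^*$-representation $\pi:\A\ox_{max}\B\to\B(\H_\omega)$ with $\pi(a\ox b)=\pi_\omega(a)b$, and I set $\psi(\placeholder)=\ip{\Omega_\omega}{\pi(\placeholder)\Omega_\omega}$. Three facts then need to be checked: (i) $\Omega_\omega$ is cyclic for $\pi$, which follows from its cyclicity for $\pi_\omega(\A)\subseteq\pi(\A\ox_{max}\B)$; (ii) $\pi$ is irreducible, because the factor property of $\omega$ forces $(\pi_\omega(\A)''\vee\pi_\omega(\A)')'=\pi_\omega(\A)'\cap\pi_\omega(\A)''=\CC1$ and hence $\pi_\omega(\A)''\vee\pi_\omega(\A)'=\B(\H_\omega)$; and (iii) $\psi|_\A=\omega$ holds by construction. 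Together, (i) and (ii) identify $(\pi,\H_\omega,\Omega_\omega)$ with the GNS datum of $\psi$, so $\psi$ is a pure bipartite state; that $\A\ox_{max}\B$ really is a bipartite algebra (including trivial intersection of the two embeddings) is immediate from \cref{thm:independence}.

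The remaining claims follow transparently from the identifications $\pi_\psi(\A)''=\pi_\omega(\A)''$ and $\pi_\psi(\B)''=\B''=\pi_\omega(\A)'$. Haag-duality reads $\M_B'=\pi_\omega(\A)''=\M_A$. The equivalence \ref{it:tame}$\Leftrightarrow$\ref{it:both_typeI} of \cref{thm:pure_tame_states} gives that $\psi$ is tame iff $\pi_\omega(\A)''$ is a type I factor, i.e.\ iff $\omega$ is a type I state; in that case \cref{it:rank} of \cref{thm:schmidt_rank} combined with \cref{thm:rank} yields $\SR(\psi)=\rank(\omega)=k$, where $k$ is the integer (or $\infty$) for which $\pi_\psi(\A)'=\pi_\omega(\A)'$ is type I$_k$. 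For the separability addendum, if $\A$ is separable then $\H_\omega$ is separable and $\pi_\omega(\A)'$ admits a separable $\sigma$-weakly dense unital $C^*$-subalgebra (generate one from a countable $\sigma$-weakly dense subset of its unit ball). Using this subalgebra as $\B$ keeps $\B''=\pi_\omega(\A)'$, so the factorization $\pi$ still exists and stays irreducible, preserving every statement above while making $\A\ox_{max}\B$ separable. The main obstacle I anticipate is the irreducibility of $\pi$, which rests entirely on the sharp fact that for a factor state $\omega$ the join $\pi_\omega(\A)''\vee\pi_\omega(\A)'$ exhausts $\B(\H_\omega)$; once this is established the remaining pieces—purity, restriction, Haag-duality, tameness dichotomy, Schmidt-rank formula, separability—fall into place mechanically.
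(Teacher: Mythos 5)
Your proposal is correct and follows essentially the same route as the paper's own (sketched) proof: purify by taking $\B$ to be (a $\sigma$-weakly dense $C^*$-subalgebra of) the commutant $\pi_\omega(\A)'$, use the factor property to get irreducibility of $\pi_\psi(a\ox b)=\pi_\omega(a)b$, and read off tameness, the Schmidt rank, Haag-duality and separability from the resulting GNS data. You merely fill in details the paper leaves implicit (cyclicity, the double-commutant argument for irreducibility, and the reduction of the Schmidt-rank claim to \cref{thm:rank}), all of which check out.
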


We briefly sketch the proof:
We already know that marginals of pure states are always factor states by \cref{thm:pure_tame_states_lemma} so that a purification can only work for factor states anyhow.
If $\pi_\omega(\A)''$ is a factor then so is $\pi_\omega(\A)'$ and we can construct the algebra $\B$ from it and a state $\psi$:
Pick an $\sigma$-weakly dense subalgebra $\B\subset\pi_\omega(\A)'$ (separable if $\A$ is) and define an irreducible representation $\pi_\psi:\A\ox_{max}\B\to\B(\H_\omega)$ by $\pi_{\psi}(a\ox b)=\pi_{\omega}(a)b$ (as $\pi_{\omega}$ is factor state).
The pure state $\psi$ on $\A\ox_{max}\B$ is defined by $\psi = \ip{\Omega_\omega}{\pi_{\psi}(\placeholder)\Omega_\omega}$. 
Clearly $(\pi_\psi,\H_\omega,\Omega_\omega)$ is the GNS representation of $\psi$ so that $\psi$ is indeed pure (because $\pi_\psi$ is irreducible).
Finally we have $\psi(a\ox\1) = \ip{\Omega_\omega}{\pi_\psi(a)\Omega_\omega} = \ip{\Omega_\omega}{\pi_\omega(a)\Omega_\omega}=\omega(a)$.

\subsection{Araki-Woods-Powers states}\label{sec:awp}

The construction of factors of type III due to Powers \cite{powers1967uhf} and Araki-Woods \cite{araki1968classification} yields examples of tame and wild bipartite states that satisfy Haag-duality. 
As the bipartite algebra we consider the infinite $C^*$-tensor product $\A = \ox_{j\in\ZZ}\MM_{d_{j}}$ generated by the left and right $C^*$-subalgebras $\A_{A} = \ox_{k\in\ZZ\setminus\NN}\MM_{d_{-k}}$ and $\A_{B} = \ox_{k\in\NN}\MM_{d_{k}}$ corresponding to two separated physical systems. 
Here $d_k\in\NN$ are arbitrary integers such that $d_k=d_{1-k}$.
For every $k$, let $\lambda_1\up k\ge \lambda_2\up k\ge \ldots \ge \lambda_{d_k}\up k>0$ be decreasingly ordered numbers that sum to one and consider the unit vector
\begin{equation}
    \Omega_k = \sum_{r=1}^{d_k} (\lambda_r\up k)^{\frac12} \ket r\ox\ket r\in \CC^{d_{1-k}}\ox\CC^{d_k}.
\end{equation}
For the bipartite state $\omega$ on $\A$ we take the infinite tensor product state $\omega = \ox_{k\in\NN}\omega_{k}$, where $\omega_{k}$ are the pure states on $\MM_{d_{1-k}}\ox\MM_{d_{k}}$ implemented by vectors $\Omega_k$.
Both marginals of $\omega_k$ are equal to the state $\varphi_k([x_{ij}]) = \sum_{i=1}^{d_k} \lambda_i\up k x_{ii}$ on $\MM_{d_k}=\MM_{d_{1-k}}$.
As each $\omega_{k}$ is a pure state, it follows that $\omega$ is a pure state on $\A$. By construction the marginals $\omega_{A}$ and $\omega_{B}$ are both given by the infinite tensor product state $\ox_{k\in\NN}\varphi_k$, and the associated von Neumann algebras $\M_{A}$ and $\M_{B}$ satisfy Haag-duality $\M_{A} = \M_{B}'$ because the restrictions of the GNS representation $\pi_{\omega}$ to $\A_{A}$ and $\A_{\B}$ yields the standard form \cite{bratteli1}.
The type of the factor $\M_{A}$ (and therefore also $\M_{B}$ as each is anti-isomorphic to its commutant) can be decided from the asymptotic behavior of the Schmidt coefficients $\{\lambda^{(k)}_{r}\}_{r=1,\ldots,d_{k}}$ as $k\to\oo$. 
The central result in this respect is the following list of conditions \cite{araki1968classification} (see \cite[Ch.~5, Sec.~4]{connes1994ncg} and \cite[Ch.~3, Sec.~3]{blackadar2006oa} for further discussion):
\begin{itemize}
	\item[1.] $\M_{A}$ is type I if and only if $\sum_{k=1}^\oo|1-\lambda^{(k)}_{1}|<\oo$,
	\item[2.] $\M_{A}$ is type II$_{1}$ (and the marginal $\omega_{A}$ is the unique normal tracial state) if and only if $$\sum_{k=1}^\oo\sum_{r=1}^{d_{k}}\abs[\big]{d_{k}^{-\frac{1}{2}}-(\lambda^{(k)}_{r})^{\frac{1}{2}}}^{2}<\oo.$$
	\item[3.] If $\lambda^{(k)}_{1}\ge\delta$ for some $\delta>0$ and all $k\in\NN$, then $\M_{A}$ is type III if and only if 
    $$
        \sum_{k=1}^\oo\sum_{r=1}^{d_{k}}\lambda^{(k)}_{r}\inf\set[\Big]{\abs[\Big]{\tfrac{\lambda^{(k)}_{1}}{\lambda^{(k)}_{r}}-1},C}=\oo \quad \text{for some $C>0$}.
    $$ 
\end{itemize}
A factor such as $\M_{A}$ is called an ITPFI (infinite tensor product of finite Type I) factor, and these factors exhibit all the possible types, i.e.\ I$_1$, I$_2$, \dots, I$_{\oo}$, II$_{1}$, II$_{\oo}$ and III$_{\alpha}$ for $\alpha\in[0,1]$, which are unique apart from the III$_{0}$ case \cite[Thm.~3.9]{araki1968classification}.

Thus, we conclude that the above construction leads to tame bipartite pure states if and only if the local bipartite states $\omega_{k}$ on the $C^*$-subalgebras $\MM_{d_{-k}}\ox\MM_{d_{k}}$ get closer and closer to product states in the sense of 1., while the other possible situations covered by 2.\ and 3.\ yield wild bipartite pure states. 
In particular, if the marginals of the local bipartite states $\omega_{k}$ converge sufficiently fast to the maximally mixed state, the factors $\M_{A}$ and $\M_{B}$ are given by the unique hyperfinite II$_{1}$, but the marginals $\omega_{A}$ and $\omega_{B}$ will not coincide with the unique tracial state $\ntr$.

For infinite tensor-product states, it follows from \cref{thm:SR_is_multiplicative} that $\SR(\omega)=\prod_{k=1}^\oo d_k$ because $\SR(\omega_k)=d_k$ by construction. 
This will only be finite in trivial cases. 

If we consider the situation of an infinite two-sided spin chain $\A = \ox_{j\in\ZZ}\MM_{2}$ such that all corresponding left and right spin pairs share the same entangled state $\omega_{k} = \omega_{\alpha}$ with vector $\Omega_{\alpha}= (1+\alpha)^{-1/2}(\ket0\ox\ket 0+\sqrt\alpha\ket1\ox\ket 1)$ for $\alpha\in(0,1)$, we will have $\M_{\A}=\R_{\alpha}$ -- the hyperfinite type III$_{\alpha}$ factor. 
Interestingly, the typical situation encountered in QFT (see \cref{sec:qft}) is that of type III$_{1}$ von Neumann algebras which can be produced by choosing two alternating $\alpha_{1},\alpha_{2}\in(0,1)$ such that $\tfrac{\log\alpha_{1}}{\log\alpha_{2}}\notin\QQ$.

\subsection{Vacuum in QFT}
\label{sec:qft}

In the setting of algebraic quantum field theory (AQFT) \cite{haag1996lqp} an example of a wild bipartite state is provided by the vacuum state $\omega$ on the bipartite algebra $\A$ generated by the local von Neumann algebras $\M_{A} = \A(\O_{A})$ and $\M_{B}=\A(\O_{B})$ of two causally separated space-time open regions $\O_A$ and $\O_B$. Under general assumptions \cite{fredenhagen1985modular}, it is known that the algebras $\M_{A}$ and $\M_{B}$ are type III$_{1}$ factors, sharing the vacuum vector $\Omega = \Omega_{\omega}$ as a common cyclic and separating vector because of the Reeh-Schlieder property. If, in addition, the split property\footnote{The local net $\O\mapsto\A(\O)$ of von Neumann algebras satisfies the split property, if $\A(\O_A)\vee\A(\O_B)\cong\A(\O_A)\Bar\otimes\A(\O_B)$ for arbitrary pairs of spacelike-separated open regions with $\Bar\O_A\cap\Bar\O_B=\emptyset$. Thr split property for quantum field theories is connected to but different from the split property for bipartite states in \cref{def:haag}.} is assumed these algebras will also be hyperfinite \cite{haag1996lqp}.

It follows that the vacuum $\omega$ is a pure bipartite state if and only if the inclusion of factors $\M_{A}\subset\M_{B}'$ is irreducible. This, for example, may happen for the algebras $\M_{A}$ and $\M_{B}$ of the right and left standard wedges $\O_{A}=\W_{\textup{R}} = \set{x\in\RR^{4} \given x_\mu x^\mu<0,\, |x_0|<x_1}$ and $\O_{B}=\W_{\textup{L}} = \set{x\in\RR^{4} \given x_\mu x^\mu<0, |x_0|<-x_{1}}$. Assuming that the quantum field theory satisfies wedge duality \cite{bisognano1976duality, buchholz1995scaling1}, we know that $\A_{A}$ and $\A_{B}$ satisfy Haag-duality and are in standard form. In particular, the weak closure of $\A$ in the vacuum representation $\H$ is $\B(\H)$, and, thus, $\omega$ is a pure state on $\A$. 

We point out that no contradiction with \cref{thm:pure_tame_states} arises if the quantum field theory model under consideration satisfies the split property \cite{buchholz1974product, Doplicher1984}, i.e.\ the von Neumann algebra $\M_{A}\vee\M_{B}\cong\M_{A}\Bar\ox\M_{B}$, for $\overline{\O_{A}}\cap\Bar\O_{B}=\emptyset$, admits normal product states extending $\omega_{A}$ and $\omega_{B}$ because $\omega$ will not be a pure bipartite state in this case.
\\

Finally, we include the result that no proper commuting operator correlations (see \cref{sec:Tsirelsons_prob}) between two spacelike separated systems can occur in a quantum field theory with hyperfinite local von Neumann algebras. A von Neumann algebra $\M$ is hyperfinite if it contains an increasing net of finite-dimensional *-algebras $\M_{\gamma}$ such that $\bigcup_{\gamma}\M_{\gamma}$ is $\sigma$-weakly dense.

\begin{prop}\label{thm:qft}
    Let $\O\mapsto \A(\O)$ be a local net of hyperfinite von Neumann algebras and let $\omega$ be a locally normal state\footnote{A state on the quasi-local algebra is called locally normal if its restriction to any of the local von Neumann algebras $\A(\O)$ is a normal state.} on the quasi-local algebra $\A = \Bar{\bigcup \A(\O)}$.

    Let $\O_A$ and $\O_B$ be two spacelike separated regions.
    If Alice and Bob choose their POVMs from their local observable algebras $\A(\O_A)$ and $\A(\O_B)$, only correlation functions $p(\alpha,\beta|i,j)$ of the correlation body $\C_{qa}$ can be obtained, i.e.\ can be realized approximately in finite-dimensions.
\end{prop}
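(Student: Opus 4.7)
The plan is to apply item~\ref{it:Cqa} of \cref{thm:tsirelsons_prob}, which asserts that $p\in\C_{qa}$ precisely when the bipartite state encoding $p$ factors through the minimal $C^*$-tensor product $\A_A\ox_{min}\A_B$. So it suffices to exhibit a state $\tilde\omega$ on $\A_A\ox_{min}\A_B$ such that $\tilde\omega(m_{i,\alpha}^A\ox m_{j,\beta}^B)=p(\alpha,\beta|i,j)$.

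First I would package the POVMs into unital $*$-homomorphisms $\pi_A\colon\A_A\to\A(\O_A)$ and $\pi_B\colon\A_B\to\A(\O_B)$ sending each universal generator $m_{i,\alpha}^A$, $m_{j,\beta}^B$ to its concrete counterpart. Spacelike separation of $\O_A$ and $\O_B$ implies $[\A(\O_A),\A(\O_B)]=0$ inside the quasi-local algebra $\A$, so by the universal property of the maximal tensor product (\cref{thm:univ_prop}) these commuting embeddings combine into a single $*$-homomorphism
\[
    \mu_{max}\colon\A(\O_A)\ox_{max}\A(\O_B)\longrightarrow\A,\qquad a\ox b\longmapsto ab.
\]

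The key step is the Connes--Effros--Lance circle of results: every hyperfinite von Neumann algebra $\M$ is semidiscrete, and a semidiscrete $\M$ is nuclear as a $C^*$-algebra in the sense that $\M\ox_{min}\N=\M\ox_{max}\N$ for every $C^*$-algebra $\N$. Applying this with $\M=\A(\O_A)$ and $\N=\A(\O_B)$ gives the identification
\[
    \A(\O_A)\ox_{min}\A(\O_B)=\A(\O_A)\ox_{max}\A(\O_B),
\]
so that $\mu_{max}$ descends to a $*$-homomorphism $\mu\colon\A(\O_A)\ox_{min}\A(\O_B)\to\A$. Functoriality of $\ox_{min}$ then supplies a $*$-homomorphism $\pi_A\ox\pi_B\colon\A_A\ox_{min}\A_B\to\A(\O_A)\ox_{min}\A(\O_B)$, and the composition
\[
    \tilde\omega:=\omega\circ\mu\circ(\pi_A\ox\pi_B)
\]
is by construction a state on $\A_A\ox_{min}\A_B$ that reproduces $p$; item~\ref{it:Cqa} of \cref{thm:tsirelsons_prob} then yields $p\in\C_{qa}$.

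The entire non-formal content of the argument is concentrated in the identification $\A(\O_A)\ox_{min}\A(\O_B)=\A(\O_A)\ox_{max}\A(\O_B)$; once this is in hand, everything else is bookkeeping with the universal properties of the $C^*$-tensor products. Local normality of $\omega$ does not enter the factorization argument itself but is the natural regularity assumption in the AQFT setting, ensuring that restrictions of $\omega$ to the local algebras extend to normal states on their weak closures.
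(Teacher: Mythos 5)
Your overall strategy---pull $\omega$ back to a state on $\A_A\ox_{min}\A_B$ and invoke item~(3) of \cref{thm:tsirelsons_prob}---is legitimate and genuinely different from the paper's proof, but the step carrying all the weight is false as stated. It is \emph{not} true that a semidiscrete (equivalently, injective or hyperfinite) von Neumann algebra $\M$ is nuclear as a $C^*$-algebra, i.e.\ that $\M\ox_{min}\N=\M\ox_{max}\N$ for every $C^*$-algebra $\N$. The standard counterexample is $\B(\H)$ for infinite-dimensional $\H$: it is hyperfinite and injective, yet not nuclear (Wassermann), and $\B(\H)\ox_{min}\B(\H)\ne\B(\H)\ox_{max}\B(\H)$ (Junge--Pisier). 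More to the point for this proposition, a von Neumann algebra is nuclear as a $C^*$-algebra only if it is essentially finite type~I, whereas the local algebras $\A(\O_A)$, $\A(\O_B)$ in the intended QFT setting are hyperfinite type~III$_1$ factors. So the identification $\A(\O_A)\ox_{min}\A(\O_B)=\A(\O_A)\ox_{max}\A(\O_B)$ on which your argument rests does not hold.

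The route can be repaired, because what Effros--Lance actually give you is weaker but sufficient: semidiscreteness of $\M\subset\B(\H)$ means precisely that the multiplication map $\M\odot\M'\to\B(\H)$ is continuous for the minimal norm, and since $\A(\O_B)\subset\A(\O_A)'$ and the minimal norm is injective, the \emph{particular} commuting representation $\A(\O_A)\odot\A(\O_B)\to\A$, $a\ox b\mapsto ab$, is min-continuous. That makes the $C^*$-algebra generated by $\A(\O_A)\cup\A(\O_B)$ a quotient of the minimal tensor product, so $\omega$ restricted to it factors through $\A(\O_A)\ox_{min}\A(\O_B)$, and the rest of your bookkeeping with $\pi_A\ox\pi_B$ and \cref{thm:tsirelsons_prob}.(3) goes through. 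You should also be more careful with the role of local normality: it is what guarantees that, in the representation where you apply semidiscreteness (e.g.\ the GNS representation of $\omega$), the images of the local algebras are still injective von Neumann algebras; it is not mere decoration. By contrast, the paper's proof avoids the Connes--Effros--Lance machinery entirely: it uses hyperfiniteness directly to approximate $\sqrt{M^j_{i,\alpha}}$ in the strong$^*$ topology by elements of finite-dimensional subalgebras (via Kaplansky density) and checks by hand that the resulting finite-dimensional correlation functions lie in $\C_q$ and converge to $p$, so that $p\in\Bar{\C_q}=\C_{qa}$. That argument is more elementary and self-contained, while your (corrected) argument trades this for a single appeal to a deep structural theorem.
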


Hyperfiniteness is a property that is expected to hold for general QFTs.
As stated above, hyperfiniteness follows if the local net $\O\mapsto\A(\O)$ satisfies the split property.
The above was already observed in \cite{scholz2008tsirelson}, but no explicit proof was given.
Since the details turn out to be slightly more involved than expected, we include a proof here:

\begin{proof}
    Let $\{M^j_{i,\alpha}\}_{\alpha=1}^k \subset \A(\O_j)$, $j=A,B$, be POVMs where $i=1,\ldots, n$ for two integers $n$ and $k$.
    We have to show that it is possible to approximate the correlation function 
    \begin{equation}
        p(\alpha,\beta|i,j)=\omega(M^A_{i,\alpha}M^B_{j,\beta}), \qquad \alpha,\beta\in\{1,\ldots,k\},\ \ i,j\in\{1,\ldots,n\}
    \end{equation}
    by correlation functions in $\C_q$ (see \cref{sec:Tsirelsons_prob}).

    First, note that we may assume that $\A$ and, hence, all $\A(\O)$ act on a Hilbert space $\H$ on which $\omega$ is implemented by a vector $\Omega\in\H$.
    Indeed if this is not the case, we consider the GNS representation $(\pi,\H,\Omega)$ of state $\omega$ on $\A$.
    Local normality has the consequence that the induced representations $\pi:\A(\O)\to\B(\H)$ are normal, implying that the von Neumann algebra $\pi(\A(\O))$ is hyperfinite for all $\O$.
    
    By the assumption of hyperfiniteness, there are directed sets $\Gamma_{j}$ and increasing nets of finite-dimensional subalgebras $\M_{\gamma_{j}}\subset\A(\O_j)$, $\gamma_{j}\in\Gamma_{j}$, such that $\bigcup_{\gamma_{j}}\M_{\gamma_{j}}\subset\A(\O_j)$ are $\sigma$-weakly dense (and, thus, also strong*-dense by Kaplansky's density theorem \cite[Thm.~II.4.8]{takesaki1}) for $j=A,B$.
    Pick nets of operators $F_{i,\alpha;\gamma_{j}}\in\M_{\gamma_{j}}$ with $\norm{F_{i,\alpha;\gamma_{j}}}\le 1$ which strong$^*$-converge to $\sqrt{M^j_{i,\alpha}}$.
    We define POVMs $M^j_{i,\alpha;\gamma_{j}}\approx (F_{i,\alpha;\gamma_{j}})^*F_{i,\alpha;\gamma_{j}}$ where "$\approx$" means that we allow for an error that strong$^*$-vanishes in the limit $\gamma_{j}\to\infty$ to ensure normalization, i.e.\ $\sum_\alpha M^j_{i,\alpha;\gamma_{j}}=\1$, for every $\gamma_{j}$.
    It can now be seen from the triangle inequality that $M^j_{i,\alpha;\gamma_{j}}\to M^j_{i,\alpha}$ in the strong operator topology.
    From this it follows that the resulting correlation functions $p_{\gamma_{A},\gamma_{B}}(\alpha,\beta|i,j)=\omega(M^A_{i,\alpha;\gamma_{A}}M^B_{j,\beta;\gamma_{B}})$ converge to $p(\alpha,\beta|i,j)$:
    \begin{align*}
        \abs{p(\alpha,\beta|i,j)-p_{\gamma_{A},\gamma_{B}}(\alpha,\beta|i,j)}
        & \le \abs{\ip{M^A_{i,\alpha}\Omega}{(M^B_{j,\beta}-M^B_{j,\beta;\gamma_{B}})\Omega}} + \abs{\ip{(M^A_{i,\alpha}-M^A_{i,\alpha;\gamma_{A}})\Omega}{M^B_{j,\beta;\gamma_{B}}\Omega}}\\
        &\le \norm{(M^A_{i,\alpha}-M^A_{i,\alpha;\gamma_{A}})\Omega} + \norm{(M^B_{j,\beta}-M^B_{j,\beta;\gamma_{B}})\Omega}\to0.
    \end{align*}
    Since, the algebras $\M_{\gamma_{A}}$ and $\M_{\gamma_{B}}$ are finite-dimensional and commute, we have $\M_{\gamma_{A}}\vee\M_{\gamma_{B}}\cong\M_{\gamma_{A}}\otimes\M_{\gamma_{B}}$ and $p_{\gamma_{A},\gamma_{B}} \in \C_q$.
\end{proof}

\section{Outlook on open problems}\label{sec:outlook}

We collect a list of open problems concerning the commuting operator framework.

\begin{labeledlist}{l}
    \item[\it Schmidt rank for mixed states.] 
        In finite-dimensional quantum mechanics, the Schmidt rank for a mixed state (often called the Schmidt number \cite{terhal2000schmidt}) $\rho$ is defined as $\min_{\{p_i,\psi_i\}} \max_i \SR(\psi_i)$ where the minimum is over all ensembles of pure states whose average state is $\rho$.
        Already for density operators on infinite-dimensional Hilbert spaces one needs to look at integrals over probability measures on the pure state space (infinite convex combinations are insufficient \cite{holevo2005separability}). 
        For non-separable $C^*$-algebras, $\A_A$ and $\A_B$, one runs into measure theoretic problems as the pure bipartite states are, in general, not a measurable subset of $\states_{max}$, the state space of the maximal $C^*$-tensor product.
        A good generalization of the Schmidt rank for mixed bipartite states should also generalize the hierarchy \eqref{eq:hierachy} to mixed states such that we have convex sets $\SR_{\le k}$ that interpolate between separable and tame bipartite states.
        Work on this is in progress.

    \item[\it A Schmidt rank-like invariant for wild states.]
        It would be nice to have a Schmidt rank-like quantity that can be finite even for wild bipartite pure states.
        In the tame case, a definition of the Schmidt rank as the dimension of the support projection of the reduced states acting on $\M_A$ or $\M_B$ makes sense and can readily be checked to be equivalent to the definitions we gave.
        This could be mimicked for wild states by using the dimension functions of the factors $\M_A$ and $\M_B$.
        However, the dimension function is only defined up to a constant factor in some cases. It is unclear where this normalization should come from and how one should choose compatible normalizations for $\M_A$ and $\M_B$.

    \item[\it Bell inequality violation of wild states.]
        The CHSH-Bell inequality asserts that in a local hidden variable theory, $\beta(\omega)\le 1$ holds for every bipartite state where 
        \begin{equation}\label{eq:beta}
            \beta(\omega) := \frac 12 \sup \omega(a_1b_1 + a_1b_2 + a_2b_2 -a_2b_2)
        \end{equation}
        with the supremum being over all observables $a_1,a_2\in\A_A$, $b_1,b_2\in\A_B$ so that $-1\le a_i\le 1$, $-1\le b_i \le 1$.
        It is not hard to see that $\beta(\omega)$ is a correlation invariant (in the sense of \cref{def:correlation_inv}) and that it may equivalently be computed using the von Neumann algebras $(\M_A,\M_B,\H_\omega,\Omega_\omega)$.
        Tsirelson's inequality $\beta(\omega)\le \sqrt2$ is valid in this setting, and we know that maximal violation, i.e.\ $\beta(\omega)=\sqrt2$, can occur for both tame and wild states. 
        An example for tame states is a maximally entangled two-qubit state, and an example in the class of wild states is given by the vacuum in certain QFTs where one has Haag-duality and the local factors are both hyperfinite type III$_1$ factors \cite{summers1987}.
        There are many open questions. For example: Do all wild pure bipartite states violate Bell's inequality?
        
    \item[\it Role of Haag-duality for correlations.]
        A bipartite pure state $\omega$ induces commuting factors $\M_j=\pi_\omega(\A_j)''$ on $\H_\omega$.
        We proved that $(\M_A,\M_B,\H_\omega)$ and, hence, also Haag-duality and the Jones-index $[\M_A\!:\!\M_B']$ are correlation invariants. What do these mean in terms of correlations?
        Roughly speaking, the Jones index measures the deviation from Haag duality, which still leaves us with the question of what Haag-duality means.
        A partial result in this direction is \cref{thm:Haag-duality_order_int} which shows that Haag-duality implies that any positive linear functional $\varphi$ on Bob's system with $0\le\varphi\le\omega_B$ can be approximated by conditioned states $\omega^a=\omega(a(\placeholder))$, $a\in [0,1]_{\A_A}$.
        An open problem is whether the converse holds, i.e.\ if this approximation property implies Haag duality.
        The idea that a failure of Haag-duality is related to $\A_A$ or $\A_B$ being too small is also supported by the following observation:
        Start with an irreducible subfactor inclusion and pick a bipartite system $\A_A,\A_B\subset \A$ with bipartite state $\omega$ as in \cref{thm:pure_states_and_subfactors}.
        Then one can enforce Haag-duality by enlarging one (or both) of the algebras $\A_A$ or $\A_B$ for which the GNS representation $\pi_\omega$ is faithful.

    \item[\it Schmidt rank for multi-partite states.]
        Multi-partite entanglement is a complicated subject already in the finite-dimensional case.
        The main reason is that there is no Schmidt decomposition: it is not possible to write every vector $\Psi\in\H_{1}\ox\ldots\H_{n}$ as $\sum_\alpha\lambda_\alpha \Phi\up1_\alpha\ox\ldots\ox\Phi\up n_\alpha$.
        However, the definition of the Schmidt rank through minimal compressions (i.e.\ \cref{it:op_split} in \cref{thm:schmidt_rank}) has a straightforward generalization. It assigns to a multi-partite pure state $\omega$ (a state on a multi-partite algebra) the smallest number $k$ so that there are unital completely positive maps $C_j:\A_j\to \MM_k$ and a vector $\Psi\in(\CC^{k})^{\ox n}$ that emulate $\omega$, i.e.\ $\omega(a_1\ldots a_n)=\ip\Psi{C_1(a_1)\ox\ldots\ox C_n(a_n)\Psi}$.
\end{labeledlist}
\null

\noindent Finally, we comment on a \emph{von Neumann algebraic version} of our results.
In some contexts, it makes sense to describe physical systems by von Neumann algebras $\A_A$ and $\A_B$, which imposes a regularity condition on states: Usually, the "physical" states of such systems states are only the normal states, i.e.\ states which are $\sigma$-weakly continuous.
The standard way to approach bipartite states in this setting would be to ask for a normal state on the von Neumann-tensor product $\M_A\Bar\ox\M_B$.
With such states, one can, however, not get proper commuting operator correlations (see \cref{sec:Tsirelsons_prob}).
We believe that the von Neumann algebraic analog of commuting operator framework correlations are \emph{locally normal states}, i.e.\ positive states on the algebraic tensor product $\M_A\ox\M_B$ whose marginals $\omega(\placeholder\ox\1_B)$ and  $\omega(\1_A\ox\placeholder)$ are both normal.
For such states, the mapping $(a,b)\mapsto \omega(a\ox b)$ is separately $\sigma$-weakly continuous. 
The idea of locally normal states has been around in AQFT for many years, where locality refers to space-time locality, but we are not aware of an abstract treatment.
It would be interesting to study the concept of bipartite algebras in the von Neumann algebraic setting.

\subsection*{Acknowledgements}

We thank Martin Pl\'avala for pointing out the factorization definition of the Schmidt rank of finite-dimensional systems.
We thank Thomas Cope, Marco Fanizza, Niklas Galke, Vern Paulsen, and Henrik Wilming for helpful discussions.

\appendix
\section{The Radon-Nikodym theorem for completely positive maps}\label{sec:appendix}
\setcounter{equation}{0}
\numberwithin{equation}{section}

We use the Radon-Nikodym theorem for completely positive maps on multiple occasions.
It seems to us that this result was first observed by Arveson in \cite[Thm.~1.4.2]{Arveson1969} (see \cite{Belavkin1986} for a generalization).
If $\A$ and $\B$ are $C^*$-algebras and $S,T:\A\to\B$ are completely positive maps, then we write $S\le_{cp}T$ if $T-S$ is completely positive and we denote by $[0,T]_{cp}$ the convex set of completely positive maps $S\le_{cp}T$.

\begin{lem}\label{thm:dominated_channels}
    Let $T : \A \to \B(\H)$ be a unital completely positive map. Let $(\pi,\tilde \H, V)$ be its minimal Stinespring dilation.
    There is a bijection between the order interval $[0,T]_{cp}$ and the unit interval $[0,\1]_{\pi(\A)'}$ of the von Neumann algebra $\pi(\A)'$ given by 
    \begin{equation}\label{eq:SQ_bij}
        S \leftrightarrow Q \iff S = V^* Q \pi(\placeholder) V.
    \end{equation}
    This bijection is affine and monotone.
\begin{proof}
    Recall that minimality of the Stinespring dilation means that $[\pi(\A)V\H]=\tilde\H$. 
    We use the shorthand $\Psi_{a,\psi} = \pi(a)V\psi$ with $a\in\A$, $\psi\in\H$, so that $\pi(a)\Psi_{a,\psi}= \Psi_{ab,\psi}$.
    It is clear that every $Q \in\pi(\A)'$ determines a linear map that is completely positive and dominated by $T$ if $0\le Q\le \1$.
    The idea for the converse is to define $Q$ from $S$ as the unique bounded operator whose matrix elements with respect to $\pi(\A)V\H$ are 
    \[
        \ip{\Psi_{a,\psi}}{Q\Psi_{b,\phi}} = \ip\psi{S(a^*b)\phi}.
        \] 
    Complete positivity of $S$ readily implies that this quadratic form is positive semi-definite.
    Now let $S'\ge S$, then the quadratic forms are ordered as $Q'\ge Q$ because
    \begin{align*}
        \ip[\Big]{\sum_i\Psi_{a_i,\psi_i}}{Q\sum_i\Psi_{a_i,\psi_i}}
        &= \sum_{ij} \ip{\psi_i}{S(a_i^*a_j)\psi_j} \\
        &\le \sum_{ij} \ip{\psi_i}{S'(a_i^*a_j)\psi_j}
        =\ip{\sum_i\Psi_{a_i,\psi_i}}{Q'\sum_i\Psi_{a_i,\psi_i}}.
    \end{align*}
    With $S'=T$ we get $0\le Q\le \1$ (as quadratic forms and, hence, as bounded operators).
    We now prove that $Q\in \pi(\A)'$. By linearity it suffices to check this on $\pi(\A)V\H$:
    \[
        \ip{\Psi_{a,\psi}}{\pi(b)Q\Psi_{c,\phi}} =\ip\psi{S((b^*a)^*c)\phi} = \ip{\psi}{S(a^*(bc))\phi} = \ip{\Psi_{a,\psi}}{Q\pi(b)\Psi_{c,\phi}}.
    \] 
    We check that \eqref{eq:SQ_bij} holds:
    \[
        \ip\psi{S(a)\phi} = \ip{\Psi_{1,\psi}}{Q\Psi_{a,\phi}} =\ip{ V\psi}{Q\pi(a)V\phi} = \ip\psi{V^*Q\pi(a)V\phi}.
    \] 
\end{proof}
\end{lem}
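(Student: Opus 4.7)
The plan is to exhibit the bijection by constructing the two maps $S\leftrightarrow Q$ explicitly and then verifying affinity and monotonicity. One direction is essentially formal: given $Q\in[0,\1]_{\pi(\A)'}$, define $S_Q(a)=V^*Q\pi(a)V$. Complete positivity follows from writing $[S_Q(a_i^*a_j)]_{ij} = \tilde V^*\,(Q\otimes\1)\,[\pi(a_i^*a_j)]_{ij}\,\tilde V$, where $\tilde V$ is the obvious amplification of $V$, and using that $Q\in\pi(\A)'$ commutes with each $\pi(a_i)$ so the middle factor is positive. The bound $S_Q\le_{cp}T$ is then immediate from $\1-Q\ge0$ (applied to the amplified expression), and linearity/monotonicity in $Q$ is transparent.

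The real content is the converse. Given $S\in[0,T]_{cp}$, I would define $Q$ on the dense set $\pi(\A)V\H\subset\tilde\H$ by the sesquilinear form
\[
q\bigl(\pi(a)V\psi,\,\pi(b)V\phi\bigr):=\ip{\psi}{S(a^*b)\phi}.
\]
The first task is well-definedness, which means showing that if $\sum_i\pi(a_i)V\psi_i=0$ then the form vanishes; equivalently, showing the associated quadratic form is bounded by a multiple of $\|\sum_i\pi(a_i)V\psi_i\|^2$. Both are delivered by the single inequality $0\le_{cp}S\le_{cp}T$ applied to the matrix $[a_i^*a_j]_{ij}$: complete positivity of $S$ gives positivity of the form, and the bound $S\le T$ together with the Stinespring identity $\ip{\psi_i}{T(a_i^*a_j)\psi_j}=\ip{\pi(a_i)V\psi_i}{\pi(a_j)V\psi_j}$ yields
\[
0\le \sum_{ij}\ip{\psi_i}{S(a_i^*a_j)\psi_j}\le \Bigl\|\sum_i\pi(a_i)V\psi_i\Bigr\|^2,
\]
so the form extends to a bounded operator $Q$ with $0\le Q\le\1$ by minimality of the Stinespring dilation.

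Next I would verify $Q\in\pi(\A)'$ by comparing matrix elements of $\pi(c)Q$ and $Q\pi(c)$ on the dense domain, which reduces to the identity $S((c^*a)^*b)=S(a^*(cb))$ that holds trivially. The formula $S(a)=V^*Q\pi(a)V$ then follows by setting $b=a$, $a=1$ in the defining form and using $V\psi=\pi(1)V\psi$. To close the bijection I check the two constructions are mutual inverses: starting from $Q$, the form associated to $S_Q$ is $\ip{\psi}{V^*\pi(a^*)Q\pi(b)V\phi}=\ip{\pi(a)V\psi}{Q\pi(b)V\phi}$, which is exactly $q$; and starting from $S$, the operator built from $S_Q$ agrees with $Q$ on the dense subspace by the same computation. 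Affinity of both directions is evident from the formulas, and monotonicity $S\le_{cp}S'\iff Q\le Q'$ follows from the form definition.

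The main obstacle is the well-definedness/boundedness step for the inverse map: one needs complete positivity of $S$ (not just positivity) to get a meaningful matrix inequality, and one needs the domination $S\le_{cp}T$ together with the Stinespring identity to turn positivity of the form into boundedness. Once that step is correctly set up, everything else is bookkeeping.
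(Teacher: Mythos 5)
Your proposal is correct and follows essentially the same route as the paper's proof: the inverse map is constructed from the sesquilinear form $q(\pi(a)V\psi,\pi(b)V\phi)=\ip{\psi}{S(a^*b)\phi}$, positivity comes from complete positivity of $S$, boundedness by $\1$ comes from $S\le_{cp}T$ together with the Stinespring identity, and membership in $\pi(\A)'$ is checked on the dense domain $\pi(\A)V\H$. Your additional remarks on well-definedness and on the two constructions being mutual inverses are details the paper leaves implicit, not a different argument.
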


Since states are unital completely positive maps $\A\to\B(\CC)$, one obtains the well-known Radon-Nikodym theorem for states as a special case:

\begin{cor}\label{thm:affine_bij}
    Let $\omega$ be a state on a $C^*$-algebra $\A$.
    There is a bijection between the order interval $[0,\omega]\subset\A^*$ and the unit interval $[0,\1]_{\pi_\omega(\A)'}$ of the von Neumann algebra $\pi_\omega(\A)'$ given by 
    \begin{equation}\label{eq:affine_bij}
    \nu \leftrightarrow Q \iff \nu = \ip{\Omega_\omega}{Q\pi_\omega(\placeholder)\Omega_\omega}.
    \end{equation}
    This bijection is affine and monotone.
\end{cor}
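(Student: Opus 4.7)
The corollary is a direct specialization of \cref{thm:dominated_channels} to the one-dimensional case, regarding a state $\omega:\A\to\CC$ as a unital completely positive map into $\B(\CC)\cong\CC$. Once this identification is made, the work lies entirely in the preceding lemma and all that remains is to recognize the Stinespring data as the GNS data.

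The first step is to pin down the minimal Stinespring dilation of $\omega$ in this setting. Any isometry $V:\CC\to\tilde\H$ is determined by the unit vector $\Omega:=V(1)\in\tilde\H$, and the Stinespring relation $\omega(a)=V^*\pi(a)V$ becomes
\begin{equation*}
    \omega(a)=\ip{\Omega}{\pi(a)\Omega}\qquad\forall a\in\A,
\end{equation*}
while minimality $[\pi(\A)V\CC]=\tilde\H$ becomes cyclicity of $\Omega$ for $\pi$. By uniqueness of the GNS representation up to unitary equivalence we may therefore take $(\pi,\tilde\H,V)=(\pi_\omega,\H_\omega,V_\omega)$ with $V_\omega(c)=c\,\Omega_\omega$; in particular $V_\omega^*\psi=\ip{\Omega_\omega}{\psi}$ for $\psi\in\H_\omega$.

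The second step is to observe that positive linear functionals $\A\to\CC$ are precisely the completely positive maps $\A\to\B(\CC)$, and the usual order $\nu\le\omega$ on $\A^*$ coincides with the cp-order $\nu\le_{cp}\omega$ (positivity and complete positivity are equivalent for scalar-valued maps). Applying \cref{thm:dominated_channels} with $T=\omega$ then yields an affine monotone bijection $[0,\omega]\ni\nu\leftrightarrow Q\in[0,\1]_{\pi_\omega(\A)'}$ specified by
\begin{equation*}
    \nu(a)=V_\omega^*\,Q\,\pi_\omega(a)\,V_\omega=\ip{\Omega_\omega}{Q\,\pi_\omega(a)\,\Omega_\omega},
\end{equation*}
which is the claimed formula (note that $Q\in\pi_\omega(\A)'$ commutes with $\pi_\omega(a)$, so the ordering of the two operators inside the inner product is immaterial).

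There is no genuine obstacle: the substantive content, namely constructing $Q$ from $\nu$ as a positive-semidefinite quadratic form on the dense subspace $\pi_\omega(\A)\Omega_\omega$ and verifying that it commutes with $\pi_\omega(\A)$, has already been carried out in the proof of \cref{thm:dominated_channels}. The remaining work is purely notational: rewriting the isometry $V_\omega$ through the cyclic vector $\Omega_\omega$ to obtain the inner-product form displayed in \eqref{eq:affine_bij}.
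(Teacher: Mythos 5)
Your proposal is correct and follows exactly the route the paper takes: the corollary is obtained by specializing \cref{thm:dominated_channels} to $T=\omega:\A\to\B(\CC)$, identifying the minimal Stinespring dilation with the GNS triple and the cp-order on scalar-valued maps with the usual order on $\A^*$. The paper states this specialization in one sentence; your write-up merely makes the identifications explicit, and all of them are accurate.
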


We will now show that this bijection is completely positive in both directions.
This is probably known but we were unable to locate it in the literature.
In fact, the span of $[0,\omega]$ naturally carries an operator system structure (with order unit $\omega$ and $^*$-operation and matrix order inherited from $\A^*$) so that the bijection \eqref{eq:affine_bij} extends to a unital complete order isomorphism (an isomorphism of operator systems).
We need this result in the context of finitely correlated states (see \cref{sec:spinchains}).

\begin{prop}\label{thm:ucoi}
    Let $\omega$ be a state on a $C^*$-algebra $\A$ and set $\V= \lin\,[0,\omega]\subset\A^*$.
    Equip $\V$ with the matrix order and $^*$-operation inherited from $\A^*$ 
    \begin{enumerate}[(1)]
        \item 
            $\omega$ is an Archimedean matrix order unit for $\V$ (see \cite[Ch.~13]{paulsen2002completely}). Therefore $\V$ is an operator system.
        \item 
            Denote by $\lambda:\pi_\omega(\A)'\to\V$ the linear extension of the bijection \eqref{eq:affine_bij}.
            Then $\lambda$ is a unital complete order isomorphism, i.e.\  $\lambda$ and $\lambda^{-1}$ are completely positive and $\lambda(\omega)=\1$.
    \end{enumerate}
\end{prop}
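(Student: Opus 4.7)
The plan is to prove (2) first and then deduce (1) by transport of structure. To set up, I would explicitly define $\lambda(Q)(a)=\ip{\Omega_\omega}{Q\pi_\omega(a)\Omega_\omega}$; this is a bounded linear map $\pi_\omega(\A)'\to\A^*$ with $\lambda(\1)=\omega$ and $\lambda(Q^*)=\lambda(Q)^*$ (direct calculation). Combining the Radon-Nikodym bijection $[0,\1]_{\pi_\omega(\A)'}\cong[0,\omega]$ from \cref{thm:affine_bij} with the real/imaginary and positive/negative decompositions available on each side shows the image of $\lambda$ equals $\V=\lin\,[0,\omega]$. Injectivity follows from cyclicity of $\Omega_\omega$: if $\lambda(Q)=0$ then $Q^*\Omega_\omega\perp\pi_\omega(\A)\Omega_\omega$, and since $Q^*\in\pi_\omega(\A)'$ one gets $Q^*\pi_\omega(a)\Omega_\omega=\pi_\omega(a)Q^*\Omega_\omega=0$ for all $a\in\A$, forcing $Q=0$.

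For complete positivity of $\lambda$, I would take $[Q_{ij}]\in\MM_n(\pi_\omega(\A)')_+$ and factor it as $R^*R$ with $R=[R_{ki}]\in\MM_n(\pi_\omega(\A)')$, so $Q_{ij}=\sum_k R_{ki}^*R_{kj}$. Define $T_k\colon\CC^n\to\H_\omega$ by $T_ke_j=R_{kj}\Omega_\omega$; then the map $\Phi\colon\A\to\MM_n$ with $\Phi(a)_{ij}=\lambda(Q_{ij})(a)$ equals $\sum_k T_k^*\pi_\omega(\,\cdot\,)T_k$ and is completely positive as a finite sum of compositions of completely positive maps. Under the standard identification of $\MM_n(\A^*)_+$ with completely positive maps $\A\to\MM_n$, this means $[\lambda(Q_{ij})]\in\MM_n(\V)_+$, so $\lambda$ is completely positive.

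For complete positivity of $\lambda^{-1}$, I would use the same identification in the reverse direction: positivity of $[\varphi_{ij}]\in\MM_n(\A^*)$ is equivalent to $\sum_{ij}\varphi_{ij}(c_i^*c_j)\ge 0$ for all $c_1,\ldots,c_n\in\A$. For $\varphi_{ij}=\lambda(Q_{ij})$, the commutation $Q_{ij}\in\pi_\omega(\A)'$ lets me rewrite
\[
\sum_{ij}\varphi_{ij}(c_i^*c_j)=\sum_{ij}\ip{\pi_\omega(c_i)\Omega_\omega}{Q_{ij}\pi_\omega(c_j)\Omega_\omega}.
\]
Since $\pi_\omega(\A)\Omega_\omega$ is dense in $\H_\omega$, non-negativity of this quadratic form for all tuples $(c_i)$ is equivalent to positivity of the block operator $[Q_{ij}]$ on $\H_\omega^{\oplus n}$, i.e.\ $[Q_{ij}]\in\MM_n(\pi_\omega(\A)')_+$.

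Assembling these steps yields a linear bijection $\lambda$ that is completely positive with completely positive inverse and sends $\1$ to $\omega$, hence is a unital complete order isomorphism between the C*-algebra $\pi_\omega(\A)'$ and $\V$; this establishes (2). Then (1) follows automatically: since $\1$ is an Archimedean matrix order unit for the C*-algebra $\pi_\omega(\A)'$, its image $\omega=\lambda(\1)$ inherits both the matrix order unit property and the Archimedean property, making $\V$ an operator system. The main subtlety I anticipate is the clean translation between positivity in $\MM_n(\A^*)$ and block-operator positivity on $\H_\omega^{\oplus n}$; the quadratic-form identity in the third paragraph is the bridge that avoids having to decompose a general positive matrix $[\varphi_{ij}]\in\MM_n(\V)$ into functionals individually dominated by $\omega$.
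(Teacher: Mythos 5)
Your argument is correct, and it reaches the same conclusion by a more hands-on route than the paper. The paper also reduces item (1) to item (2) exactly as you do, and it also passes through Choi's identification of $\MM_n(\A^*)_+$ with completely positive maps $\A\to\MM_n$; but from there it proceeds by recognizing $\lambda\ox\id_n$ (up to the Choi isomorphism) as the Radon--Nikodym bijection of \cref{thm:dominated_channels} applied to the amplified state $\omega^{(n)}=\omega(\placeholder)\1_n$, whose minimal Stinespring dilation is $(\1_n\ox\pi_\omega,\CC^n\ox\H_\omega,\1\ox\ket{\Omega_\omega})$, so that the positivity equivalence $[Q_{ij}]\ge0\iff[\lambda(Q_{ij})]\ge0$ is inherited wholesale from the already-proven lemma. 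You instead verify the two implications directly: the factorization $[Q_{ij}]=R^*R$ inside the C*-algebra $\MM_n(\pi_\omega(\A)')$ gives a Stinespring-type form $\sum_k T_k^*\pi_\omega(\placeholder)T_k$ for one direction, and the identity $\sum_{ij}\lambda(Q_{ij})(c_i^*c_j)=\sum_{ij}\ip{\pi_\omega(c_i)\Omega_\omega}{Q_{ij}\pi_\omega(c_j)\Omega_\omega}$ together with density of $\pi_\omega(\A)\Omega_\omega$ gives the other. In effect you have unfolded the paper's reduction and re-proved the relevant $n\times n$ instance of the Radon--Nikodym lemma by the same quadratic-form mechanism used in its proof; what your version buys is self-containedness (no appeal to the matrix-amplified dilation), at the cost of a slightly longer computation, and your quadratic-form bridge in fact yields both directions at once, making the $R^*R$ factorization step dispensable. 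Two small points worth making explicit if you write this up: the reduction of the positivity test for $[\varphi_{ij}]\in\MM_n(\A^*)$ to $n$-tuples $c_1,\ldots,c_n$ paired with the standard basis vectors (rather than arbitrary finite families paired with arbitrary vectors of $\CC^n$) requires the substitution $c_j=\sum_i\lambda_{ij}a_i$, and positivity of $[Q_{ij}]$ as a block operator on $\H_\omega^{\oplus n}$ agrees with positivity in the C*-subalgebra $\MM_n(\pi_\omega(\A)')$, which is what the statement needs.
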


\begin{proof}
    We only need to prove the second item since it implies the first one (because the identity is an Archimedean matrix order unit for $\pi_\omega(\A)'$). 
    We note here that the first item can also be proved directly and that this proof still holds if $\A$ is only assumed to be an operator system (the proof via the second item then fails because there is no Radon-Nikodym theorem in that case).

    Let $\S$ be matrix ordered $^*$-vector space  (e.g.\ $\S=\CC$, $\pi_\omega(\A)'$ or $\A^*$). 
    We identify $\MM_n\ox\S$ with $\MM_n(\S)$ ($n\times n$-matrices with values in $\S$) via $\sum \ketbra ij \ox x_{ij}= [x_{ij}]$ so that $X=\sum \bra i X\ket j \ketbra ij$.
    We set $\M=\pi_\omega(\A)'$ and let $n\in\NN$ be arbitrary.
    We need to show that $\id_n\ox\lambda:\MM_n\ox\M\to\MM_n\ox\V$ and its inverse preserve positivity where $\MM_n\ox\V$ is obtained with the positive cone inherited from the canonical matrix order of $\A^*$.
    To understand positivity in $\MM_n\ox\V$, we use Choi's Theorem \cite[Thm.~6.1]{paulsen2002completely}, which states that a matrix $[\varphi_{ij}]\in\MM_n\ox\A^*$ is positive if and only if the linear map $\Phi:\A\to\MM_n$ given by $a\mapsto [\varphi_{ij}(a)]$ is completely positive.
    Note that the Choi-isomorphism is, in particular, a linear bijection between $\MM_n\ox\A^*$ and linear maps $\A\to\MM_n$.
    It relates the matrix $\1_n\ox\omega$ to the unital completely positive map $\omega\up n:=\omega(\placeholder)\1_n:\A\to\MM_n$.
    The minimal Stinespring dilation of $\omega\up n$ is $(\pi,\H,V)=(\1_n\ox\pi_\omega, \CC^n\ox\H_\omega, \1\ox\ket{\Omega_\omega})$, where $\ket{\Omega_\omega}$ is the linear operator $\CC\ni z\mapsto z\Omega_\omega\in\H_\omega$ and $\bra{\Omega_\omega}=\ket{\Omega_\omega}^*:\H_\omega\ni\Psi\mapsto\ip{\Omega_\omega}\Psi\in\CC$.
    Denote by $\Lambda:\pi(\A)'=\MM_n\ox\M \to \lin\,[0,\omega\up n]_{cp}$ the linear extension of the bijection from \cref{thm:dominated_channels} which satisfies $\Lambda(\1_n\ox\1)=\omega\up n$.
    It follows that $\Lambda$ and $\lambda\ox\id_n$ are equal up the Choi-isomorphism:
    \begin{align*}
        \Lambda([Q_{ij}](a) 
        &= \sum_{ij} (\1\ox\bra{\Omega_\omega})(\ketbra ij \ox Q_{ij}\pi(a))(\1\ox\ket{\Omega_\omega})&&\\
        &= \sum_{ij}\ip{\Omega_\omega}{Q_{ij}\pi_\omega(a)\Omega_\omega}\, \ketbra ij 
        = [\lambda(Q_{ij})(a)], &&[Q_{ij}]\in\MM_n\ox\M.
    \end{align*}
    Therefore, the claim follows from combining Choi's theorem \cite[Thm.~6.1]{paulsen2002completely} and \cref{thm:dominated_channels}: 
    \[
        [Q_{ij}]\ge0 \iff \Lambda([Q_{ij}]) \text{ is completely positive} \iff (\lambda\ox\id_n)([Q_{ij}])\ge0.
    \] 

\end{proof}

\printbibliography 
\end{document}